\title{From Algebraic to coordinate Bethe ansatz for square ice}
\author{Silvere Gangloff (LIP, ENS Lyon)}
\newtheorem{proposition}{Proposition}
\newtheorem{notation}{Notation}
\newtheorem{definition}{Definition}
\newtheorem{theorem}{Theorem}
\newtheorem{lemma}{Lemma}
\newtheorem{computation}{Computation}
\newtheorem{fact}{Fact}
\newtheorem{remark}{Remark}
\newtheorem{property}{Property}
\newtheorem{example}{Example}
\renewcommand{\vec}[1]{\textbf{#1}}
\begin{document}
\maketitle

\begin{abstract}

In this text, we provide a detailed 
exposition of the Algebraic Bethe ansatz for 
square ice (or six vertex model), which 
allows the construction of candidate 
eigenvectors for the transfer matrices of 
this model. We also 
prove some formula of 
V.E. Korepin for these vectors, which leads
to an identification, up to a non-zero 
complex factor, with the 
vector obtained by coordinate Bethe ansatz.

\end{abstract}

\section{Introduction}

The model of ice was introduced 
as a model of statistical mechanics by L. Pauling~\cite{Pauling} in
a three-dimensional setting, and consists 
in the set of possible arrangements of 
dihydrogen monoxide molecules on an infinite grid.
He proposed a first very simple 
approximation of entropy for this model. 
This approximation was refined by further works [see for instance~\cite{Stillinger}], leading 
to an exact computation by E.H.Lieb~\cite{Lieb67}, 
using mainly H. Bethe method~\cite{Bethe} for the diagonalisation 
of the Hamiltonian of the XXZ model, the analytic 
work of C.N. Yang and C.P. Yang~\cite{YY66}~\cite{YY66II}, 
and the diagonalisation of the XY model 
by E.H. Lieb, T. Shultz and D. Mattis~\cite{LSM61}. This work led to an extensive 
study of what are now called exactly solved 
models. Notably, the work of R.Baxter 
led to the discovery of a rich algebraic structure 
underlying these models, and an algebraisation 
of Bethe method, called now the algebraic Bethe ansatz, that he applied for the eight-vertex model~\cite{baxter}, an extension of 
the six-vertex model - itself a representation 
of square ice.
However, the computation done by E.H.Lieb was not 
rigorous, in particular since it
relied on some unproven hypothesis, and 
some arguments left partial in the texts 
on which it relied.
Motivated by percolation theory, rigorous proofs 
of some parts of C.N. Yang and C.P. Yang statements were recently done by H. Duminil-Copin, M. Gagnebin, M. Harel, I. Manolescu and V. Tassion~\cite{Duminil-Copin}. They also proposed 
an exposition of Bethe method, called coordinate 
Bethe ansatz~\cite{Duminil-Copin.ansatz}. 
We then provided 
a complete proof, which involves an exposition 
of E.H.Lieb's arguments and some techniques 
developped later, as well as completion of several partial arguments, that 
square ice entropy is equal to 
$\frac{3}{2} \log_2 (4/3)$.
However, our aim was to extend the methods 
developped in this field to other models,  
called multidimensional subshifts of finite type, 
that lie outside of the scope of 
statistical physics, since they appear 
at the frontier between mathematics and computation theories. The proof 
that we presented makes use of 
the coordinate Bethe ansatz,
which is highly specific to square ice, 
as it relies on some structures of $\mathbb{Z}^2$ 
that appear naturally when one represents 
this models as a discrete curves model. 
On the other hand, the algebraic Bethe ansatz 
can be formulated for the broad class 
of nearest-neighbour multidimensional subshifts 
of finite type, as soon as one can find 
a solution of the so-called Yang-Baxter equation 
for this model. We expect that this method, applied to particular subshifts of finite type, 
would lead to new exactly solved models.
In the present text, we propose 
a complete exposition of the algebraic Bethe ansatz for square ice. Since the proof of square ice entropy 
relies on an expression of the candidate eigenvector 
provided by the coordinate Bethe ansatz, we prove 
a formula of V.E. Korepin which leads to an identification, 
up to a non-zero complex factor, to the vectors 
obtained by both ansatz. We also provide a proof 
of the commutation of the transfer matrices of the model with 
some Heisenberg Hamiltonian which relies on development of 
the algebraic Bethe 
ansatz.
In the perspective 
of a generalisation to other models, this derivations 
would help us deriving all necessary properties 
of the vector only from the algebraic Bethe ansatz. 
Although these 
results are not completely new, since the literature on 
the subject is dispersed and does not provide 
complete proofs, we expect that 
this exposition will be useful for a broad 
audience. \bigskip

The text is organized as follows: in Section~\ref{section.background}, we present square ice and its 
representations [Section~\ref{section.square.ice.representations}],
define Lieb transfer matrices [Section~\ref{section.lieb.path}], and state 
the coordinate Bethe ansatz [Section~\ref{section.coordinate.bethe.ansatz}]. 
In Section~\ref{section.overview}, 
we present an overview of the remaining 
of the text.

\section{\label{section.background} Background}

\subsection{\label{section.square.ice.representations} Representations of square ice}

The most widely used representation of 
square ice is the six vertex model 
(whose name derives from that the elements 
of the alphabet represent vertices of a regular 
grid) and is presented in Section~\ref{section.six.vertex}. In this text, we 
will use another representation, presented in 
Section~\ref{section.discrete.curves}, 
whose configurations consist of drifting 
discrete curves, representing possible 
particle trajectories.

\subsubsection{\label{section.six.vertex} The six vertex model}

The \textbf{six vertex model} is the the 
set of elements of $\mathcal{A}_0^{\mathbb{Z}^2}$, 
where 

\[\mathcal{A}_0 = \left\{ \begin{tikzpicture}[scale=0.6,baseline=2mm]
\draw (0,0) rectangle (1,1);
\draw[-latex,line width=0.2mm] (0.5,0) -- (0.5,0.5);
\draw[-latex,line width=0.2mm] (0.5,0.5) -- (0.5,1); 
\draw[-latex,line width=0.2mm] (0,0.5) -- (0.5,0.5);
\draw[-latex,line width=0.2mm] (0.5,0.5) -- (1,0.5); 
\end{tikzpicture}, \begin{tikzpicture}[scale=0.6,baseline=2mm]
\draw (0,0) rectangle (1,1);
\draw[latex-,line width=0.2mm] (0.5,0) -- (0.5,0.5);
\draw[latex-,line width=0.2mm] (0.5,0.5) -- (0.5,1); 
\draw[-latex,line width=0.2mm] (0,0.5) -- (0.5,0.5);
\draw[-latex,line width=0.2mm] (0.5,0.5) -- (1,0.5); 
\end{tikzpicture}, \begin{tikzpicture}[scale=0.6,baseline=2mm]
\draw (0,0) rectangle (1,1);
\draw[latex-,line width=0.2mm] (0.5,0) -- (0.5,0.5);
\draw[-latex,line width=0.2mm] (0.5,0.5) -- (0.5,1); 
\draw[-latex,line width=0.2mm] (0,0.5) -- (0.5,0.5);
\draw[latex-,line width=0.2mm] (0.5,0.5) -- (1,0.5); 
\end{tikzpicture}, \begin{tikzpicture}[scale=0.6,baseline=2mm]
\draw (0,0) rectangle (1,1);
\draw[-latex,line width=0.2mm] (0.5,0) -- (0.5,0.5);
\draw[latex-,line width=0.2mm] (0.5,0.5) -- (0.5,1); 
\draw[latex-,line width=0.2mm] (0,0.5) -- (0.5,0.5);
\draw[-latex,line width=0.2mm] (0.5,0.5) -- (1,0.5); 
\end{tikzpicture}, \begin{tikzpicture}[scale=0.6,baseline=2mm]
\draw (0,0) rectangle (1,1);
\draw[latex-,line width=0.2mm] (0.5,0) -- (0.5,0.5);
\draw[latex-,line width=0.2mm] (0.5,0.5) -- (0.5,1); 
\draw[latex-,line width=0.2mm] (0,0.5) -- (0.5,0.5);
\draw[latex-,line width=0.2mm] (0.5,0.5) -- (1,0.5); 
\end{tikzpicture}, 
\begin{tikzpicture}[scale=0.6,baseline=2mm]
\draw (0,0) rectangle (1,1);
\draw[-latex,line width=0.2mm] (0.5,0) -- (0.5,0.5);
\draw[-latex,line width=0.2mm] (0.5,0.5) -- (0.5,1); 
\draw[latex-,line width=0.2mm] (0,0.5) -- (0.5,0.5);
\draw[latex-,line width=0.2mm] (0.5,0.5) -- (1,0.5); 
\end{tikzpicture}\right\},\]
such that for any two adjacent positions in 
$\mathbb{Z}^2$, the arrows corresponding 
to the common edge of the symbols on the 
two positions have to be directed the same way: 
for instance, the pattern $\begin{tikzpicture}[scale=0.6,baseline=2mm]
\begin{scope}
\draw (0,0) rectangle (1,1);
\draw[-latex,line width=0.2mm] (0.5,0) -- (0.5,0.5);
\draw[-latex,line width=0.2mm] (0.5,0.5) -- (0.5,1); 
\draw[-latex,line width=0.2mm] (0,0.5) -- (0.5,0.5);
\draw[-latex,line width=0.2mm] (0.5,0.5) -- (1,0.5); 
\end{scope} 

\begin{scope}[xshift=1cm]
\draw (0,0) rectangle (1,1);
\draw[latex-,line width=0.2mm] (0.5,0) -- (0.5,0.5);
\draw[latex-,line width=0.2mm] (0.5,0.5) -- (0.5,1); 
\draw[-latex,line width=0.2mm] (0,0.5) -- (0.5,0.5);
\draw[-latex,line width=0.2mm] (0.5,0.5) -- (1,0.5); 
\end{scope}
\end{tikzpicture}$ is allowed, 
while $\begin{tikzpicture}[scale=0.6,baseline=2mm]
\begin{scope}
\draw (0,0) rectangle (1,1);
\draw[-latex,line width=0.2mm] (0.5,0) -- (0.5,0.5);
\draw[-latex,line width=0.2mm] (0.5,0.5) -- (0.5,1); 
\draw[-latex,line width=0.2mm] (0,0.5) -- (0.5,0.5);
\draw[-latex,line width=0.2mm] (0.5,0.5) -- (1,0.5); 
\end{scope} 

\begin{scope}[xshift=1cm]
\draw (0,0) rectangle (1,1);
\draw[-latex,line width=0.2mm] (0.5,0) -- (0.5,0.5);
\draw[latex-,line width=0.2mm] (0.5,0.5) -- (0.5,1); 
\draw[latex-,line width=0.2mm] (0,0.5) -- (0.5,0.5);
\draw[-latex,line width=0.2mm] (0.5,0.5) -- (1,0.5); 
\end{scope}
\end{tikzpicture}$ is not. \bigskip

In an element of the model, 
the symbols draw a grid whose 
edges are oriented in such a way 
that all the vertices have two incoming 
arrows and two outgoing ones. This 
is called an Eulerian orientation 
of the square lattice. See an illustration
on Figure~\ref{figure.six.vertex}.

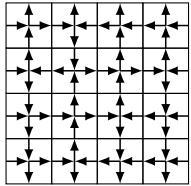
\begin{figure}[h!]
\[\begin{tikzpicture}[scale=0.6]

\begin{scope}
\draw (0,0) rectangle (1,1);
\draw[latex-,line width=0.2mm] (0.5,0) -- (0.5,0.5);
\draw[latex-,line width=0.2mm] (0.5,0.5) -- (0.5,1); 
\draw[-latex,line width=0.2mm] (0,0.5) -- (0.5,0.5);
\draw[-latex,line width=0.2mm] (0.5,0.5) -- (1,0.5); 
\end{scope} 

\begin{scope}[xshift=1cm]
\draw (0,0) rectangle (1,1);
\draw[latex-,line width=0.2mm] (0.5,0) -- (0.5,0.5);
\draw[-latex,line width=0.2mm] (0.5,0.5) -- (0.5,1); 
\draw[-latex,line width=0.2mm] (0,0.5) -- (0.5,0.5);
\draw[latex-,line width=0.2mm] (0.5,0.5) -- (1,0.5); 
\end{scope}

\begin{scope}[xshift=2cm]
\draw (0,0) rectangle (1,1);
\draw[latex-,line width=0.2mm] (0.5,0) -- (0.5,0.5);
\draw[latex-,line width=0.2mm] (0.5,0.5) -- (0.5,1); 
\draw[latex-,line width=0.2mm] (0,0.5) -- (0.5,0.5);
\draw[latex-,line width=0.2mm] (0.5,0.5) -- (1,0.5); 
\end{scope}

\begin{scope}[xshift=3cm]
\draw (0,0) rectangle (1,1);
\draw[latex-,line width=0.2mm] (0.5,0) -- (0.5,0.5);
\draw[latex-,line width=0.2mm] (0.5,0.5) -- (0.5,1); 
\draw[latex-,line width=0.2mm] (0,0.5) -- (0.5,0.5);
\draw[latex-,line width=0.2mm] (0.5,0.5) -- (1,0.5); 
\end{scope}

\begin{scope}[yshift=1cm]
\draw (0,0) rectangle (1,1);
\draw[latex-,line width=0.2mm] (0.5,0) -- (0.5,0.5);
\draw[latex-,line width=0.2mm] (0.5,0.5) -- (0.5,1); 
\draw[-latex,line width=0.2mm] (0,0.5) -- (0.5,0.5);
\draw[-latex,line width=0.2mm] (0.5,0.5) -- (1,0.5); 
\end{scope} 

\begin{scope}[xshift=1cm,yshift=1cm]
\draw (0,0) rectangle (1,1);
\draw[-latex,line width=0.2mm] (0.5,0) -- (0.5,0.5);
\draw[-latex,line width=0.2mm] (0.5,0.5) -- (0.5,1); 
\draw[-latex,line width=0.2mm] (0,0.5) -- (0.5,0.5);
\draw[-latex,line width=0.2mm] (0.5,0.5) -- (1,0.5); 
\end{scope}

\begin{scope}[xshift=2cm,yshift=1cm]
\draw (0,0) rectangle (1,1);
\draw[latex-,line width=0.2mm] (0.5,0) -- (0.5,0.5);
\draw[-latex,line width=0.2mm] (0.5,0.5) -- (0.5,1); 
\draw[-latex,line width=0.2mm] (0,0.5) -- (0.5,0.5);
\draw[latex-,line width=0.2mm] (0.5,0.5) -- (1,0.5); 
\end{scope}

\begin{scope}[xshift=3cm,yshift=1cm]
\draw (0,0) rectangle (1,1);
\draw[latex-,line width=0.2mm] (0.5,0) -- (0.5,0.5);
\draw[latex-,line width=0.2mm] (0.5,0.5) -- (0.5,1); 
\draw[latex-,line width=0.2mm] (0,0.5) -- (0.5,0.5);
\draw[latex-,line width=0.2mm] (0.5,0.5) -- (1,0.5); 
\end{scope}

\begin{scope}[yshift=2cm]
\draw (0,0) rectangle (1,1);
\draw[latex-,line width=0.2mm] (0.5,0) -- (0.5,0.5);
\draw[-latex,line width=0.2mm] (0.5,0.5) -- (0.5,1); 
\draw[-latex,line width=0.2mm] (0,0.5) -- (0.5,0.5);
\draw[latex-,line width=0.2mm] (0.5,0.5) -- (1,0.5); 
\end{scope}

\begin{scope}[xshift=1cm,yshift=2cm]
\draw (0,0) rectangle (1,1);
\draw[-latex,line width=0.2mm] (0.5,0) -- (0.5,0.5);
\draw[latex-,line width=0.2mm] (0.5,0.5) -- (0.5,1); 
\draw[latex-,line width=0.2mm] (0,0.5) -- (0.5,0.5);
\draw[-latex,line width=0.2mm] (0.5,0.5) -- (1,0.5); 
\end{scope}

\begin{scope}[xshift=2cm,yshift=2cm]
\draw (0,0) rectangle (1,1);
\draw[-latex,line width=0.2mm] (0.5,0) -- (0.5,0.5);
\draw[-latex,line width=0.2mm] (0.5,0.5) -- (0.5,1); 
\draw[-latex,line width=0.2mm] (0,0.5) -- (0.5,0.5);
\draw[-latex,line width=0.2mm] (0.5,0.5) -- (1,0.5); 
\end{scope}

\begin{scope}[xshift=3cm,yshift=2cm]
\draw (0,0) rectangle (1,1);
\draw[latex-,line width=0.2mm] (0.5,0) -- (0.5,0.5);
\draw[-latex,line width=0.2mm] (0.5,0.5) -- (0.5,1); 
\draw[-latex,line width=0.2mm] (0,0.5) -- (0.5,0.5);
\draw[latex-,line width=0.2mm] (0.5,0.5) -- (1,0.5); 
\end{scope}

\begin{scope}[xshift=0cm,yshift=3cm]
\draw (0,0) rectangle (1,1);
\draw[-latex,line width=0.2mm] (0.5,0) -- (0.5,0.5);
\draw[-latex,line width=0.2mm] (0.5,0.5) -- (0.5,1); 
\draw[-latex,line width=0.2mm] (0,0.5) -- (0.5,0.5);
\draw[-latex,line width=0.2mm] (0.5,0.5) -- (1,0.5); 
\end{scope}

\begin{scope}[xshift=1cm,yshift=3cm]
\draw (0,0) rectangle (1,1);
\draw[latex-,line width=0.2mm] (0.5,0) -- (0.5,0.5);
\draw[-latex,line width=0.2mm] (0.5,0.5) -- (0.5,1); 
\draw[-latex,line width=0.2mm] (0,0.5) -- (0.5,0.5);
\draw[latex-,line width=0.2mm] (0.5,0.5) -- (1,0.5); 
\end{scope}

\begin{scope}[xshift=2cm,yshift=3cm]
\draw (0,0) rectangle (1,1);
\draw[-latex,line width=0.2mm] (0.5,0) -- (0.5,0.5);
\draw[-latex,line width=0.2mm] (0.5,0.5) -- (0.5,1); 
\draw[latex-,line width=0.2mm] (0,0.5) -- (0.5,0.5);
\draw[latex-,line width=0.2mm] (0.5,0.5) -- (1,0.5); 
\end{scope}

\begin{scope}[xshift=3cm,yshift=3cm]
\draw (0,0) rectangle (1,1);
\draw[-latex,line width=0.2mm] (0.5,0) -- (0.5,0.5);
\draw[-latex,line width=0.2mm] (0.5,0.5) -- (0.5,1); 
\draw[latex-,line width=0.2mm] (0,0.5) -- (0.5,0.5);
\draw[latex-,line width=0.2mm] (0.5,0.5) -- (1,0.5); 
\end{scope}

\end{tikzpicture}\]
\caption{\label{figure.six.vertex} An example of pattern that appears in an element of the six vertex model.}
\end{figure}

\subsubsection{\label{section.discrete.curves} Drifting discrete curves}

From the six vertex model, we derive another 
representation of square ice
through the application of the following invertible
transformation on the elements of $\mathcal{A}_0$:

\[\begin{tikzpicture}[scale=0.6]
\begin{scope}
\draw (0,0) rectangle (1,1);
\draw[-latex,line width=0.2mm] (0.5,0) -- (0.5,0.5);
\draw[-latex,line width=0.2mm] (0.5,0.5) -- (0.5,1); 
\draw[-latex,line width=0.2mm] (0,0.5) -- (0.5,0.5);
\draw[-latex,line width=0.2mm] (0.5,0.5) -- (1,0.5);
\end{scope}

\begin{scope}[xshift=1.5cm]
\draw (0,0) rectangle (1,1);
\draw[latex-,line width=0.2mm] (0.5,0) -- (0.5,0.5);
\draw[latex-,line width=0.2mm] (0.5,0.5) -- (0.5,1); 
\draw[-latex,line width=0.2mm] (0,0.5) -- (0.5,0.5);
\draw[-latex,line width=0.2mm] (0.5,0.5) -- (1,0.5); 
\end{scope}

\begin{scope}[xshift=3cm]
\draw (0,0) rectangle (1,1);
\draw[latex-,line width=0.2mm] (0.5,0) -- (0.5,0.5);
\draw[-latex,line width=0.2mm] (0.5,0.5) -- (0.5,1); 
\draw[-latex,line width=0.2mm] (0,0.5) -- (0.5,0.5);
\draw[latex-,line width=0.2mm] (0.5,0.5) -- (1,0.5); 
\end{scope}

\begin{scope}[xshift=4.5cm]
\draw (0,0) rectangle (1,1);
\draw[-latex,line width=0.2mm] (0.5,0) -- (0.5,0.5);
\draw[latex-,line width=0.2mm] (0.5,0.5) -- (0.5,1); 
\draw[latex-,line width=0.2mm] (0,0.5) -- (0.5,0.5);
\draw[-latex,line width=0.2mm] (0.5,0.5) -- (1,0.5); 
\end{scope} 

\begin{scope}[xshift=6cm]
\draw (0,0) rectangle (1,1);
\draw[latex-,line width=0.2mm] (0.5,0) -- (0.5,0.5);
\draw[latex-,line width=0.2mm] (0.5,0.5) -- (0.5,1); 
\draw[latex-,line width=0.2mm] (0,0.5) -- (0.5,0.5);
\draw[latex-,line width=0.2mm] (0.5,0.5) -- (1,0.5); 
\end{scope}

\begin{scope}[xshift=7.5cm]
\draw (0,0) rectangle (1,1);
\draw[-latex,line width=0.2mm] (0.5,0) -- (0.5,0.5);
\draw[-latex,line width=0.2mm] (0.5,0.5) -- (0.5,1); 
\draw[latex-,line width=0.2mm] (0,0.5) -- (0.5,0.5);
\draw[latex-,line width=0.2mm] (0.5,0.5) -- (1,0.5); 
\end{scope}

\begin{scope}[yshift=-1.5cm]
\draw (0,0) rectangle (1,1);
\end{scope}

\begin{scope}[yshift=-1.5cm,xshift=1.5cm]
\draw[line width = 0.5mm,color=gray!80]
(0.5,0) -- (0.5,1);
\draw (0,0) rectangle (1,1);
\end{scope}

\begin{scope}[yshift=-1.5cm,xshift=3cm]
\draw[line width = 0.5mm,color=gray!80]
(0.5,0) -- (0.5,0.5) -- (1,0.5);
\draw (0,0) rectangle (1,1);
\end{scope}

\begin{scope}[yshift=-1.5cm,xshift=4.5cm]
\draw[line width = 0.5mm,color=gray!80]
(0,0.5) -- (0.5,0.5) -- (0.5,1);
\draw (0,0) rectangle (1,1);
\end{scope}

\begin{scope}[yshift=-1.5cm,xshift=6cm]
\draw[line width = 0.5mm,color=gray!80]
(0,0.55) -- (0.45,0.55) -- (0.45,1);
\draw (0,0) rectangle (1,1);
\draw[line width = 0.5mm,color=gray!80] (0.55,0) -- (0.55,0.45) -- (1,0.45);
\end{scope}

\begin{scope}[yshift=-1.5cm,xshift=7.5cm]
\draw[line width = 0.5mm,color=gray!80]
(0,0.5) -- (1,0.5);
\draw (0,0) rectangle (1,1);
\end{scope}
\end{tikzpicture}\]

The set of image symbols is denoted $\mathcal{A}_1$. 
The discrete drifting curves model 
is the subset, denoted $X^s$, of $\mathcal{A}_1^{\mathbb{Z}^2}$ 
that are obtained by the previous transformation 
of symbols.
The pattern of the six vertex model on Figure~\ref{figure.six.vertex} can 
be represented in the discrete drifting curves 
model as on Figure~\ref{figure.six.vertex.curves}. 

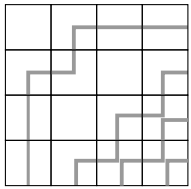
\begin{figure}[h!]
\[\begin{tikzpicture}[scale=0.6]
\begin{scope}
\draw[color=gray!80,line width=0.4mm] (0.5,0) -- 
(0.5,1);
\draw (0,0) rectangle (1,1);
\end{scope} 

\begin{scope}[xshift=1cm]
\draw[line width = 0.5mm,color=gray!80]
(0.55,0) -- (0.55,0.55) -- (1,0.55);
\draw (0,0) rectangle (1,1);
\end{scope}

\begin{scope}[xshift=2cm]
\draw[line width = 0.5mm,color=gray!80]
(0,0.55) -- (0.45,0.55) -- (0.45,1);
\draw (0,0) rectangle (1,1);
\draw[line width = 0.5mm,color=gray!80] (0.55,0) -- (0.55,0.55) -- (1,0.55);
\end{scope}
\begin{scope}[xshift=3cm]
\draw[line width = 0.5mm,color=gray!80]
(0,0.55) -- (0.45,0.55) -- (0.45,1);
\draw (0,0) rectangle (1,1);
\draw[line width = 0.5mm,color=gray!80] (0.55,0) -- (0.55,0.55) -- (1,0.55);
\end{scope}

\begin{scope}[yshift=1cm]
\draw[color=gray!80,line width=0.4mm] (0.5,0) -- 
(0.5,1);
\draw (0,0) rectangle (1,1);
\end{scope} 

\begin{scope}[xshift=1cm,yshift=1cm]
\draw (0,0) rectangle (1,1);
\end{scope}

\begin{scope}[xshift=2cm,yshift=1cm]
\draw[line width = 0.5mm,color=gray!80]
(0.45,0) -- (0.45,0.55) -- (1,0.55);
\draw (0,0) rectangle (1,1);
\end{scope}

\begin{scope}[xshift=3cm,yshift=1cm]
\draw[line width = 0.5mm,color=gray!80]
(0,0.55) -- (0.45,0.55) -- (0.45,1);
\draw (0,0) rectangle (1,1);
\draw[line width = 0.5mm,color=gray!80] (0.45,0) -- (0.45,0.45) -- (1,0.45);
\end{scope}

\begin{scope}[xshift=0cm,yshift=2cm]
\draw[line width = 0.5mm,color=gray!80]
(0.5,0) -- (0.5,0.5) -- (1,0.5);
\draw (0,0) rectangle (1,1);
\end{scope}

\begin{scope}[xshift=1cm,yshift=2cm]
\draw[line width = 0.5mm,color=gray!80]
(0,0.5) -- (0.5,0.5) -- (0.5,1);
\draw (0,0) rectangle (1,1);
\end{scope}

\begin{scope}[xshift=2cm,yshift=2cm]
\draw (0,0) rectangle (1,1);
\end{scope}

\begin{scope}[xshift=3cm,yshift=2cm]
\draw[line width = 0.5mm,color=gray!80]
(0.45,0) -- (0.45,0.5) -- (1,0.5);
\draw (0,0) rectangle (1,1);
\end{scope}

\begin{scope}[xshift=0cm,yshift=3cm]
\draw (0,0) rectangle (1,1);
\end{scope}

\begin{scope}[xshift=1cm,yshift=3cm]
\draw[line width = 0.5mm,color=gray!80]
(0.5,0) -- (0.5,0.5) -- (1,0.5);
\draw (0,0) rectangle (1,1);
\end{scope}

\begin{scope}[xshift=2cm,yshift=3cm]
\draw[line width = 0.5mm,color=gray!80]
(0,0.5) -- (1,0.5);
\draw (0,0) rectangle (1,1);
\end{scope}

\begin{scope}[xshift=3cm,yshift=3cm]
\draw[line width = 0.5mm,color=gray!80]
(0,0.5) -- (1,0.5);
\draw (0,0) rectangle (1,1);
\end{scope}

\end{tikzpicture}\]
\caption{\label{figure.six.vertex.curves} 
Representation of pattern on Figure~\ref{figure.six.vertex}.}
\end{figure}

\subsection{\label{section.lieb.path} Lieb path of transfer matrices}

A pattern of $X^s$ is an element 
of $\mathcal{A}_1^{\mathbb{U}}$ for some 
finite subset $\mathbb{U}$ of $\mathbb{Z}^2$ 
which appears in an element of $X^s$.
A $(N,1)$-cylindric pattern of $X^s$ is a 
pattern in $\mathcal{A}_1^{\{1,...,N\}\times \{0\}}$ that can be wrapped on the cylinder 
$\mathcal{A}_1^{\mathbb{Z}/n\mathbb{Z}}$ 
without breaking the rules of $X^s$. Let us 
denote $\{0,1\}^N_{*}$ the set of length $N$ 
words on the alphabet $\{0,1\}$.
For a word $\boldsymbol{\epsilon} \in \{0,1\}^N_{*}$, 
we denote $|\boldsymbol{\epsilon}|_1$ the number 
of integers $k \in \{1,...,N\}$ such that $\boldsymbol{\epsilon}_k = 1$.

\begin{notation}
Let $N \ge 1$ be an integer. Let 
us denote $\Omega_N$ the space $\mathbb{C}^2 \bigotimes ... \bigotimes \mathbb{C}^2$, 
tensor product of $N$ copies 
of $\mathbb{C}^2$, whose canonical basis elements 
are denoted indifferently by $\boldsymbol{\epsilon}= \ket{\boldsymbol{\epsilon}_1 ... \boldsymbol{\epsilon}_N}$ or the words 
$\boldsymbol{\epsilon}_1 ... \boldsymbol{\epsilon}_N$,
for $(\boldsymbol{\epsilon}_1, ..., \boldsymbol{\epsilon}_N) \in \{0,1\}^N$, 
according to quantum mechanics 
notations, in order to distinguish them from 
the coordinate definition of 
vectors of $\Omega_N$. We also denote by $\Omega_N^{(n)}$ 
the vector space generated by the elements $\boldsymbol{\epsilon}$ of 
the canonical basis such that $|\boldsymbol{\epsilon}|_1 = n$.
\end{notation}

\begin{notation}
For any $\boldsymbol{\epsilon}$ in the canonical 
basis of $\Omega_N$ such that 
$|\boldsymbol{\epsilon}|_1 = n$, 
we denote by $q_k [\boldsymbol{\epsilon}]$ the $k$th 
position $j$ in $\{1,...,N\}$ such that $\boldsymbol{\epsilon}_j=1$.
\end{notation}

\begin{notation}
For all $N$ and $(N,1)$-cylindric
pattern $w$, let $|w|$ denote the number 
of symbols 
\[\begin{tikzpicture}[scale=0.6]
\begin{scope}[yshift=-1.5cm,xshift=3cm]
\draw[line width = 0.5mm,color=gray!80]
(0.5,0) -- (0.5,0.5) -- (1,0.5);
\draw (0,0) rectangle (1,1);
\end{scope}
\end{tikzpicture}, \begin{tikzpicture}[scale=0.6]
\begin{scope}[yshift=-1.5cm,xshift=3cm]
\draw[line width = 0.5mm,color=gray!80]
(0,0.5) -- (0.5,0.5) -- (0.5,1);
\draw (0,0) rectangle (1,1);
\end{scope}
\end{tikzpicture}\]
in this pattern. Such a pattern is said to link some 
$\boldsymbol{\epsilon} \in \{0,1\}^N_*$ to 
$\boldsymbol{\eta} \in \{0,1\}^N_*$ when there is a 
curbe entering through the south (resp. outgoing
through the north side) of $w$ at position $k$ 
if and only if $\boldsymbol{\epsilon}_k =1$ (resp. $\boldsymbol{\epsilon}_k =0$). This relation is denoted 
$\boldsymbol{\epsilon} \mathcal{R}[w] 
\boldsymbol{\eta}$. 
\end{notation}

\begin{definition}
For all $t \ge 0$, $V_N (t) \in \mathcal{M}_{2^N} (\mathbb{C})$ denotes
the matrix such that for all $\boldsymbol{\epsilon},\boldsymbol{\eta} \in \{0,1\}^{*}_N$, 

\[V_N (t) [\boldsymbol{\epsilon},\boldsymbol{\eta}] =\displaystyle{\sum_{\boldsymbol{\epsilon} \mathcal{R}[w] \boldsymbol{\eta}}} t^{|w|}\]
\end{definition}

Let us notice that for all $N$, the map $t \mapsto V_N (t)$ 
is analytic, which we called Lieb path of transfer matrices 
in~\cite{Gangloff2019}.

\subsection{\label{section.coordinate.bethe.ansatz} Coordinate Bethe ansatz}

Since the entropy of $X^s$ is expressed 
according to the greatest 
egeinvalues on the subspaces $\Omega_N^{(n)}$
of $V_N(1)$, for all $N$, the strategy 
in order to compute the entropy is 
to search for an expression of these eigenvalues. 
This is the purpose of the \textbf{coordinate 
Bethe ansatz}, which produces 
candidate eigenvalues. Some auxiliary functions 
involved in the method depend on the position 
of $t$ according to $2$. Since the method is 
independant provided the auxiliary functions, 
we present it only for the case $t <2$, 
which corresponds to our interest.
In the proof of the value of square ice entropy, 
these eigenvalues are identified 
with the maximal ones on a subdomain of the parameter $t$, and this information is 
transported to the parameter $1$ through 
analyticity.

\subsubsection{\label{section.notations} Auxiliary functions} 

Let us denote $\mu : (-1,1) \rightarrow (0,\pi)$ the inverse of the function $\cos : 
(0,\pi) \rightarrow (-1,1)$.
For all $t \in [0,2)$, we will denote 
$\Delta_t = \frac{2-t^2}{2}$, $\mu_t 
= \mu(-\Delta_t)$, and $I_t = (-(\pi-\mu_t), 
(\pi-\mu_t))$. 

\begin{notation}
Let us denote 
$\Theta$ the unique analytic function $(t,x,y) 
\mapsto \Theta_t (x,y)$ 
from the set $\{(t,x,y): t \in (0,2), x,y \in I_t\}$ 
to $\mathbb{R}$ such that $\Theta_{\sqrt{2}} 
(0,0) = 0$ and for all $t,x,y$,

\[\exp(-i\Theta_t (x,y)) = \exp(i(x-y)). \frac{e^{-ix} + e^{iy} -  2 \Delta_t}{e^{-iy} + e^{ix} -  2 \Delta_t}.\]
\end{notation} 

By a unicity argument, for all $t,x,y$, $\Theta_t (x,y) = - \Theta_t (-x,-y)$.

\begin{notation}
Let us denote $\kappa$ the 
unique analytic map $(t,\alpha) \mapsto \kappa_t (\alpha)$ 
from $(0,2) \times \mathbb{R}$ to $\mathbb{R}$
such that $\kappa_{\sqrt{2}} (0)=0$ 
and for all $t,\alpha$,
\[e^{i\kappa_t (\alpha)} = \frac{e^{i\mu_t}-e^{\alpha}}{e^{i\mu_t+\alpha} - 1}.\]
\end{notation}

\begin{computation}
\label{computation.kappa.derivative}
For all $t \in (0,2)$, and $\alpha \in \mathbb{R}$, 
the derivative of $\kappa_t$ in $\alpha$ is given by: 
\[\kappa'_t (\alpha) = \frac{\sin(\mu_t)}{\cosh(\alpha)-\cos(\mu_t)}.\]
\end{computation}

\begin{proof}

\begin{itemize}
\item \textbf{Computation of $\cos(\kappa_t(\alpha))$ and $sin(\kappa_t(\alpha))$:}

\[e^{i\kappa_t (\alpha)} = 
 \frac{\left(e^{-i\mu_t +\alpha}-1\right)\left(e^{i\mu_t}-e^{\alpha}\right)}{\left|e^{i\mu_t+\alpha}-1\right|^2} = \frac{e^{\alpha}
+ e^{2\alpha} e^{-i\mu_t} - e^{i\mu_t} + e^{\alpha}}{(\cos(\mu_t)e^{\alpha}-1)^2+(\sin(\mu_t)e^{\alpha})^2}.\]

Thus by taking the real part,
\[\cos(\kappa_t(\alpha)) = \frac{2e^{\alpha} 
+ (e^{2\alpha}-1)\cos(\mu_t)}{\cos^2(\mu_t)e^{2\alpha} - 2\cos(\mu_t) e^{\alpha} + 1 + (1- \cos^2 (\mu_t))e^{2\alpha}}
\]

\[\cos(\kappa_t(\alpha)) = \frac{2e^{\alpha} 
+ (e^{2\alpha}-1)\cos(\mu_t)}{e^{2\alpha} - 2\cos(\mu_t) e^{\alpha} + 1}
= \frac{1
- \cos(\mu_t)\cosh(\alpha)}{\cosh(\alpha)-\cos(\mu_t)},\]
where we factorized by $2e^{\alpha}$ for
the second equality. As a consequence:

\[\cos(\kappa_t(\alpha)) =\frac{\sin^2(\mu_t)+\cos^2(\mu_t)-\cos(\mu_t) \cosh(\alpha)}{\cosh(\alpha)-\cos(\mu_t)} = \frac{\sin^2(\mu_t)}{\cosh(\alpha)-\cos(\mu_t)} 
- \cos(\mu_t).\]

A similar computation gives 

\[\sin(\kappa_t(\alpha)) = \frac{\sin(\mu_t) \sinh(\alpha)}{\cosh(\alpha)-\cos(\mu_t)}\]

\item \textbf{Deriving the expression $\cos(\kappa_t(\alpha))$:}

As a consequence, for all $\alpha$: 

\[-\kappa'_t(\alpha) \sin(\kappa_t(\alpha)) = - \frac{\sin^2 (\mu_t) \sinh(\alpha)}{(\cosh(\alpha)-\cos(\mu_t))^2} = -\frac{\sin(\kappa_t(\alpha))^2}{\sinh(\alpha)}.\]
\[\]

Thus, for all $\alpha$ but in 
a discrete subset of 
$\mathbb{R}$, 
\[\kappa'_t(\alpha) = \frac{\sin(\mu_t)}{\cosh(\alpha)-\cos(\mu_t)}.\]
This identity is thus verified 
on all $\mathbb{R}$, by continuity.
\end{itemize}

\end{proof}

From this, one deduces that $\kappa_t$ is increasing. 
By another unicity argument, this function is 
antisymmetric: for all $\alpha$, $\kappa_t (-\alpha) = -\kappa_t (\alpha)$. It is also known that $\kappa_t : \mathbb{R} \rightarrow I_t$ is an invertible map (see for instance ~\cite{Gangloff2019}, Proposition 7). Moreover: 

\begin{computation}
For all $t \in (0,2)$ and $\alpha \in I_t$,
\[\kappa'_t (\kappa^{-1}(\alpha)) = \frac{\cos(\alpha) + \cos(\mu_t)}{\sin(\mu_t)}\]
\end{computation}

\begin{proof}
From the first point of the proof of Computation~\ref{computation.kappa.derivative}, 
\[\cos(\kappa_t (\alpha)) = \frac{\sin^2(\mu_t)}{\cosh(\alpha)-\cos(\mu_t)} 
- \cos(\mu_t) = \sin(\mu_t) \kappa'_t (\alpha) - \cos(\mu_t).\]
As a consequence, 
\[\kappa'_t (\kappa_t^{-1} (\alpha)) = \frac{\cos(\alpha) + \cos(\mu_t)}{\sin(\mu_t)}.\]
\end{proof}

 \subsubsection{Statement of the ansatz}

 \begin{notation}
 \label{notation.vector.ansatz}
For all $(p_1,...,p_n) \in I_t^n$, let us denote $\psi_{t,n,N} (p_1,...,p_n)$ the 
 vector in $\Omega_N$ such that for all 
 $\boldsymbol{\epsilon} \in \{0,1\}^*_N$, 
 \[\psi_{t,n,N}(p_1,...,p_n)[\boldsymbol{\epsilon}]
 = \sum_{\sigma \in \Sigma_n} C_{\sigma} (t) [p_1,...,p_n] \prod_{k=1}^n e^{ip_{\sigma(k)} q_k [\boldsymbol{\epsilon}]},\]
 where (for $\epsilon(\sigma)$ denoting the 
 signature of $\sigma$):
\[C_{\sigma} (t) [p_1,...,p_n] = \epsilon(\sigma)\prod_{1 \le k <l\le n} e^{ip_{\sigma(k)}} \left( e^{-ip_{\sigma(k)}} + e^{ip_{\sigma(l)}} - 2\Delta_t \right).\]
 \end{notation}
 
\begin{notation}
For all $t$ and $z \neq 1$, we set 
\[L_t (z) = 1 + \frac{t^2 z}{1-z}, \qquad 
M_t (z) = 1- \frac{t^2}{1-z}.\]
\end{notation} 
 
 \begin{notation}
 Let $(p_1,...,p_n) \in I_t^n$ 
 such that $p_1 < ... < p_n$. 
If for all $j$, $p_j \neq 0$, we denote 
\[\Lambda_{n,N} (t) [p_1,...,p_n] = \prod_{k=1}^n L_t (e^{ip_k}) + \prod_{k=1}^n M_t (e^{ip_k}).\]
If there exists some $l$ such that $p_l=0$:
\[\Lambda_{n,N} (t) [p_1,...,p_n] = \left( 2 + t^2 (N-1) + \sum_{k \neq l} 
\frac{\partial \Theta_t}{\partial x} (0,p_k)\right)\prod_{k=1}^n M_t (e^{ip_k}).\]
 \end{notation}
 
 \begin{theorem}
 \label{theorem.coordinate.ansatz}
 For all $N$ and $n \le N/2$, and $(p_1,...,p_n) \in I_t^n$ such that $p_1 < ... < p_n$
 and for all $j$ the following equation is verified:
 
 \[(E_j)[t,n,N]: \qquad Np_j = 2\pi j - (n+1)\pi - \sum_{k=1}^n \Theta_t (p_j,p_k).\]
 Then we have: 
 \[V_N (t).\psi_{t,n,N} (p_1,...,p_n) = \Lambda_{n,N} (t) [p_1,...,p_n] \psi_{t,n,N} (p_1,...,p_n).\]
 \end{theorem}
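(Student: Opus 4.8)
The plan is to verify directly that $\psi_{t,n,N}(p_1,\dots,p_n)$ is an eigenvector of $V_N(t)$ by computing the coordinates of $V_N(t).\psi_{t,n,N}(p_1,\dots,p_n)$ and matching them, component by component in the canonical basis of $\Omega_N^{(n)}$, against $\Lambda_{n,N}(t)[p_1,\dots,p_n]\,\psi_{t,n,N}(p_1,\dots,p_n)$. The crucial first observation is that $V_N(t)$ preserves the ``number of curves'', i.e. it is block-diagonal with respect to the decomposition $\Omega_N = \bigoplus_n \Omega_N^{(n)}$, since a $(N,1)$-cylindric pattern links $\boldsymbol{\epsilon}$ to $\boldsymbol{\eta}$ only when $|\boldsymbol{\epsilon}|_1 = |\boldsymbol{\eta}|_1$; so it suffices to work inside $\Omega_N^{(n)}$. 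For a fixed target word $\boldsymbol{\eta}$ with $|\boldsymbol{\eta}|_1 = n$, encoded by the increasing positions $y_1 < \dots < y_n$ of its $1$'s, I would enumerate the cylindric patterns $w$ with $\boldsymbol{\epsilon}\,\mathcal{R}[w]\,\boldsymbol{\eta}$: each such $w$ is determined by the bottom word $\boldsymbol{\epsilon}$ (positions $x_1 < \dots < x_n$ of its $1$'s) together with a choice, for each curve, of how it travels from its entry column to its exit column, and $|w|$ counts the horizontal unit steps, i.e. the total horizontal displacement of all curves (with the cylindric wrap-around accounted for). This reduces $V_N(t)[\boldsymbol{\epsilon},\boldsymbol{\eta}]$ to a sum over lattice-path configurations weighted by $t^{\#\text{horizontal steps}}$, which is the standard combinatorial content behind the six-vertex transfer matrix.

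Next I would plug the ansatz form $\psi[\boldsymbol{\epsilon}] = \sum_{\sigma} C_\sigma(t)[p_1,\dots,p_n]\prod_k e^{ip_{\sigma(k)} x_k}$ into $\sum_{\boldsymbol{\epsilon}} V_N(t)[\boldsymbol{\epsilon},\boldsymbol{\eta}]\,\psi[\boldsymbol{\epsilon}]$ and carry out the geometric sums over the free bottom-position variables. This is where the functions $L_t$ and $M_t$ appear: summing $t^{|w|}e^{ip x}$ over the admissible relative positions of a single curve yields, roughly, a factor $L_t(e^{ip_k})$ when the corresponding curve ``advances'' and a factor $M_t(e^{ip_k})$ in the complementary ``static'' case, after one resums the geometric series $\sum_{m\ge 0} \bigl(t^2 z/(1-z)\bigr)^m$-type expressions hidden in $1 + t^2 z/(1-z)$ and $1 - t^2/(1-z)$. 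The computation splits into a ``bulk'' part, where curves do not interact and one simply gets the product $\prod_k L_t(e^{ip_k})$ (resp. $\prod_k M_t(e^{ip_k})$), and a collection of ``boundary'' or ``collision'' terms where two curves would cross or collide. The whole point of the prefactor $C_\sigma(t)$, which carries exactly the two-body factor $e^{-ip_{\sigma(k)}} + e^{ip_{\sigma(l)}} - 2\Delta_t$ matching the numerator/denominator of $\exp(-i\Theta_t)$, is that these collision terms cancel pairwise after antisymmetrization over $\Sigma_n$: swapping $\sigma$ with $\sigma\circ(k\,k{+}1)$ flips the sign $\epsilon(\sigma)$ and exchanges the roles of $p_{\sigma(k)}$ and $p_{\sigma(k+1)}$ in precisely the way that kills the unwanted term. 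I would present this cancellation as the heart of the argument, doing it carefully for the adjacent transposition and invoking that these generate $\Sigma_n$.

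The role of the Bethe equations $(E_j)[t,n,N]$ enters only at the boundary of the cylinder: because we wrap $\{1,\dots,N\}\times\{0\}$ around $\mathbb{Z}/N\mathbb{Z}$, a curve exiting on the right re-enters on the left, which produces ``wrap-around'' terms of the form $e^{ip_{\sigma(k)} N}$ times a product of two-body factors. Equation $(E_j)$ is exactly the condition $e^{i N p_j} = (-1)^{n+1}\prod_k e^{-i\Theta_t(p_j,p_k)}$ (taking $\exp(i\cdot)$ of both sides of $(E_j)$ and using the defining relation for $\Theta_t$), and this is what is needed to re-absorb each wrap-around term back into the main sum with the correct coefficient, so that the total collapses to $\bigl(\prod_k L_t(e^{ip_k}) + \prod_k M_t(e^{ip_k})\bigr)\psi[\boldsymbol{\eta}]$ in the generic case where no $p_j$ vanishes. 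The degenerate case $p_l = 0$ has to be handled separately: there $e^{ip_l}=1$ makes $L_t$ and $M_t$ singular individually, several terms in the ansatz coincide or vanish, and one must take the appropriate limit $p_l \to 0$; the eigenvalue then picks up the derivative terms $\partial_x\Theta_t(0,p_k)$ and the explicit $2 + t^2(N-1)$, which is a l'Hôpital-type computation on the already-established generic formula. The main obstacle, and the step I expect to be most delicate to write cleanly, is the bookkeeping of the collision and wrap-around terms: keeping track of which permutations contribute, getting every sign from $\epsilon(\sigma)$ right, and organizing the geometric sums so that the $L_t$/$M_t$ split is manifest. I would set up careful notation for ``the configuration where curves $i_1,\dots,i_r$ advance and the rest stay'' before touching any sums, so that the antisymmetrization cancellation and the Bethe-equation substitution each become a one-line check rather than a morass of indices.
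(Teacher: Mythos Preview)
Your approach is sound and is precisely the classical \emph{coordinate} Bethe ansatz: direct combinatorial evaluation of $V_N(t)\psi$, with the two-body factors in $C_\sigma$ engineered so that internal ``collision'' terms cancel under antisymmetrization, and the Bethe equations $(E_j)$ absorbing the cylindric wrap-around terms. This is the route taken in the exposition of Duminil-Copin et al.\ that the paper cites, and it works.

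The paper, however, takes a genuinely different route --- indeed, the whole point of the paper is to \emph{avoid} your computation. It proves Theorem~\ref{theorem.coordinate.ansatz} via the \emph{algebraic} Bethe ansatz: it embeds $V_N(t)$ into a trigonometric Yang--Baxter path $x\mapsto T^x_{\mu_t,N}$ of commuting transfer matrices (with $V_N(t)=T^{\mu_t/2}_{\mu_t,N}$), builds candidate eigenvectors as $\psi_{n,N}(\textbf{x})=B_N(x_1)\cdots B_N(x_n)\boldsymbol{\nu}_N$ from the off-diagonal block $B_N$ of the monodromy matrix, and uses the commutation relations of Lemma~\ref{lemma.commutation.relations} (themselves consequences of the Yang--Baxter equation) to show $T^x\psi_{n,N}(\textbf{x})=\Lambda\,\psi_{n,N}(\textbf{x})$ provided the algebraic Bethe equations hold. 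The three identifications (Theorems~\ref{theorem.identification.equations}, \ref{theorem.identification.vectors}, \ref{theorem.identification.values}) then match this algebraic data with the coordinate data $(p_j)$, $\psi_{t,n,N}$, $\Lambda_{n,N}(t)$; in particular, Korepin's formula (Lemma~\ref{lemma.expression.eigenvector}) gives an explicit expansion of $B_N(x_1)\cdots B_N(x_n)\boldsymbol{\nu}_N$ that is shown to be a nonzero scalar multiple of $\psi_{t,n,N}$.

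What each approach buys: yours is self-contained and elementary, but the cancellation bookkeeping you flag as ``the main obstacle'' is genuinely heavy and model-specific. The paper's approach front-loads the work into the Yang--Baxter structure and the monodromy commutation relations; once those are in place, the eigenvector property (Theorem~\ref{theorem.algebraic.bethe.ansatz}) is a short algebraic manipulation with no combinatorics of lattice paths at all, and the unwanted terms cancel automatically from the antisymmetry of $\mu(x,y)$. This is also why the paper prefers it: the algebraic scheme is formulated for any local matrix function satisfying the stated hypotheses, so it transports to other models once a Yang--Baxter solution is found, whereas your direct computation leans on the specific curve combinatorics of square ice. Your handling of the degenerate case $p_l=0$ by a limit is the same idea the paper uses in Theorem~\ref{theorem.identification.values}, though there the limit is taken on the algebraic side (perturbing $x_0$) rather than on $p_l$.
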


\section{\label{section.overview} Overview}

In the following, we provide a proof 
of Theorem~\ref{theorem.coordinate.ansatz}
using the algebraic Bethe ansatz. 
In Section~\ref{section.yang.baxter.paths}, we expose an abstract version of it, 
which consists, given an analytic path 
of \textit{commuting} transfer matrices that we call 
\textbf{Yang-Baxter path} and denoted 
$x \mapsto T_N^x$ (commuting means 
that for all $x,y$, 
$T_N^x$ and $T_N^y$ commute) in constructing, 
under the existence of a sequence $(x_1,...,x_n)$ 
which verifies a system of non-linear equations,
a candidate eigenvector and eigenvalue 
for any matrix $T^x$, using the commutation 
relations between the transfer matrices.
These transfer matrices are constructed from 
the iteration of local matrices.
We describe, in Section~\ref{section.algebraic.bethe.ansatz}, the construction of a Yang-Baxter
path provided a solution of the so-called Yang-Baxter 
equation on these local matrices.
In Section~\ref{section.trigonometric.path}, for 
all $t$, we apply the algebraic Bethe 
ansatz to a trigonometric Yang-Baxter path,
 $x \mapsto T_N^x (t)$, which takes the 
 value $V_N(t)$ for some parameter, 
 and give a proof of Theorem~\ref{theorem.coordinate.ansatz}.
 See an illustration on Figure~\ref{figure.strategy.yang.baxter}.
 In the end, we derive, using the Yang-Baxter path of commuting matrices, the commutation 
 of $V_N (t)$ with some Heisenberg Hamiltonian [Section~\ref{section.commutation}].
In the proof of the value of square ice, this property is used 
to identify the maximal eigenvalue of $V_N(t)$.

\begin{figure}[h!]
\[\begin{tikzpicture}[scale=0.5]

\draw[dashed] (4,3) arc (180:260:2cm);
\draw[gray!90,latex-,line width = 0.4mm] (4,3) arc (180:220:2cm);
\draw[dashed] (4,3) arc (0:80:2cm);
\draw[gray!90,latex-,line width = 0.4mm] (4,3) arc (0:40:2cm);
\draw[red] (0,0) arc (180:90:3cm);

\draw[red] (3,3) -- (5,3);
\draw[red] (5,3) arc (-90:0:3cm);
\draw[red] (0,0) -- (0,-1);
\fill[red] (0,0) circle (1.25mm);
\fill[gray!90] (3.5,4.3) circle (1.25mm);
\fill[gray!90] (4.5,1.7) circle (1.25mm);
\fill[red] (4,3) circle (1.25mm);
\node at (-2,0) {$V_{N} (1)$};
\node at (3,2) {$V_{N} (t)$};
\node at (9,1) {Yang-Baxter path};
\node[scale=0.8,color=gray!90] at (1,3.5) {Alg. Bethe ansatz};
\node at (10,6) {Lieb path};
\node at (4.5,5.3) {$T^x_{N} (t)$};
\end{tikzpicture}\]
\caption{\label{figure.strategy.yang.baxter} Illustration of the strategy of proof 
for Theorem~\ref{theorem.coordinate.ansatz}.}
\end{figure}
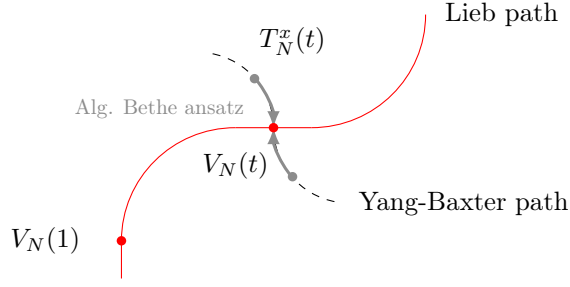

\section{\label{section.yang.baxter.paths} 
Construction of Yang-Baxter paths}

In this section, we explain how to construct 
more general transfer matrix [Section~\ref{section.from.local.to.transfer}], and thus 
paths, deriving from local matrix functions [Section~\ref{section.local.matrix.functions}]. The interest of this construction of 
transfer matrices is that it 
allows a simple criterion 
for these matrices to commute, widely known 
as the Yang-Baxter equation [Section~\ref{section.yang.baxter.equation}]. This criterion becomes 
simpler when the consider 
local matrix functions are strongly 
symmetric [Section~\ref{section.strongly.symmetric}]: 
it consists in selecting the local matrix 
functions in the same level surface 
of an operator on these functions. This 
simpler version helps to find some 
Yang-Baxter paths. We define the notion of Yang-Baxter path in Section~\ref{subsection.yang.baxter.paths}. \bigskip

In the following, for any finite sequence of matrices $M^{(i)}$ having the same size, we denote by \[\prod_{k=1}^m M^{(k)} = M^{(1)} \times ... \times M^{(m)}.\]

\subsection{\label{section.local.matrix.functions} Local matrix functions}

\begin{definition} \label{definition.local.matrix}
A \textbf{local matrix function} of square ice
is a function $R$ from $\{0,1\}^2$ to $\mathcal{M}_2 (\mathbb{C})$ such that for all $u,v \in \{0,1\}$, 
$(w,w') \in \{0,1\}^2$, if 
there is no symbol of the discrete 
curves shift $X^s$ such that there 
is a curve crossing the west (resp. north, south, east) boundary if 
and only if $w'$ (resp. $v$, $u$, $w$) equals 1
(see an illustration on Figure~\ref{figure.definition.R.matrix}), 
then
\[R(u,v)[w,w'] = 0.\]
An image of the local matrix function is
called a \textbf{local matrix}. The \textbf{
matricial representation} of this 
local matrix function is the element of $\mathcal{M}_4 (\mathbb{C})$ defined as:

\[\left( \begin{array}{cc} R(0,0) & R(0,1) \\ 
R(1,0) & R(1,1) \end{array} 
\right) = \left( \begin{array}{cccc}
R(0,0)[0,0] & R(0,0)[0,1] & R(0,1)[0,0] & R(0,1)[0,1] \\
R(0,0)[1,0] & R(0,0)[1,1] & R(0,1)[1,0] & R(0,1)[1,1] \\
R(1,0)[0,0] & R(1,0)[0,1] & R(1,1)[0,0] & R(1,1)[0,1] \\
R(1,0)[1,0] & R(1,0)[0,1] & R(1,1)[1,0] & R(1,1)[1,1] \\
\end{array}\right).\]
\end{definition}

\begin{figure}[h!]

\[
\begin{tikzpicture}
\draw (0,0) rectangle (1,1);
\draw[line width=0.6mm,color=gray!80, dashed] (-0.15,0.5) -- (0.15,0.5);
\draw[line width=0.6mm,color=gray!80, dashed] (0.85,0.5) -- (1.15,0.5);
\draw[line width=0.6mm,color=gray!80, dashed] (0.5,-0.15) -- (0.5,0.15);
\draw[line width=0.6mm,color=gray!80, dashed] (0.5,0.85) -- (0.5,1.15);
\node at (1.4,0.5) {$w'$};
\node at (-0.4,0.5) {$w$};
\node at (0.5,1.4) {$v$};
\node at (0.5,-0.4) {$u$};
\node at (-1.6,0.5) {input};
\node at (2.6,0.5) {output};
\end{tikzpicture}\]

\caption{\label{figure.definition.R.matrix} Definition of non-zero coefficients 
in a local matrix; the dashed segments designate possible curve crossing the border of the 
symbol.}
\end{figure}
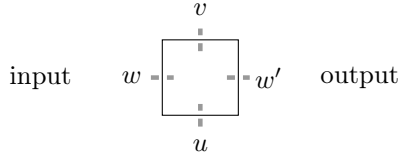

\begin{fact} \label{fact.local.matrix}
A direct translation of 
Definition~\ref{definition.local.matrix} is that when 
$u=v$, the matrix $R(u,v)$ is diagonal, and 
$R(0,1)$ and $R(1,0)$ have all coefficients equal to 
zero except for the respective coefficients 
$R(0,1)[1,0]$
and $R(1,0)[0,1]$. As a consequence, the local matrix 
function $R$ can be represented as 
\[R= \left( \begin{array}{cccc}
a & 0 & 0 & 0 \\
0 & b & c & 0 \\
0 & d & e & 0 \\
0 & 0 & 0 & f \\
\end{array}\right),\]
for some $a,b,c,d,e,f \in \mathbb{C}$.
\end{fact}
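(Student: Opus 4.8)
The statement is entirely combinatorial: it comes from matching Definition~\ref{definition.local.matrix} against the explicit list of the symbols of $\mathcal{A}_1$. First I would note that, the transformation of Section~\ref{section.discrete.curves} being a bijection $\mathcal{A}_0 \to \mathcal{A}_1$ and $\mathcal{A}_0$ having six elements, the alphabet $\mathcal{A}_1$ — hence any set of symbols occurring in $X^s$ — consists of exactly six symbols. Reading off the pictures of Section~\ref{section.discrete.curves}, the set of sides of a cell crossed by a curve is, for these six symbols, one of
\[\emptyset,\quad \{\mathrm{S},\mathrm{N}\},\quad \{\mathrm{S},\mathrm{E}\},\quad \{\mathrm{N},\mathrm{W}\},\quad \{\mathrm{S},\mathrm{N},\mathrm{E},\mathrm{W}\},\quad \{\mathrm{E},\mathrm{W}\},\]
where $\mathrm{S},\mathrm{N},\mathrm{E},\mathrm{W}$ denote the south, north, east and west sides. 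By Definition~\ref{definition.local.matrix}, $R(u,v)[w,w'] = 0$ as soon as the set of sides prescribed by $(u,v,w,w')$ through the labelling of Figure~\ref{figure.definition.R.matrix} is not on this list.

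The second step is the observation, immediate from that list, that a subset of $\{\mathrm{S},\mathrm{N},\mathrm{E},\mathrm{W}\}$ is a crossing set of some cell of $X^s$ if and only if it contains as many of $\mathrm{S},\mathrm{W}$ as of $\mathrm{N},\mathrm{E}$ — inside a cell each curve segment joins a side in $\{\mathrm{S},\mathrm{W}\}$ to a side in $\{\mathrm{N},\mathrm{E}\}$. Applying this to the three cases of the statement: when $u = v$ the south and north sides contribute equally to that count, so a realizable crossing set must also have west and east contribute equally, forcing $w = w'$, i.e. $R(u,v)$ is diagonal; when $(u,v) = (0,1)$ the only way to rebalance is to cross the west side and not the east one, i.e. $(w,w') = (1,0)$, so $R(0,1)$ vanishes off $R(0,1)[1,0]$; symmetrically $R(1,0)$ vanishes off $R(1,0)[0,1]$. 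Feeding these four facts into the matricial representation of Definition~\ref{definition.local.matrix} and naming the six surviving coefficients $a = R(0,0)[0,0]$, $b = R(0,0)[1,1]$, $c = R(0,1)[1,0]$, $d = R(1,0)[0,1]$, $e = R(1,1)[0,0]$, $f = R(1,1)[1,1]$ yields exactly the displayed block matrix.

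I do not expect a real obstacle here: the argument is bookkeeping, and its whole content is the table of six crossing sets. The only points that need care are notational, namely keeping the indices $w,w'$ attached to the correct boundary sides — as in Figure~\ref{figure.definition.R.matrix} — and using the convention of Section~\ref{section.discrete.curves} fixing which sides of a symbol carry a curve; I would therefore spell out that dictionary before running the case distinction. A last, immediate check is that none of the ten subsets absent from the list above — in particular the two-element sets $\{\mathrm{S},\mathrm{W}\}$ and $\{\mathrm{N},\mathrm{E}\}$, and every subset of odd size — is a crossing set, which is manifest from the list itself.
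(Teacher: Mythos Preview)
Your proposal is correct and is exactly what the paper has in mind: the paper gives no proof at all, simply calling this a ``direct translation'' of Definition~\ref{definition.local.matrix}, so your argument is a faithful unpacking of that remark. The conservation observation $|\{\mathrm{S},\mathrm{W}\}\cap\text{crossings}|=|\{\mathrm{N},\mathrm{E}\}\cap\text{crossings}|$ is precisely the combinatorial content behind the six-vertex alphabet, and your case split and naming of the six surviving entries match the displayed matrix.
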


\begin{remark}
Although this notion is defined as a function 
having matrix values, this notion is usually  designated by the term of $R$-matrix, which 
refers more precisely to the representation 
of the local matrix function. 
\end{remark}

\begin{definition}
A \textbf{north-east local matrix} is a 
matrix $ Q \in \mathcal{M}_4 (\mathbb{C})$
seen as the matrix of a linear operation
from $\mathbb{C}^2 \otimes \mathbb{C}^2$ 
to itself, such that for all $t,u,v,w \in \{0,1\}$, if there is no symbol of the discrete 
curves shift $X^s$ such that there 
is a curve crossing the west (resp. south, north, east) boundary if 
and only if $u$ (resp. $t$, $w$, $v$) equals 1
(see an illustration on Figure~\ref{figure.definition.north.east.matrix}), 
then
\[Q[(t,u),(v,w)] = 0.\]
The matrix $Q$ is then equal to: 
\[\left( \begin{array}{cccc}
Q[(0,0),(0,0)] & Q[(0,0),(0,1)] & Q[(0,0),(1,0)] & Q[(0,0),(1,1)] \\
Q[(0,1),(0,0)] & Q[(0,1),(0,1)] & Q[(0,1),(1,0)] & Q[(0,1),(1,1)] \\
Q[(1,0),(0,0)] & Q[(1,0),(0,1)] & Q[(1,0),(1,0)] & Q[(1,0),(1,1)] \\
Q[(1,1),(0,0)] & Q[(1,1),(0,1)] & Q[(1,1),(1,0)] & Q[(1,1),(1,1)] \\
\end{array}\right).\]
This means that the columns are indexed, 
from left to right (resp. from top to bottom) 
by the sequences 
$(0,0), (0,1),(1,0)$ and $(1,1)$.
\end{definition}

\begin{figure}[h!]

\[
\begin{tikzpicture}
\draw (0,0) rectangle (1,1);
\draw[line width=0.6mm,color=gray!80, dashed] (-0.15,0.5) -- (0.15,0.5);
\draw[line width=0.6mm,color=gray!80, dashed] (0.85,0.5) -- (1.15,0.5);
\draw[line width=0.6mm,color=gray!80, dashed] (0.5,-0.15) -- (0.5,0.15);
\draw[line width=0.6mm,color=gray!80, dashed] (0.5,0.85) -- (0.5,1.15);
\node at (1.4,0.5) {$v$};
\node at (-0.4,0.5) {$u$};
\node at (0.5,1.4) {$w$};
\draw[latex-latex] (1.6,0.5) -- (2,0.5) -- (2,2) -- (0.5,2) -- (0.5,1.6);
\draw (2,1) -- (2.2,1);
\node at (0.5,-0.4) {$t$};
\node at (-1.8,0) {input};
\draw[-latex] (-1,0.5) -- (-0.6,0.5);

\draw[-latex] (-1,0.5) -- (-1,-1) -- (0.5,-1) -- (0.5,-0.6);
\draw (-1.2,0) -- (-1,0);

\node at (2.8,1) {output};
\end{tikzpicture}\]

\caption{\label{figure.definition.north.east.matrix} Definition of non-zero coefficients 
in a north east local matrix; the dashed segments designate possible curve crossing the border of the 
symbol.}
\end{figure}
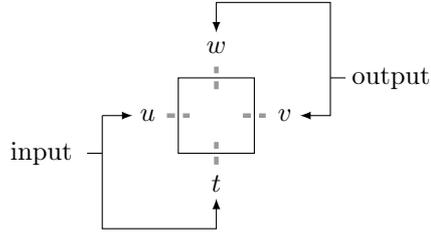

\begin{property}
\label{property.north.east.local.matrix}
From a local matrix function $R$, it is 
possible to derive a north east local matrix
$Q$ as follows: for all $t,u,v,w$,
\[Q[(t,u),(v,w)] = R(t,w)[u,v].\] 
This association is bijective.
As a consequence, if $R$ is represented as 
\[R= \left( \begin{array}{cccc}
a & 0 & 0 & 0 \\
0 & b & c & 0 \\
0 & d & e & 0 \\
0 & 0 & 0 & f \\
\end{array}\right),\]
then $Q$ is equal to :

\[Q= \left( \begin{array}{cccc}
R(0,0)[0,0] & R(0,1)[0,0] & R(0,0)[0,1] & R(0,1)[0,1] \\
R(0,0)[0,1] & R(0,1)[1,0] & R(0,0)[1,1] & R(0,1)[1,1] \\
R(1,0)[0,0] & R(1,1)[0,0] & R(1,0)[0,1] & 
R(1,1)[0,1] \\
R(1,0)[1,0] & R(1,1)[1,0] & R(1,0)[1,1] & R(1,1)[1,1] \\
\end{array}\right) = 
\left( \begin{array}{cccc}
a & 0 & 0 & 0 \\
0 & c & b & 0 \\
0 & e & d & 0 \\
0 & 0 & 0 & f \\
\end{array}\right)\]
\end{property}

\subsection{\label{section.from.local.to.transfer} Derivation of a transfer matrix
from a local matrix function}

Let $R$ be a local matrix function of square ice.

\begin{definition}
For all $N$, the $N$th 
\textbf{monodromy matrix} 
relative to 
$\vec{u},\vec{v} \in \{0,1\}^{*}_N$ is the matrix 
$M_N (\vec{u},\vec{v}) \in \mathcal{M}_2 (\mathbb{C})$ such 
that 

\[M_N (\vec{u},\vec{v}) = \left(\prod_{k=1}^{N} R(\vec{u}_k,\vec{v}_k)\right),\]
meaning that for all 
$(w,w') \in \{0,1\}^2$ 
\begin{align*}
M_N (\vec{u},\vec{v}) [w,w'] & = \sum_{w_2 \in \{0,1\}} ... \sum_{w_{N+1} \in \{0,1\}} 
\prod_{k=1}^{N} R(\vec{u}_k,\vec{v}_k) [w_k,w_{k+1}]
\end{align*}
where we denote $w_1 = w$ and $w_{N+1} = w'$.
See an illustration on Figure~\ref{figure.definition.monodromy.matrix}.
The matrix $M_N (\vec{u},\vec{v})$ is thus represented by : 
\[M_N (\vec{u},\vec{v}) = \left( \begin{array}{cc} 
M_N (\vec{u},\vec{v}) [0,0] & M_N (\vec{u},\vec{v}) [0,1]\\
M_N (\vec{u},\vec{v}) [1,0] & M_N (\vec{u},\vec{v}) [1,1]
\end{array}\right).\]
This representation is usually called the \textbf{Lax matrix}.
\end{definition}

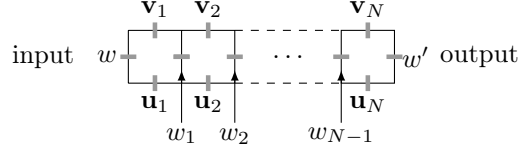
\begin{figure}[h!]

\[
\begin{tikzpicture}[scale=0.7]
\draw (0,0) grid (2,1);

\node at (3,0.5) {$\hdots$};

\draw (4,0) rectangle (5,1);
\draw[line width=0.6mm,color=gray!80] (-0.15,0.5) -- (0.15,0.5);
\draw[line width=0.6mm,color=gray!80] (0.85,0.5) -- (1.15,0.5);
\draw[line width=0.6mm,color=gray!80] (0.5,-0.15) -- (0.5,0.15);
\draw[line width=0.6mm,color=gray!80] (0.5,0.85) -- (0.5,1.15);
\node at (1,-1) {$w_1$};
\draw[-latex] (1,-0.75) -- (1,0.25);
\node at (-0.4,0.5) {$w$};
\node at (0.5,1.4) {$\vec{v}_1$};
\node at (0.5,-0.4) {$\vec{u}_1$};
\node at (-1.6,0.5) {input};

\begin{scope}[xshift=4cm]
\node at (0,-1) {$w_{N-1}$};
\draw[-latex] (0,-0.75) -- (0,0.25);
\node at (0.5,1.4) {$\vec{v}_N$};
\node at (0.5,-0.4) {$\vec{u}_N$};
\node at (1.4,0.5) {$w'$};
\draw[line width=0.6mm,color=gray!80] (-0.15,0.5) -- (0.15,0.5);
\draw[line width=0.6mm,color=gray!80] (0.85,0.5) -- (1.15,0.5);
\draw[line width=0.6mm,color=gray!80] (0.5,-0.15) -- (0.5,0.15);
\draw[line width=0.6mm,color=gray!80] (0.5,0.85) -- (0.5,1.15);

\node at (2.6,0.5) {output};
\end{scope}

\begin{scope}[xshift=1cm]
\node at (1,-1) {$w_2$};
\draw[-latex] (1,-0.75) -- (1,0.25);
\node at (0.5,1.4) {$\vec{v}_2$};
\node at (0.5,-0.4) {$\vec{u}_2$};
\draw[line width=0.6mm,color=gray!80] (-0.15,0.5) -- (0.15,0.5);
\draw[line width=0.6mm,color=gray!80] (0.85,0.5) -- (1.15,0.5);
\draw[line width=0.6mm,color=gray!80] (0.5,-0.15) -- (0.5,0.15);
\draw[line width=0.6mm,color=gray!80] (0.5,0.85) -- (0.5,1.15);
\end{scope}

\draw[dashed] (2,1) -- (4,1);
\draw[dashed] (2,0) -- (4,0);

\end{tikzpicture}\]

\caption{\label{figure.definition.monodromy.matrix} Illustration of the definition of the monodromy 
matrix relative to 
$\vec{u},\vec{v}$ defined from a local matrix. The coefficient 
$w,w'$ of the matrix is the sum 
over all possible compositions of 
local matrices, from the leftmost square to 
the rightmost one.}
\end{figure}

\begin{definition}
For all $N$, the 
$N$th \textbf{transfer matrix} associated to 
the local matrix function $R$ is the matrix 
$T_N \in \mathcal{M}_{2^N} (\mathbb{C})$, 
thought as 
a matrix of an operator on $\mathbb{C}^2 
\otimes .. \otimes \mathbb{C}^2$, 
such that for all $\vec{u},\vec{v} \in 
\{0,1\}^{*}_N$, 
\[T_N [\vec{u},\vec{v}] = \text{Tr}(M_N (\vec{u},\vec{v})).\] 
\end{definition}

\begin{remark}
The fact that the entropy of $X^s_N$ 
is equal 
to the entropy of $\overline{X}^s_N$ is involved 
in this definition through the fact that the 
transfer matrix is defined as the trace 
of the monodromy matrix, which is crucial 
in the proof of the commutation criterion.
\end{remark}

\begin{example}
For instance, the transfer matrix $V_N (t)$
 is obtained 
from the local matrix function represented as:
\[\left( \begin{array}{cccc}
1 & 0 & 0 & 0 \\
0 & 1 & t & 0 \\
0 & t & 1 & 0 \\
0 & 0 & 0 & 1
\end{array}\right).\]
\end{example}

\subsection{\label{section.yang.baxter.equation} Yang-Baxter equation, commutation criterion for transfer matrices}

For two matrices $M, M' \in \mathcal{M}_2 (\mathbb{C})$, we denote 
\[M \otimes M' = \left( \begin{array}{cc}
M[0,0].M' & M[0,1].M' \\
M[1,0].M' & M[1,1].M'
\end{array}\right)\]
the Kronecker product of $M$ and $M'$. 
We also denote $\overline{M}= \left( \begin{array}{cc} M & 0 \\
0 & M \end{array} \right)$.
We use similar notations for matrices 
in $\mathcal{M}_n (\mathbb{C})$ for 
all $n \ge 1$.

\begin{notation}
Let us consider $R$ and $R'$ two local 
matrix functions. For all $t,v \in \{0,1\}$, 
we denote 
\[R' \circ R (t,v) = \sum_{u \in \{0,1\}} 
R(t,u) \otimes R'(u,v),\] 
which can be depicted as on Figure~\ref{figure.product.local.matrices}. 
\end{notation}

\begin{notation}
For two local matrix functions $R,R'$, 
we denote $M_N \otimes M'_N (\vec{t},\vec{v})$ the 
matrix in $\mathcal{M}_4 (\mathbb{C})$ such that: 

\[M_N \otimes M'_N (\vec{t},\vec{v}) = \prod_{k=1}^{N} 
R' \circ R (\vec{t}_k,\vec{v}_k).\]
where $M_N$ and $M'_N$ are the respective $N$th 
monodromy matrices of $R$ and $R'$. 
\end{notation}

\begin{lemma}
\label{lemma.product.monodromy.matrices}
Let $R$ and $R'$ be two local matrix functions and 
$T_N$, $T'_N$ their respective transfer matrices. 
Then for all $\vec{t},\vec{v} \in \{0,1\}^{*}_N$:
\[(T_N.T'_N) [\vec{t},\vec{v}] = \text{Tr} (M_N \otimes M'_N (\vec{t},\vec{v})).\]
\end{lemma}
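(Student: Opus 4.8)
The plan is to prove Lemma~\ref{lemma.product.monodromy.matrices} by a direct computation that unfolds both sides into sums over intermediate indices and matches them term by term. The left-hand side $(T_N.T'_N)[\vec{t},\vec{v}]$ is, by definition of matrix product and of the transfer matrices as traces of monodromy matrices, the sum over $\vec{s} \in \{0,1\}^*_N$ of $\mathrm{Tr}(M_N(\vec{t},\vec{s})) \cdot \mathrm{Tr}(M'_N(\vec{s},\vec{v}))$. Expanding each monodromy matrix as a product of local matrices along the chain, this becomes a sum over $\vec{s}$, over a pair of ``auxiliary'' paths $(w_1,\dots,w_{N+1})$ with $w_1 = w_{N+1}$ (the first trace) and $(w'_1,\dots,w'_{N+1})$ with $w'_1 = w'_{N+1}$ (the second trace), of $\prod_k R(\vec{t}_k,\vec{s}_k)[w_k,w_{k+1}] \cdot \prod_k R'(\vec{s}_k,\vec{v}_k)[w'_k,w'_{k+1}]$.

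First I would rewrite the right-hand side. By the definition of $R' \circ R$ and the Kronecker product, the $(k)$-th factor $R' \circ R(\vec{t}_k,\vec{v}_k) = \sum_{u} R(\vec{t}_k,u) \otimes R'(u,\vec{v}_k)$ is a $4\times 4$ matrix whose entries, after choosing the pairing convention $(a,a') \mapsto$ row/column index of $\mathbb{C}^2 \otimes \mathbb{C}^2$, are $\sum_{u} R(\vec{t}_k,u)[a,b] \cdot R'(u,\vec{v}_k)[a',b']$. Taking the product over $k = 1,\dots,N$ of these $4\times 4$ matrices and then the trace, one introduces for each edge $k$ a summation index $u_k \in \{0,1\}$, and a pair of auxiliary paths that get identified at the two ends by the trace. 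Writing $\vec{u} = (u_1,\dots,u_N)$, the trace of the product of Kronecker factors factorizes, by the mixed-product property of $\otimes$, into $\sum_{\vec{u}} \mathrm{Tr}\!\left(\prod_k R(\vec{t}_k,u_k)\right) \cdot \mathrm{Tr}\!\left(\prod_k R'(u_k,\vec{v}_k)\right) = \sum_{\vec{u}} \mathrm{Tr}(M_N(\vec{t},\vec{u})) \cdot \mathrm{Tr}(M'_N(\vec{u},\vec{v}))$.

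The identification of the two sides is then immediate after renaming $\vec{s} = \vec{u}$: both equal $\sum_{\vec{s}} \mathrm{Tr}(M_N(\vec{t},\vec{s})) \cdot \mathrm{Tr}(M'_N(\vec{s},\vec{v}))$. The main obstacle, and the only place requiring care, is bookkeeping: one must fix once and for all the indexing convention for $\mathbb{C}^2 \otimes \mathbb{C}^2$ used in the Kronecker product (which pair $(a,a')$ corresponds to which of the four basis vectors), and check that with this convention the identity $\mathrm{Tr}\big(\prod_k (A_k \otimes B_k)\big) = \mathrm{Tr}\big(\prod_k A_k\big)\,\mathrm{Tr}\big(\prod_k B_k\big)$ holds — this is the mixed-product property $(A \otimes B)(C \otimes D) = (AC)\otimes(BD)$ together with $\mathrm{Tr}(X \otimes Y) = \mathrm{Tr}(X)\mathrm{Tr}(Y)$, applied iteratively. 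Once the convention matches Definition of $R'\circ R$ and the Kronecker product $M \otimes M'$ used in the notation $M_N \otimes M'_N$, the proof is a one-line consequence; I would state it as such and relegate the index-chasing to the reader or to a displayed chain of equalities.

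\begin{proof}
By definition of the transfer matrices and of matrix product,
\[(T_N.T'_N)[\vec{t},\vec{v}] = \sum_{\vec{s} \in \{0,1\}^*_N} T_N[\vec{t},\vec{s}]\, T'_N[\vec{s},\vec{v}] = \sum_{\vec{s} \in \{0,1\}^*_N} \mathrm{Tr}(M_N(\vec{t},\vec{s}))\, \mathrm{Tr}(M'_N(\vec{s},\vec{v})).\]
On the other hand, by the definition of $M_N \otimes M'_N (\vec{t},\vec{v})$, of $R' \circ R$, and the mixed-product property $(A\otimes B)(C\otimes D) = (AC)\otimes(BD)$,
\[M_N \otimes M'_N(\vec{t},\vec{v}) = \prod_{k=1}^N R'\circ R(\vec{t}_k,\vec{v}_k) = \prod_{k=1}^N \sum_{u_k\in\{0,1\}} R(\vec{t}_k,u_k)\otimes R'(u_k,\vec{v}_k) = \sum_{\vec{u}\in\{0,1\}^*_N} \left(\prod_{k=1}^N R(\vec{t}_k,u_k)\right)\otimes\left(\prod_{k=1}^N R'(u_k,\vec{v}_k)\right),\]
that is, $M_N\otimes M'_N(\vec{t},\vec{v}) = \sum_{\vec{u}} M_N(\vec{t},\vec{u})\otimes M'_N(\vec{u},\vec{v})$. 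Taking the trace and using $\mathrm{Tr}(X\otimes Y) = \mathrm{Tr}(X)\mathrm{Tr}(Y)$ gives
\[\mathrm{Tr}(M_N\otimes M'_N(\vec{t},\vec{v})) = \sum_{\vec{u}\in\{0,1\}^*_N} \mathrm{Tr}(M_N(\vec{t},\vec{u}))\,\mathrm{Tr}(M'_N(\vec{u},\vec{v})),\]
which coincides with the expression obtained for $(T_N.T'_N)[\vec{t},\vec{v}]$ after renaming $\vec{u}$ to $\vec{s}$.
\end{proof}
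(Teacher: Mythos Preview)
Your proof is correct and follows essentially the same approach as the paper: both expand $(T_N.T'_N)[\vec t,\vec v]$ as $\sum_{\vec u}\mathrm{Tr}(M_N(\vec t,\vec u))\,\mathrm{Tr}(M'_N(\vec u,\vec v))$ and identify this with $\mathrm{Tr}(M_N\otimes M'_N(\vec t,\vec v))$. The only cosmetic difference is that the paper carries out the identification by writing everything in explicit sums over the auxiliary indices $w_k,w'_k$, whereas you invoke the mixed-product property $(A\otimes B)(C\otimes D)=(AC)\otimes(BD)$ and $\mathrm{Tr}(X\otimes Y)=\mathrm{Tr}(X)\mathrm{Tr}(Y)$ directly; the content is the same.
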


\begin{proof}
From the expressions of the monodromy 
matrices, for all $\vec{t},\vec{v} \in \{0,1\}^{*}_N$,

\begin{align*}(T_N . T'_N)[\vec{t},\vec{v}]  & = 
\sum_{\vec{u}} \text{Tr} (M_N (\vec{t},\vec{u})).
\text{Tr} (M_N (\vec{u},\vec{v})) \\
& = 
\sum_{\vec{u}}
\sum_{\substack{w_1=w_{N+1} \\ w'_1 = w'_{N+1}}}
\sum_{\substack{w_2 \in \{0,1\}\\ w'_2 \in \{0,1\}}} ... 
\sum_{\substack{w_{N} \in \{0,1\} \\ w'_{N} \in \{0,1\}}}  
\prod_{k=1}^{N} R(\vec{t}_k,\vec{u}_k) [w_k,w_{k+1}]
.R'(\vec{u}_k,\vec{v}_k) [w'_k,w'_{k+1}]\\
& = 
\sum_{\substack{w_1=w_{N+1} \\ w'_1 = w'_{N+1}}}
\sum_{\substack{w_2 \in \{0,1\}\\ w'_2 \in \{0,1\}}} ... 
\sum_{\substack{w_{N} \in \{0,1\} \\ w'_{N} \in \{0,1\}}} 
\left( \sum_{\vec{u}}
\prod_{k=1}^{N} R(\vec{t}_k,\vec{u}_k) [w_k,w_{k+1}]
.R'(\vec{u}_k,\vec{v}_k) [w'_k,w'_{k+1}]\right)\\
& = 
\sum_{\substack{w_1=w_{N+1} \\ w'_1 = w'_{N+1}}}
\sum_{\substack{w_2 \in \{0,1\}\\ w'_2 \in \{0,1\}}} ... 
\sum_{\substack{w_{N} \in \{0,1\} \\ w'_{N} \in \{0,1\}}} 
\left( \sum_{\vec{u}}
\prod_{k=1}^{N} R(\vec{t}_k,\vec{u}_k) \otimes R'(\vec{u}_k,\vec{v}_k)
[(w_k,w_{k+1})
, (w'_k,w'_{k+1})]\right)\\
& = Tr (M_N \otimes M'_N(\vec{t},\vec{v})).
\end{align*}
\end{proof}

\begin{figure}[h!]

\[
\begin{tikzpicture}[scale=0.7]
\draw (0,0) grid (1,2);

\draw[line width=0.6mm,color=gray!80] (-0.15,0.5) -- (0.15,0.5);
\draw[line width=0.6mm,color=gray!80] (0.85,0.5) -- (1.15,0.5);
\draw[line width=0.6mm,color=gray!80] (0.5,-0.15) -- (0.5,0.15);
\draw[line width=0.6mm,color=gray!80] (0.5,0.85) -- (0.5,1.15);
\node at (-0.5,0.5) {$w_1$};
\node at (1.5,0.5) {$w'_1$};
\node at (0.5,-0.4) {$t$};
\node at (-2,1) {input};
\node at (3.2,1) {output};

\begin{scope}[yshift=1cm]
\node at (0.5,1.4) {$v$};
\node at (-0.5,0.5) {$w_2$};
\node at (1.5,0.5) {$w'_2$};
\draw[line width=0.6mm,color=gray!80] (-0.15,0.5) -- (0.15,0.5);
\draw[line width=0.6mm,color=gray!80] (0.85,0.5) -- (1.15,0.5);
\draw[line width=0.6mm,color=gray!80] (0.5,-0.15) -- (0.5,0.15);
\draw[line width=0.6mm,color=gray!80] (0.5,0.85) -- (0.5,1.15);
\end{scope}

\draw[thick,decorate,decoration={brace,amplitude=3pt}] (-3.2,1) -- (-3.2,2);

\node at (-4.2,1.5) {$R'(.,v)$};

\draw[thick,decorate,decoration={brace,amplitude=3pt}] (-3.2,0) -- (-3.2,1);

\node at (-4.2,0.5) {$R(t,.)$};

\end{tikzpicture}\]

\caption{\label{figure.product.local.matrices} Graphical representation of the 
matrix $R'  \circ R (t,v)$.}
\end{figure}
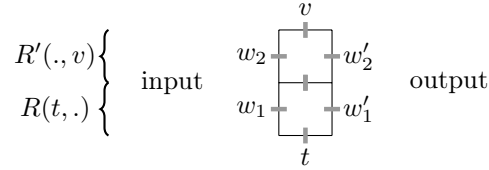

\begin{figure}[h!]

\[
\begin{tikzpicture}[scale=0.7]
\draw (0,0) grid (1,2);

\draw[line width=0.6mm,color=gray!80] (-0.15,0.5) -- (0.15,0.5);
\draw[line width=0.6mm,color=gray!80] (0.85,0.5) -- (1.15,0.5);
\draw[line width=0.6mm,color=gray!80] (0.5,-0.15) -- (0.5,0.15);
\draw[line width=0.6mm,color=gray!80] (0.5,0.85) -- (0.5,1.15);
\node at (0.5,-0.4) {$t$};

\draw[-latex] (1.25,0.5) -- (1.75,0.5); 
\node at (2.5,0) {output};

\begin{scope}[xshift=-2cm,yshift=0.5cm]

\draw (0,0) rectangle (1,1);

\draw[line width=0.6mm,color=gray!80] (-0.15,0.5) -- (0.15,0.5);
\draw[line width=0.6mm,color=gray!80] (0.85,0.5) -- (1.5,0.5) -- (1.5,0) -- (2,0) ;
\draw[line width=0.6mm,color=gray!80] (0.5,-0.15) -- (0.5,0.15);
\draw[line width=0.6mm,color=gray!80] (0.5,0.85) -- (0.5,1.15) -- (0.5,1.5) -- (1.5,1.5) -- (1.5,1) 
-- (2,1);

\draw[-latex] (-0.75,0.5) -- (-0.25,0.5);
\node at (-1.5,0) {input};
\draw[-latex] (0.5,-0.75) -- (0.5,-0.25);

\end{scope}

\begin{scope}[yshift=1cm]
\node at (0.5,1.4) {$v$};
\draw[line width=0.6mm,color=gray!80] (-0.15,0.5) -- (0.15,0.5);
\draw[line width=0.6mm,color=gray!80] (0.85,0.5) -- (1.15,0.5);
\draw[line width=0.6mm,color=gray!80] (0.5,-0.15) -- (0.5,0.15);
\draw[line width=0.6mm,color=gray!80] (0.5,0.85) -- (0.5,1.15);

\draw[-latex] (1.25,0.5) -- (1.75,0.5); 
\end{scope}

\node[scale=1.5] at (4,1) {\textbf{=}}; 

\draw[thick,decorate,decoration={brace,amplitude=3pt}] (0,3.5) -- (1,3.5);

\node at (0.5,4.15) {$R' \circ R (t,v)$};

\draw[thick,decorate,decoration={brace,amplitude=3pt}] (-2,3) -- (-1,3);

\node at (-1.5,3.65) {$Q$};

\begin{scope}[xshift=9cm]

\begin{scope}[xshift=-2cm]

\draw (0,0) grid (1,2);

\draw[line width=0.6mm,color=gray!80] (-0.15,0.5) -- (0.15,0.5);
\draw[line width=0.6mm,color=gray!80] (0.85,0.5) -- (1.5,0.5) -- (1.5,0) -- (2.5,0) -- (2.5,0.5);
\draw[line width=0.6mm,color=gray!80] (0.5,-0.15) -- (0.5,0.15);
\draw[line width=0.6mm,color=gray!80] (0.5,0.85) -- (0.5,1.15);
\node at (0.5,-0.4) {$v$};

\draw[-latex] (-0.75,0.5) -- (-0.25,0.5);

\node at (-1,-0.5) {input}; 

\end{scope}

\begin{scope}[yshift=0.5cm]

\draw (0,0) rectangle (1,1);

\draw[line width=0.6mm,color=gray!80] (-0.15,0.5) -- (0.15,0.5);
\draw[line width=0.6mm,color=gray!80] (0.85,0.5) -- (1.15,0.5) ;
\draw[line width=0.6mm,color=gray!80] (0.5,-0.15) -- (0.5,0.15);
\draw[line width=0.6mm,color=gray!80] (0.5,0.85) -- (0.5,1.15) ;

\draw[-latex] (1.25,0.5) -- (1.75,0.5);
\draw[-latex] (0.5,1.25) -- (0.5,1.75);

\node at (2,-0.5) {output};
\end{scope}

\begin{scope}[yshift=1cm, xshift=-2cm]
\node at (0.5,1.4) {$t$};
\draw[line width=0.6mm,color=gray!80] (-0.15,0.5) -- (0.15,0.5);
\draw[line width=0.6mm,color=gray!80] (0.85,0.5) -- (1.5,0.5) -- (1.5,0) -- (2,0);
\draw[line width=0.6mm,color=gray!80] (0.5,-0.15) -- (0.5,0.15);
\draw[line width=0.6mm,color=gray!80] (0.5,0.85) -- (0.5,1.15);

\draw[-latex] (-0.75,0.5) -- (-0.25,0.5);

\draw[thick,decorate,decoration={brace,amplitude=3pt}] (0,2.5) -- (1,2.5);

\node at (0.5,3.15) {$R \circ R' (v,t)$};

\draw[thick,decorate,decoration={brace,amplitude=3pt}] (2,2) -- (3,2);

\node at (2.5,2.65) {$Q$};

\end{scope}
\end{scope}

\end{tikzpicture}\]

\caption{\label{figure.yang.baxter.equation} Graphical representation of the 
Yang-Baxter equation for the triple $(R,R',Q)$.
In this equation, the input and output of 
the matrix $R' \circ R(t,v).Q$ (represented on the left) and 
$Q.R \circ R'(v,t)$ (represented on the right) are fixed, 
as well as $v,t$.}
\end{figure}
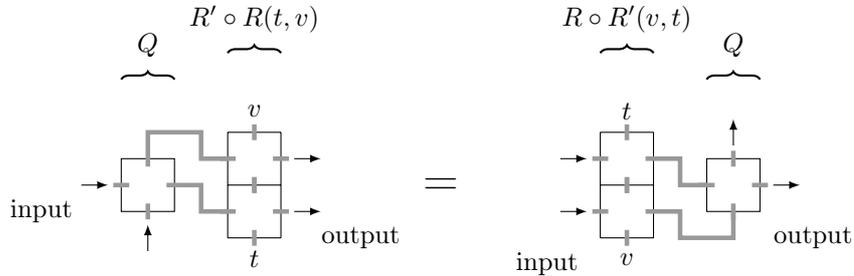

\begin{definition}
Consider $Q$ some north east local function. 
The triple $(R,R',Q)$ verifies the \textbf{Yang-Baxter} equation when for all $t,v \in \{0,1\}$,
the matrices $R' \circ R (t,v)$ 
and $R \circ R' (v,t)$ are conjugated by $Q$: 
\[R' \circ R(t,v).Q = Q.R \circ R'(t,v).\]
This relation is illustrated on Figure~\ref{figure.yang.baxter.equation}.
\end{definition}

\begin{remark}
The Yang-Baxter equation is also often 
called \textbf{star-triangle equation}, due to the 
shapes of diagrams that represent this equation 
under some simplifications (See for instance 
Figure~\ref{figure.simplified.yang.baxter}). 
\end{remark}

\begin{lemma} \label{lemma.commuting.matrices}
If there exists $Q$ an invertible 
north east local matrix function such that 
$(R,R',Q)$ verifies the Yang-Baxter equation, 
then for all $N$: 
\begin{enumerate}
\item The $N$th monodromy matrices $M_N$ 
and $M'_N$ of $R$ and $R'$ verify: 
\[M_N \otimes M'_N = Q M'_N \otimes M_N Q^{-1}.\]
\item 
As a direct consequence the $N$th transfer matrices associated to 
$R$ and $R'$ commute. 
\end{enumerate}
\end{lemma}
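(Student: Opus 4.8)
The plan is to prove part (1) by induction on $N$, using the Yang-Baxter equation as the base case, and then deduce part (2) by taking traces and using cyclicity. The statement $M_N \otimes M'_N = Q\, M'_N \otimes M_N\, Q^{-1}$ should be read as an identity of $4 \times 4$ matrices for each pair $\vec{t}, \vec{v} \in \{0,1\}^*_N$, where on the right side the product $M'_N \otimes M_N$ is the matrix $\prod_{k=1}^N R \circ R'(\vec{t}_k, \vec{v}_k)$ in the notation introduced before Lemma~\ref{lemma.product.monodromy.matrices}, and $Q$ acts on the auxiliary $\mathbb{C}^2 \otimes \mathbb{C}^2$ factor.

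For the base case $N = 1$, the identity $M_1 \otimes M'_1(\vec{t},\vec{v})\, Q = Q\, M'_1 \otimes M_1(\vec{t},\vec{v})$ reads $R' \circ R(\vec t_1, \vec v_1)\, Q = Q\, R \circ R'(\vec t_1, \vec v_1)$, which is precisely the Yang-Baxter equation for the triple $(R, R', Q)$. For the inductive step, I would write $M_{N+1} \otimes M'_{N+1}(\vec t, \vec v) = \bigl(M_N \otimes M'_N(\vec t', \vec v')\bigr) \cdot R' \circ R(\vec t_{N+1}, \vec v_{N+1})$, where $\vec t', \vec v'$ are the length-$N$ prefixes — this factorization is immediate from the definition $M_N \otimes M'_N = \prod_{k=1}^N R' \circ R(\vec t_k, \vec v_k)$. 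Applying the induction hypothesis to the first factor gives $Q\, \bigl(M'_N \otimes M_N(\vec t', \vec v')\bigr)\, Q^{-1} \cdot R' \circ R(\vec t_{N+1}, \vec v_{N+1})$, and then I invoke the $N=1$ Yang-Baxter relation $Q^{-1} R' \circ R(\vec t_{N+1}, \vec v_{N+1}) = R \circ R'(\vec t_{N+1}, \vec v_{N+1})\, Q^{-1}$ to slide $Q^{-1}$ past the last local factor, yielding $Q\, \bigl(M'_N \otimes M_N(\vec t', \vec v')\bigr) \cdot R \circ R'(\vec t_{N+1}, \vec v_{N+1})\, Q^{-1} = Q\, \bigl(M'_{N+1} \otimes M_{N+1}(\vec t, \vec v)\bigr)\, Q^{-1}$, completing the induction.

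For part (2), apply $\mathrm{Tr}$ (the trace over the auxiliary $\mathbb{C}^2 \otimes \mathbb{C}^2$ space) to both sides of the identity from part (1). By Lemma~\ref{lemma.product.monodromy.matrices}, $\mathrm{Tr}\bigl(M_N \otimes M'_N(\vec t, \vec v)\bigr) = (T_N . T'_N)[\vec t, \vec v]$, and symmetrically $\mathrm{Tr}\bigl(M'_N \otimes M_N(\vec t, \vec v)\bigr) = (T'_N . T_N)[\vec t, \vec v]$. Since $\mathrm{Tr}(Q A Q^{-1}) = \mathrm{Tr}(A)$ by cyclicity of the trace, we get $(T_N . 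T'_N)[\vec t, \vec v] = (T'_N . T_N)[\vec t, \vec v]$ for all $\vec t, \vec v$, hence $T_N T'_N = T'_N T_N$.

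The main obstacle is bookkeeping rather than conceptual: one must be careful about which tensor slot $Q$ acts on and that the Kronecker-product structure is compatible with multiplication, i.e. that $(A \otimes B)(A' \otimes B')$ arising from composing the $R' \circ R$ blocks genuinely factors as the product of monodromy matrices in each slot — this is exactly the content of the manipulation in the proof of Lemma~\ref{lemma.product.monodromy.matrices} and should be cited rather than re-derived. A secondary subtlety is ensuring that $Q$ being a fixed matrix independent of $N$ is what makes the sliding argument work uniformly; the Yang-Baxter equation is local (it only involves single $R$, $R'$ factors and the one $Q$), so no compatibility condition beyond invertibility of $Q$ is needed.
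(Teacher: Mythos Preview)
Your proof is correct and follows essentially the same approach as the paper. The only cosmetic difference is that the paper presents part (1) as a one-line telescoping product---replacing each factor $R'\circ R(\vec t_k,\vec v_k)$ by $Q\cdot R\circ R'(\vec t_k,\vec v_k)\cdot Q^{-1}$ and letting the intermediate $Q^{-1}Q$ pairs cancel---whereas you unfold the same telescoping as an explicit induction on $N$; part (2) is handled identically in both via Lemma~\ref{lemma.product.monodromy.matrices} and cyclicity of the trace.
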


\begin{proof}

\begin{enumerate}
\item 
Since $(R,R',Q)$ 
verifies Yang-Baxter equation, for all $\vec{t},\vec{v}$:

\begin{align*}
 M_N \otimes M'_N (\vec{t},\vec{v}) &  = 
\prod_{k=1}^{N}
Q. R \circ R' (\vec{v}_k,\vec{t}_k).Q^{-1}\\
& = Q \left(\prod_{k=1}^{N}
R \circ R' (\vec{v}_k,\vec{t}_k)\right)Q^{-1}\\
& = Q \left( M'_N \otimes M_N (\vec{t},\vec{v})\right) Q^{-1} 
\end{align*}
This can be re-written directly 
\[M_N \otimes M'_N = Q M'_N \otimes M_N Q^{-1}\]
\item 
We have the following equalities, in virtue of 
Lemma~\ref{lemma.product.monodromy.matrices}:
\[(T_N . T'_N)[\vec{t},\vec{v}] = Tr (M_N \otimes M'_N (\vec{t},\vec{v})),\]
\[(T'_N . T_N)_{\vec{t},\vec{v}} = Tr (M'_N \otimes M_N (\vec{t},\vec{v})).\]

We deduce directly that 
\[T_N.T'_N [\vec{t},\vec{v}] = T'_N.T_N [\vec{t},\vec{v}].\]
\end{enumerate}

\end{proof}

\subsection{\label{subsection.yang.baxter.paths} Yang-Baxter paths}

In the following, we will search for paths 
of local matrix function $x \mapsto R^x$ 
such that for all $x,y$, there exists 
some invertible matrix $Q$ such that 
$(R^x,R^y,Q)$ verifies Yang-Baxter equation, 
with an additional constraint of 
rotational symmetry, which means that 
$Q$ corresponds to some $R^z$. 

\begin{definition}
A \textbf{Yang-Baxter path} of local matrix 
functions is an analytic function $x \mapsto R^x$ from $\mathbb{R}$ to $\mathcal{M}_4 (\mathbb{C})$, 
such that there exists some analytic 
function $\delta: \mathbb{R}^2 \mapsto \mathbb{R}$ for which for all $x,y$, 
$(R^x,R^y,Q^{\delta(x,y)})$ verifies Yang-Baxter equation,
where $Q^z$ denotes the north east local 
matrix that corresponds to $R^z$.
\end{definition}

The \textbf{Yang-Baxter path} of $N$th 
transfer matrices corresponding to $x \mapsto R^x$ is 
the analytic function
$x \mapsto T^x_N$, where for all 
$x$, $T^x_N$ is the transfer matrix 
constructed from $R^x$.

For all $u,v,w,w' \in \{0,1\}$,
we denote $a[u,v,w,w'] \in \mathcal{A}$
the symbol in the alphabet of the six 
vertex model whose south (resp. north, west, east) arrow is oriented towards 
north (resp. north, east, west) when $u=0$ (resp. $v=0, w=0, w'=0$), elso towards south (resp. south, west, east).
For instance, $a[0,1,1,0]$ is the symbol: 

\[\begin{tikzpicture}[scale=0.6]
\draw (0,0) rectangle (1,1);
\draw[-latex] (0.5,0) -- (0.5,0.5);
\draw[-latex] (0.5,1) -- (0.5,0.5);
\draw[-latex] (0.5,0.5) -- (0,0.5);
\draw[-latex] (0.5,0.5) -- (1,0.5);
\end{tikzpicture}\]

\begin{definition}
To a path of local matrix function 
$x \mapsto R^x$ we associate (in 
a bijective way) a \textbf{coefficient attribution function} $\xi: \mathbb{R} \times \mathcal{A} \mapsto 
\mathbb{C}$
such that for all $u,v,w,w' \in \{0,1\}$: 
\[\xi (x,a[u,v,w,w'])= R^x (u,v)[w,w'].\]
The number 
$\xi (x,a)$ for $x \in \mathbb{R}$ and $a \in \mathcal{A}$ is the \textbf{coefficient} of 
$a$ relative to $x$. 
\end{definition}

For a given Yang-Baxter path 
$x \mapsto R^x$, one can represent 
the Yang-Baxter equation for 
$(R^x,R^y,Q^z)$, where $z \equiv \epsilon(x,y)$
in a geometrical way, as on Figure~\ref{figure.simplified.yang.baxter}. 

\begin{figure}[h!]
\[\begin{tikzpicture}[scale=0.7]

\fill[gray!30] (-2.5,2) -- (-2,2.5) -- 
(-1.5,2) -- (-2,1.5) -- (-2.5,2) ; 

\fill[gray!30] (-0.35,2.65) -- (0.35,2.65) -- 
(0.35,3.35) -- (-0.35,3.35) -- (-0.35,2.65) ; 
\draw[dashed] (-0.35,2.65) -- (0.35,2.65) -- 
(0.35,3.35) -- (-0.35,3.35) -- (-0.35,2.65) ; 

\fill[gray!30] (-0.35,0.65) -- (0.35,0.65) -- 
(0.35,1.35) -- (-0.35,1.35) -- (-0.35,0.65) ; 
\draw[dashed] (-0.35,0.65) -- (0.35,0.65) -- 
(0.35,1.35) -- (-0.35,1.35) -- (-0.35,0.65) ;

\draw[line width = 0.5mm] (0,0) -- (0,4);
\draw[line width = 0.5mm] (-1,3) -- (1,3);
\draw[line width = 0.5mm] (-1,1) -- (1,1);
\draw[line width = 0.5mm] (-3,1) -- (-1,3);
\draw[line width = 0.5mm] (-3,3) -- (-1,1);

\node at (-1.25,2) {$x$};
\node at (0.5,3.5) {$z$};
\node at (0.5,1.5) {$y$};

\fill[gray!30] (-6-0.35,-1+2.65) -- (-6+0.35,-1+2.65) -- 
(-6+ 0.35,-1+3.35) -- (-6-0.35,-1+3.35) -- (-6-0.35,-1+2.65) ; 
\draw[dashed] (-6-0.35,-1+2.65) -- (-6+0.35,-1+2.65) -- 
(-6+ 0.35,-1+3.35) -- (-6-0.35,-1+3.35) -- (-6-0.35,-1+2.65) ; 
\draw[line width=0.5mm] (-6.75,2) -- (-5.25,2);
\draw[line width=0.5mm] (-6,2.75) -- (-6,1.25);
\node at (-5.25,2.75) {$x$};
\node[scale=0.75] at (-6,1) {south};
\node[scale=0.75] at (-7.5,2) {west};

\draw[-latex,dashed] (-3,2) -- (-5,2);

\draw[dashed] (-2.5,2) -- (-2,2.5) -- 
(-1.5,2) -- (-2,1.5) -- (-2.5,2) ; 

\node at (0,-0.5) {$t$};
\node at (0,4.5) {$v$};
\node at (-3.5,1) {$w_1$};
\node at (-3.5,3) {$w_2$};

\node at (1.5,1) {$w'_1$};
\node at (1.5,3) {$w'_2$};

\node[scale=2] at (3,2) {$=$};

\begin{scope}[xshift=6cm]

\fill[gray!30] (1.5,2) -- (2,2.5) -- 
(2.5,2) -- (2,1.5) -- (1.5,2) ; 

\draw[dashed] (1.5,2) -- (2,2.5) -- 
(2.5,2) -- (2,1.5) -- (1.5,2) ;

\fill[gray!30] (-0.35,2.65) -- (0.35,2.65) -- 
(0.35,3.35) -- (-0.35,3.35) -- (-0.35,2.65) ; 
\draw[dashed] (-0.35,2.65) -- (0.35,2.65) -- 
(0.35,3.35) -- (-0.35,3.35) -- (-0.35,2.65) ; 

\fill[gray!30] (-0.35,0.65) -- (0.35,0.65) -- 
(0.35,1.35) -- (-0.35,1.35) -- (-0.35,0.65) ; 
\draw[dashed] (-0.35,0.65) -- (0.35,0.65) -- 
(0.35,1.35) -- (-0.35,1.35) -- (-0.35,0.65) ;

\node at (-1.5,1) {$w_1$};
\node at (-1.5,3) {$w_2$};

\node at (3.5,1) {$w'_1$};
\node at (3.5,3) {$w'_2$};

\node at (2.75,2) {$x$};
\node at (0.5,3.5) {$z$};
\node at (0.5,1.5) {$y$};

\draw[line width = 0.5mm] (0,0) -- (0,4);
\draw[line width = 0.5mm] (-1,3) -- (1,3);
\draw[line width = 0.5mm] (-1,1) -- (1,1);
\draw[line width = 0.5mm] (1,1) -- (3,3);
\draw[line width = 0.5mm] (1,3) -- (3,1);

\node at (0,-0.5) {$t$};
\node at (0,4.5) {$v$};

\end{scope}

\end{tikzpicture}\]
\caption{\label{figure.simplified.yang.baxter} 
A geometrical representation 
of Yang-Baxter equation when $x \mapsto R^x$ is 
a Yang-Baxter path. The two diagrams 
represent a sum of complex numbers. 
The one on the left represents the sum on $a_1,a_2,a_3$  
of the numbers $\xi(x,a_1) \xi (y,a_2) \xi (z,a_3)$ such that there exists an orientation 
of the diagram such that there is an arrow 
entering in an edge if and only if the corresponding 
letter is $0$ (else the arrow is outgoing), and one can observe in 
the gray square corresponding to $x,y,z$ 
(oriented according to these respective 
letters) the 
respective symbols $a_1,a_2,a_3$. An 
exemple of orientation for given letters 
$u,v,w_1,w_2,w'_1,w'_2$ is given on Figure~\ref{figure.example.yang.baxter.simplified}. 
The diagram on the right represents a sum 
defined in a similar way. 
}
\end{figure}

\begin{figure}[h!]

\[\begin{tikzpicture}[scale=0.5]

\draw[line width = 0.5mm] (0,0) -- (0,4);
\draw[line width = 0.5mm] (-1,3) -- (1,3);
\draw[line width = 0.5mm] (-1,1) -- (1,1);
\draw[line width = 0.5mm] (-3,1) -- (-1,3);
\draw[line width = 0.5mm] (-3,3) -- (-1,1);

\draw[line width = 0.4mm,-latex] (-2,2) -- (-2.75,2.75);
\draw[line width = 0.4mm,-latex] (-3,1) -- (-2.25,1.75);

\draw[line width = 0.4mm,-latex] (0,3) -- (0,3.75);

\draw[line width = 0.4mm,-latex] (0,0) -- (0,0.75);

\draw[line width = 0.4mm,-latex] (-1,1) -- (-0.25,1);

\draw[line width = 0.4mm,-latex] (0,3) -- (-0.75,3);

\draw[line width = 0.4mm,-latex] (0,1) -- (0,2.5);

\draw[line width = 0.4mm,-latex] (0,1) -- (0.75,1);

\draw[line width = 0.4mm,-latex] (1,3) -- (0.25,3);

\node at (0,-0.5) {$0$};
\node at (0,4.5) {$1$};
\node at (-3.5,1) {$0$};
\node at (-3.5,3) {$1$};

\node at (1.5,1) {$1$};
\node at (1.5,3) {$0$};
\end{tikzpicture}\]

\caption{\label{figure.example.yang.baxter.simplified}Example of orientation of the left diagram in 
Figure~\ref{figure.simplified.yang.baxter} according to fixed 
border letters.}
\end{figure}

\subsection{\label{section.strongly.symmetric} Strongly symmetric solutions of Yang-Baxter 
equation}

In this section, we present a criterion 
based on a symmetry condition 
on $R,R'$ that allows to ensure the existence 
of some $Q$ (not necessarily invertible) 
such that $(R,R',Q)$ verifies 
Yang-Baxter equation. Implicitely, this criterion 
relies also on the symmetries verified by 
the model.

\begin{definition}
Let $R$ be a local matrix function represented 
as in Fact~\ref{fact.local.matrix}. This 
function is said to be \textbf{strongly 
symmetric} when $e=b$, $f=a$, $c=d$ and 
$ab\neq 0$. In this case, we denote 
\[\delta(R) = \frac{a^2+b^2-c^2}{2ab}.\] 
Let $R',R''$ be two other strongly symmetric 
local matrix functions 
obtained respectively by replacing 
$a,b,c$ by $a',b',c'$ and $a'',b'',c''$. 
We denote $(S)_{R,R',R''}$ the system of equations: 
\[(S)_{R,R',R''}: \left\{\begin{array}{rcl}
a'ca'' & = & b'cb'' + c'ac''\\
a'bc'' & = & b'ac'' + cc'b''\\
c'ba'' & = & c'ab''+ b'cc''
\end{array}\right.\]

\end{definition}

\begin{lemma}
Let $R$,$R'$ and $R''$ three strongly 
symmetric local matrix functions and $Q$ 
the north east local matrix function corresponding 
to $R''$. Then $(R,R',Q)$ verifies Yang-Baxter 
equation if and only if $(S)_{R,R',R''}$ 
and $(S)_{R',R,R''}$ are verified.
\end{lemma}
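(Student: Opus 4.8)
The plan is to reduce the Yang-Baxter equation $R' \circ R(t,v).Q = Q.R \circ R'(t,v)$ — which, by the definition of $Q$ as a north-east matrix, is a system of scalar equations indexed by fixed boundary letters $t,v,w_1,w_2,w_1',w_2'$ — to the algebraic system $(S)_{R,R',R''}$ together with $(S)_{R',R,R''}$, by explicitly expanding both sides. First I would write out $R' \circ R(t,v) = \sum_{u} R(t,u)\otimes R'(u,v)$ and $R \circ R'(v,t) = \sum_{u} R'(v,u)\otimes R(u,t)$ as $4\times 4$ matrices using the parametrisations $(a,b,c,b,a)$, $(a',b',c',b',a')$ coming from strong symmetry (recall $e=b$, $f=a$, $d=c$), and do the same for $Q$, which by Property~\ref{property.north.east.local.matrix} applied to the strongly symmetric $R''$ is the matrix with diagonal block structure $\mathrm{diag}(a'')$, off-diagonal $2\times 2$ block $\left(\begin{smallmatrix} c'' & b'' \\ b'' & c'' \end{smallmatrix}\right)$, and bottom entry $a''$. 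The key point is that the zero pattern dictated by Definition~\ref{definition.local.matrix} kills almost all of the $16$ scalar equations: conservation of the number of curves across each vertex forces the nonzero entries to sit only where the total "$1$-count" on the two sides matches, so one is left with a handful of genuinely nontrivial identities.

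Next, I would organise the surviving equations by the value of $t+v$ (equivalently, by which block of the tensor product is active). When $t=v=0$ or $t=v=1$ the relevant products are diagonal and the equations reduce to trivial identities $a a' a'' = a'' a a'$ and similar, requiring nothing. The content is in the mixed sectors $t=0,v=1$ and $t=1,v=0$: there, matching the coefficient of each admissible output configuration $(w_1',w_2')$ against each admissible input $(w_1,w_2)$ produces exactly the three relations
\[
a'ca'' = b'cb'' + c'ac'',\qquad a'bc'' = b'ac'' + cc'b'',\qquad c'ba'' = c'ab''+ b'cc''
\]
from one sector — this is $(S)_{R,R',R''}$ — and the three relations obtained by swapping the roles of the primed and unprimed parameters from the other sector, which is $(S)_{R',R,R''}$. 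The bookkeeping is most transparent if one draws the star-triangle diagram of Figure~\ref{figure.simplified.yang.baxter} for each fixed boundary sextuple and reads off which internal orientations are admissible; each admissible internal state contributes one monomial $\xi(x,a_1)\xi(y,a_2)\xi(z,a_3)$, and equating the left and right diagrams term-classed by the shared boundary gives precisely the listed equations.

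For the converse direction, I would simply observe that the computation above is an equivalence: the Yang-Baxter equation, expanded in coordinates, is \emph{literally} the conjunction of all these scalar identities, the trivial ones plus $(S)_{R,R',R''}$ plus $(S)_{R',R,R''}$; so if the latter two systems hold, every scalar equation holds and hence the matrix equation holds. The main obstacle I anticipate is purely organisational rather than conceptual: one must be careful with the index conventions relating $R$, its tensor composition $R'\circ R$, and the north-east matrix $Q$ (the transposition-like reshuffling in Property~\ref{property.north.east.local.matrix}, where $b$ and $c$ swap places between $R$ and $Q$), since a single mislabelled index will scramble which of the three equations a given diagram produces. I would therefore fix, once and for all, the convention that columns and rows of the $4\times4$ matrices are ordered $(0,0),(0,1),(1,0),(1,1)$ as in the definitions, carry out the two $4\times 4$ matrix multiplications on each side in that order, and then read off the equality entry by entry; the strong-symmetry hypothesis $ab\neq 0$ (and the analogous nonvanishing for $R'$, $R''$) is not needed for this lemma itself but guarantees that the quantities $\delta(R)$ etc. appearing in the subsequent criterion are well defined.
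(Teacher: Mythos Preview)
Your overall strategy---write out $R'\circ R$, $R\circ R'$ and $Q$ explicitly as $4\times4$ (or $8\times 8$) matrices and equate entries---is exactly what the paper does, so the approach is right. However, your prediction about \emph{which} blocks carry the content is wrong, and following it literally would lead you astray.

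You claim that for $t=v$ the matrices $R'\circ R(t,v)$ are diagonal and the equations are trivial. They are not: already $R'\circ R(0,0)=R(0,0)\otimes R'(0,0)+R(0,1)\otimes R'(1,0)$ has the off-diagonal entry $cc'$ at position $((1,0),(0,1))$, coming from a curve that enters on the west side of the bottom cell, passes upward through the intermediate level $u=1$, and exits east from the top cell even though $t=v=0$. The paper's first worked example is precisely this north-west block, and it already yields the nontrivial relation $a'bc''=b'ac''+cc'b''$ (equation~2 of $(S)_{R,R',R''}$). Likewise the south-east block $t=v=1$ is not diagonal and gives the same equation again.

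Relatedly, your proposed split---``$(S)_{R,R',R''}$ from one mixed sector, $(S)_{R',R,R''}$ from the other''---does not match what actually happens. If you carry out the computation with the paper's convention $\overline{Q}\cdot(R'\circ R)=(R\circ R')\cdot\overline{Q}$, the off-diagonal blocks $(t,v)=(0,1)$ and $(1,0)$ each give equations~1 and~3 of $(S)_{R,R',R''}$, and the diagonal blocks give equation~2; the system $(S)_{R',R,R''}$ does not arise as a separate set of constraints from a different sector. (The paper in fact concludes that the Yang-Baxter equation reduces to $(S)_{R,R',R''}$ alone.) So when you sit down to do the bookkeeping, do not expect the clean sector-by-sector separation you anticipated; instead, expect redundancy across blocks, with all four contributing pieces of the same three-equation system.
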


 \begin{proof}

\begin{itemize}

\item \textbf{Expression of the matrices 
involved:}

Let us denote
\[R= \left( \begin{array}{cccc}
a & 0 & 0 & 0 \\
0 & b & c & 0 \\
0 & c & b & 0 \\
0 & 0 & 0 & a \\
\end{array} \right), \quad R'= \left( \begin{array}{cccc}
a' & 0 & 0 & 0 \\
0 & b' & c' & 0 \\
0 & c' & b' & 0 \\
0 & 0 & 0 & a' \\
\end{array} \right).\]

Let us consider $Q$ a north east local 
matrix function represented as: 
\[Q = \left( \begin{array}{cccc}
a'' & 0 & 0 & 0 \\
0 & c'' & b'' & 0 \\
0 & b'' & c'' & 0 \\
0 & 0 & 0 & a'' \\
\end{array} \right),\]

One can represent $R' \circ R$ by 
a matrix in $\mathcal{M}_8 (\mathbb{C})$: 

\[R' \circ R = 
\left( \begin{array}{cc} R(0,0) 
\otimes R' (0,0) + R(0,1) 
\otimes R' (1,0) & R(0,0) 
\otimes R' (0,1) + R(0,1) 
\otimes R' (1,1)\\
R(1,0) 
\otimes R' (0,0) + R(1,1) 
\otimes R' (1,0) & R(1,0) 
\otimes R' (0,1) + R(1,1) 
\otimes R' (1,1)\\\end{array} 
\right).\] 

After direct computation:

\[\begin{tikzpicture}
\node at (0,0) {$
R' \circ R = \left( \begin{array}{cccccccc} aa' & 0 & 0 & 0 & 0 & 0 & 0 & 0  \\
 0 & a b' & 0 & 0 & a c' & 0 & 0 & 0 \\
 0 & c c' &  ba' & 0 & cb' & 0 & 0 & 0 \\
  0 & 0 & 0 & b b' & 0 & ca' & b c' & 0 \\
  
  0  & bc' & c a' & 0 & b b'& 0& 0 & 0  \\
   0 & 0 & 0 & c b' & 0 & b a' & c c'& 0 \\
     0 & 0 & 0 & a c'& 0 & 0 & a b'& 0  \\
       0 & 0 & 0 & 0 & 0 & 0 & 0 & aa' \\
\end{array} 
\right)$}; 
\draw[dashed,color=gray!95] (-2.4,0.025) -- (3.8,0.025); 
\draw[dashed,color=gray!95] (0.7,1.8) -- (0.7,-1.8);
\end{tikzpicture}.\] 
We obtain $R \circ R'$ by exchanging $a$ (resp. 
$b$,$c$) with $a'$ (resp. $b'$, $c'$).

\item \textbf{Reduction of Yang-Baxter 
equation to $(S)_{R,R',R''}$:}

The Yang-Baxter equation on $(R,R',Q)$ tells 
that $\overline{Q}.R'\circ R = R\circ R' \overline{Q}$.

When writing down the equations obtained 
writing Yang-Baxter equation block by 
block, we obtain a lot of trivial equations. 

For instance, the north west block equation 
is written: 
\[\left( \begin{array}{cccc}
a'' & 0 & 0 & 0 \\
0 & c'' & b'' & 0 \\
0 & b'' & c'' & 0 \\
0 & 0 & 0 & a''
\end{array} \right). 
\left( \begin{array}{cccc}
aa' & 0 & 0 & 0 \\
0 & ab' & 0 & 0 \\
0 & cc' & ba' & 0 \\
0 & 0 & 0 & bb'
\end{array} \right) = \left( \begin{array}{cccc}
aa' & 0 & 0 & 0 \\
0 & a'b & 0 & 0 \\
0 & cc' & b'a & 0 \\
0 & 0 & 0 & bb'
\end{array} \right). \left( \begin{array}{cccc}
a'' & 0 & 0 & 0 \\
0 & c'' & b'' & 0 \\
0 & b'' & c'' & 0 \\
0 & 0 & 0 & a''
\end{array} \right)
 \]
 which is equivalent to:
 \[\left( \begin{array}{cccc}
aa'a'' & 0 & 0 & 0 \\
0 & ab'c''+cc'b'' & ba'b'' & 0 \\
0 & b''ab'+c''cc' & c''ba' & 0 \\
0 & 0 & 0 & bb'a''
\end{array} \right) = \left( 
\begin{array}{cccc}
aa'a'' & 0 & 0 & 0 \\
0 & a'bc'' & a'bb'' & 0 \\
0 & b''ab'+c''cc' & cc'b'' + b'ac'' & 0 \\
0 & 0 & 0 & bb'a''
\end{array} \right),\]
which reduces to the equation $ba'c'' = 
ab'c'' + cc'b''$.

After simplification of the other equations, the Yang-Baxter equation reduces 
to: 

\[(S)_{R,R',R''}: \left\{\begin{array}{rcl}
a'ca'' & = & b'cb'' + c'ac''\\
a'bc'' & = & b'ac'' + cc'b''\\
c'ba'' & = & c'ab''+ b'cc''
\end{array}\right.\]
\end{itemize}

\end{proof}

\begin{lemma}
Let $R$ and $R'$ be two strongly symmetric 
local matrix functions such that $\delta(R)=\delta(R')$. There 
exists $Q$ associated to a strongly symmetric local 
matrix function such that $(R,R',Q)$ 
verifies the Yang-Baxter equation.
\end{lemma}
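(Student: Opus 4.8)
The plan is to invoke the preceding lemma to turn the Yang--Baxter equation for $(R,R',Q)$ into a polynomial system in the parameters $(a'',b'',c'')$ of a strongly symmetric $R''$ (with $Q$ its associated north--east local matrix), and then to recognise $\delta(R)=\delta(R')$ as exactly the condition under which that system has a solution with $a''b''\neq0$.

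Write $R$ and $R'$ as in Fact~\ref{fact.local.matrix}, with parameters $(a,b,c)$ and $(a',b',c')$. By the preceding lemma it suffices to produce $(a'',b'',c'')\in\mathbb{C}^{3}$ with $a''b''\neq0$ solving $(S)_{R,R',R''}$. The decisive observation is that every equation of $(S)_{R,R',R''}$ is linear and homogeneous in $(a'',b'',c'')$, so the triple must lie in the kernel of a $3\times3$ matrix $M$ whose entries are quadratic monomials in $a,b,c,a',b',c'$; moreover, by the symmetry $R\leftrightarrow R'$ the companion system $(S)_{R',R,R''}$ is governed by a matrix of the same shape with $\delta(R),\delta(R')$ interchanged. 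A short computation of the determinant gives
\[
\det M \;=\; 2\,aba'b'cc'\,\bigl(\delta(R)-\delta(R')\bigr),
\]
where the definition of $\delta$ is used for the last simplification. Hence $\delta(R)=\delta(R')$ forces $\det M=0$, so $\ker M\neq\{0\}$ and there is a nonzero $(a'',b'',c'')$ solving $(S)_{R,R',R''}$.

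It then remains to choose this solution with $a''b''\neq0$, i.e.\ so that $R''$ is genuinely strongly symmetric. When $\Delta:=\delta(R)=\delta(R')\notin\{-1,1\}$ this is transparent from the trigonometric parametrisation: fix $\mu$ with $\cos\mu=\Delta$ and $\sin\mu\neq0$, and note that any strongly symmetric local matrix function with $\delta=\Delta$ is, up to an overall scalar, of the form $a:b:c=\sin(\mu+\lambda):\sin\lambda:\sin\mu$ for a suitable $\lambda$ --- indeed $\cos\lambda=(a-\Delta b)/c$, $\sin\lambda=b\sin\mu/c$ solves this, the consistency relation $(a-\Delta b)^{2}+b^{2}\sin^{2}\mu=c^{2}$ being precisely $\delta(R)=\Delta$. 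Writing $R\leftrightarrow\lambda$, $R'\leftrightarrow\lambda'$ and taking $R''\leftrightarrow\lambda''=\lambda-\lambda'$ (so that $a'':b'':c''=\sin(\mu+\lambda-\lambda'):\sin(\lambda-\lambda'):\sin\mu$), all equations of $(S)_{R,R',R''}$ collapse to instances of the identity $\sin(\alpha+\gamma)\sin(\beta+\gamma)-\sin\alpha\sin\beta=\sin\gamma\sin(\alpha+\beta+\gamma)$, and $a''b''\neq0$ holds as soon as $\lambda\not\equiv\lambda'$ and $\lambda-\lambda'\not\equiv-\mu\pmod{\pi}$. The leftover degenerate configurations --- $\Delta\in\{-1,1\}$, $c=0$ or $c'=0$, and $R,R'$ proportional up to the sign of $c$ --- are handled directly; for instance if $c=0\neq c'$ then $(S)_{R,R',R''}$ forces $c''=0$ and $a'a''=b'b''$, for which $(a'',b'',c'')=(b',a',0)$ works.

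The computation of $\det M$ and the verification of the trigonometric identities are routine; the real work, and the main obstacle, is the last step --- guaranteeing a solution of $(S)_{R,R',R''}$ with $a''b''\neq0$, which forces the case analysis above and is genuinely delicate exactly when $R$ and $R'$ are (nearly) proportional.
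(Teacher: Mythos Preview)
Your core argument coincides with the paper's: both view $(S)_{R,R',R''}$ as a homogeneous linear system in $(a'',b'',c'')$, compute the determinant of the coefficient matrix, and identify its vanishing with the condition $\delta(R)=\delta(R')$. The paper's proof stops exactly there, concluding ``the system $(S)_{R,R',R''}$ has a solution'' without verifying that any nonzero solution has $a''b''\neq 0$ (so that $R''$ is actually strongly symmetric) and without discussing the companion system $(S)_{R',R,R''}$.

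Your proposal goes further precisely on these two points. The trigonometric parametrisation you invoke is the same one the paper uses later (Section~\ref{section.trigonometric.path}), and it does produce an explicit $R''$ with $a''b''\neq 0$ in the generic regime; your residual case analysis for $\Delta\in\{\pm 1\}$, $c=0$, or $R\propto R'$ is the honest bookkeeping the paper simply omits. So your argument is not a different route but a completion of the paper's, and the ``delicate'' step you flag is real: it is exactly what the paper's proof leaves unaddressed. One small remark: your sentence about the companion system (``governed by a matrix of the same shape with $\delta(R),\delta(R')$ interchanged'') only shows that $(S)_{R',R,R''}$ \emph{also} has a nontrivial kernel, not that the \emph{same} $(a'',b'',c'')$ lies in both kernels; the trigonometric choice does satisfy both (this is what the verification in Section~\ref{section.path.algebraic.coordinate} amounts to), so you could simply cite that instead of the symmetry remark.
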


\begin{proof}
Let $Q$ be a north 
east local matrix function and $R''$ the local matrix function corresponding 
to $Q$. The system $(S)_{R,R',R''}$ 
has a solution in $a'',b'',c''$ 
if and only if the determinant  
\[\left|\begin{array}{ccc} -a'c & b'c & c'a \\0 & cc' & (b'a-ba') \\
-c'b & c'a & b'c \end{array} \right|= 
- a'c( b'c' c^2 - c'a(b'a-ba')) - c'b (b'c(b'a-ba') - c'^2 a c) \]
is zero, meaning 
\[ a'b' c' c^3 - a'b'c'c a^2 + abcc'a'^2 + abcc'b'^2 - a'b'c'cb^2 - abcc'^3 = 0\]
As a consequence, factoring by $2aa'bb'cc'$, 
this is equivalent to 
\[cc' (\delta(R')-\delta(R))= 0,\]
which is true by hypothesis. 
As a consequence, the system $(S)_{R,R',R''}$ has a solution.
\end{proof}

\section{\label{section.algebraic.bethe.ansatz} Algebraic Bethe ansatz}

Let us consider some analytic function $x \mapsto R^x$ 
from $\mathbb{C}$ to $\mathcal{M}_4 (\mathbb{C})$
whose images are strongly symmetric local matrix 
functions
 and for all $x \in \mathbb{C}$, denote:
\[R^x = \left( \begin{array}{cccc}
a(x) & 0 & 0 & 0 \\
0 & b(x) & c(x) & 0 \\
0 & c(x) & b(x) & 0 \\
0 & 0 & 0 & a(x) 
\end{array}\right).\]
Let us denote $Q^x$ the north east local 
matrix function associated to $R^x$: 

\[Q^x = \left( \begin{array}{cccc}
a(x) & 0 & 0 & 0 \\
0 & c(x) & b(x) & 0 \\
0 & b(x) & c(x) & 0 \\
0 & 0 & 0 & a(x) 
\end{array}\right).\]

We assume that: 

\begin{enumerate}
 
\item there exists $\delta : \mathbb{C}^2 \mapsto \mathbb{C}$ such that for all $x,y$, 
$(R^x,R^y,Q^{\delta(x,y)})$ verifies Yang-Baxter 
equation 
\item $b(\delta(x,y)) = 0 \Leftrightarrow b(\delta(y,x))=0.$
\item for all $x,y,z \in \mathbb{C}$, $  
\delta(\delta(x,y),
\delta(x,z)) = \delta(y,z)$.
\end{enumerate}

\noindent Let us denote $\mathcal{D}$ 
the set $(x,y) \in \mathbb{C}^2$ such that 
$b(\delta(x,y)) \neq 0$. 
We also 
assume that for all $(x,y) \in \mathcal{D}$,
\[\frac{c(\delta(x,y))}{b(\delta(x,y))}
= - \frac{c(\delta(y,x))}{b(\delta(y,x))}.\]

We describe in 
this section an ansatz (meaning a candidate eigenvector) for the eigenequation 
of the transfer matrix associated to any 
$R^x$. In this version of the Algebraic Bethe ansatz, we extracted the conditions 
on the parameters of the path of local matrix 
functions allowing the ansatz, in order to 
understand how the ansatz works in 
general, and in the perspective of 
extending it to other models. 
In Section~\ref{section.commutation.relations.algebraic.ansatz}, 
we derive some commutation relations on 
the components of the monodromy matrices 
that come from Yang-Baxter equation.
In Section~\ref{section.algebraic.ansatz}, 
we state the Algebraic Bethe ansatz for 
the path $x \mapsto R^x$ using these commutation 
relations. In the end, we apply this method 
to a particular \textbf{trigonometric} 
Yang-Baxter path [Section~\ref{section.from.algebraic.to.coordinate}]. \bigskip

Let us recall that for all 
$x,y$, $z=\delta(x,y)$, 
Yang-Baxter equation is reduced to 

\[\left\{\begin{array}{rcl}
a(x)c(y)a(z) & = & b(x)c(y)b(z) + c(x)a(y)c(z)\\
a(x)b(y)c(z) & = & b(x)a(y)c(z) + c(x)c(y)b(z)\\
c(x)b(y)a(z) & = & c(x)a(y)b(z)+ b(x)c(y)c(z)
\end{array}\right.\]

\subsection{\label{section.commutation.relations.algebraic.ansatz} Commutation of Lax matrices components} 

\begin{notation}
We denote $M^x_N$ the $N$th monodromy 
matrix associated to $R^x$, represented as:
\[M^x_{N} = \left( \begin{array}{cc} 
M^x_{N} (0,0) & M^x_{N} (0,1) \\
M^x_{N} (1,0) & M^x_{N} (1,1)
\end{array}\right) = \left( \begin{array}{cc} 
A_N (x) & C_N (x) \\
B_N (x) & D_N (x) 
\end{array}\right).\]
\end{notation}

\begin{notation}
Let us also denote for all 
$(x,y) \in \mathcal{D}$:
\[\lambda(x,y) = \frac{a(\delta(x,y))}{b(\delta(x,y))} \qquad \text{and} \qquad \mu(x,y) = -\frac{c(\delta(x,y))}{b(\delta(x,y))}.\]
\end{notation}

\begin{fact}
The function $\mu : (x,y) \mapsto \mu(x,y)$ is antisymetric: 
for all $x,y$, $\mu(x,y)=-\mu(y,x)$, by hypothesis on 
the antisymmetry of $(c/b) \circ \delta$.
\end{fact}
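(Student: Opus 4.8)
The plan is to unwind the definition of $\mu$ and invoke the two standing hypotheses directly; no substantial computation is required. First I would check that the domain of $\mu$ is symmetric under exchanging its two arguments, so that the claimed identity is even meaningful: if $(x,y)\in\mathcal{D}$, then by definition $b(\delta(x,y))\neq 0$, hence by assumption~$2$ we also have $b(\delta(y,x))\neq 0$, i.e. $(y,x)\in\mathcal{D}$. Thus $\mu(x,y)$ and $\mu(y,x)$ are simultaneously defined, and the hypothesis on the antisymmetry of $(c/b)\circ\delta$ applies to the pair $(x,y)$.

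Next, for $(x,y)\in\mathcal{D}$ I would simply substitute the definition $\mu(x,y)=-c(\delta(x,y))/b(\delta(x,y))$ and apply the antisymmetry hypothesis $c(\delta(x,y))/b(\delta(x,y)) = -\,c(\delta(y,x))/b(\delta(y,x))$ to get
\[
\mu(x,y)=-\frac{c(\delta(x,y))}{b(\delta(x,y))}=\frac{c(\delta(y,x))}{b(\delta(y,x))}=-\mu(y,x),
\]
which is the desired equality.

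The only point that deserves a word of care is precisely the well-definedness addressed in the first step: one must know $(y,x)\in\mathcal{D}$ before writing down $\mu(y,x)$, and this is exactly what assumption~$2$ provides. Beyond that there is no obstacle — the statement is an immediate consequence of the sign convention in the definition of $\mu$ combined with the standing antisymmetry assumption on $(c/b)\circ\delta$.
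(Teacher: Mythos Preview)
Your proof is correct and follows the same approach as the paper, which in fact does not give a separate proof at all: the justification is already contained in the statement of the Fact (``by hypothesis on the antisymmetry of $(c/b)\circ\delta$''). Your added check that $\mathcal{D}$ is stable under swapping arguments, via assumption~2, is a worthwhile detail that the paper leaves implicit.
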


\begin{lemma}
\label{lemma.commutation.relations.monodromy}
For all $x,y$ 
such that $b(\delta(x,y)) \neq 0$:
\[\left\{ \begin{array}{rcl}
A_N (x) B_N (y
) & = &  \lambda(x,y) B_N (y) 
A_N (x) + \mu(x,y) B_N (x) A_N (y) \\
D_N (x) B_N (y
) & = & \lambda(y,x) B_N (y) 
D_N (x) +\mu(y,x) B_N (x) D_N (y)  \\
B_N (x) A_N (y) & = & \lambda(x,y) A_N (y) B_N(x) +\mu(x,y) A_N (x) B_N (y)\\
\end{array}\right.\]
Moreover, for all $x,y$ such that $a(\delta(x,y)) \neq 0$: 
\[\left\{\begin{array}{rcl} 
B_N (x) B_N (y) & = & B_N (y) B_N (x)\\
A_N (x) A_N (x) & = & A_N (y) A_N (x).
\end{array}\right.\]
\end{lemma}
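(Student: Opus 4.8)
The plan is to recognize the five identities as the standard fundamental commutation relations attached to the Yang--Baxter equation, i.e.\ to read them off, one auxiliary-space component at a time, from a single $R$-matrix identity. By the first assumption the triple $(R^x,R^y,Q^{\delta(x,y)})$ verifies the Yang--Baxter equation. Iterating that equation over the $N$ sites exactly as in the proof of Lemma~\ref{lemma.commuting.matrices} (or, equivalently, transposing that lemma and using the reflection symmetry $M^x_N(w,w')^{\mathrm T}=P\,M^x_N(w',w)\,P$ of the monodromy blocks, where $P$ reverses the order of the $N$ tensor factors of $\Omega_N$ --- the algebraic shadow of the coincidence of the entropies of $X^s_N$ and $\overline X^s_N$ recalled in the Remark after the transfer-matrix definition --- together with the transpose- and swap-symmetries of $Q^{\delta(x,y)}$) produces the identity $Q^{\delta(x,y)}\big(M^x_N\otimes M^y_N\big)=\big(M^y_N\otimes M^x_N\big)Q^{\delta(x,y)}$, an equality of $4\times 4$ matrices over the two auxiliary copies of $\mathbb C^2$ whose entries are operators on $\Omega_N$. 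Here $Q^{\delta(x,y)}$ is the strongly symmetric matrix displayed at the beginning of Section~\ref{section.algebraic.bethe.ansatz} with $a,b,c$ evaluated at $\delta(x,y)$, and by the computation in the proof of Lemma~\ref{lemma.product.monodromy.matrices} the $\big((w,w'),(v,v')\big)$-entry of $M^x_N\otimes M^y_N$ is the composite $M^x_N(w,v)\,M^y_N(w',v')$, i.e.\ one of $A_N(x),B_N(x),C_N(x),D_N(x)$ composed with one of $A_N(y),\dots,D_N(y)$.

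Comparing the sixteen entries, most are trivial. The entries $\big((0,0),(0,0)\big)$ and $\big((1,1),(0,0)\big)$ read $a(\delta(x,y))\,A_N(x)A_N(y)=a(\delta(x,y))\,A_N(y)A_N(x)$ and $a(\delta(x,y))\,B_N(x)B_N(y)=a(\delta(x,y))\,B_N(y)B_N(x)$; cancelling $a(\delta(x,y))$ --- exactly permitted when $a(\delta(x,y))\neq 0$ --- gives the last two assertions. The entries $\big((1,0),(0,0)\big)$ and $\big((0,1),(0,0)\big)$ give, after dividing by $b(\delta(x,y))\neq 0$ and recalling $\lambda(x,y)=a(\delta(x,y))/b(\delta(x,y))$, $\mu(x,y)=-c(\delta(x,y))/b(\delta(x,y))$, precisely $A_N(x)B_N(y)=\lambda(x,y)B_N(y)A_N(x)+\mu(x,y)B_N(x)A_N(y)$ and $B_N(x)A_N(y)=\lambda(x,y)A_N(y)B_N(x)+\mu(x,y)A_N(x)B_N(y)$. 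For the identity involving $D_N$ one applies the same to the triple $(R^y,R^x,Q^{\delta(y,x)})$ (admissible again by the first assumption); by the second assumption $b(\delta(y,x))\neq 0$, so $\lambda(y,x),\mu(y,x)$ are defined and one may divide by $b(\delta(y,x))$, and the entry $\big((1,1),(0,1)\big)$ of $Q^{\delta(y,x)}\big(M^y_N\otimes M^x_N\big)=\big(M^x_N\otimes M^y_N\big)Q^{\delta(y,x)}$ yields $D_N(x)B_N(y)=\lambda(y,x)B_N(y)D_N(x)+\mu(y,x)B_N(x)D_N(y)$. The antisymmetry of $(c/b)\circ\delta$ recorded just above is available for tidying this coefficient but is not needed, and the third assumption on $\delta$ plays no role here; the remaining entries are trivial or reproduce the above after relabelling.

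The algebra is routine; the real work is the bookkeeping. The main difficulty is to fix the orientation and transpose conventions consistently --- among the $\otimes$-notation of Section~\ref{section.from.local.to.transfer}, the trace-based definition of the transfer matrix, and the explicit matrices of Section~\ref{section.algebraic.bethe.ansatz} --- so that one is certain on which side $Q^{\delta(x,y)}$ multiplies and which $4\times 4$ entry corresponds to which product of monodromy components; this is precisely where the reflection symmetry between $X^s_N$ and $\overline X^s_N$ intervenes. A secondary point is that Lemma~\ref{lemma.commuting.matrices} is phrased with $(Q^{\delta(x,y)})^{-1}$ while $\det Q^{\delta(x,y)}=a(\delta(x,y))^2\big(c(\delta(x,y))^2-b(\delta(x,y))^2\big)$ may vanish on part of $\{b(\delta(x,y))\neq 0\}$; this is handled either by first establishing the five identities on the open dense locus where $a(\delta(x,y))\neq 0$ and $c(\delta(x,y))^2\neq b(\delta(x,y))^2$ and extending them to all of $\{b(\delta(x,y))\neq 0\}$ (resp.\ $\{a(\delta(x,y))\neq 0\}$) by analyticity of both sides in $(x,y)$ --- the components $A_N,\dots,D_N$ and the functions $\lambda,\mu$ being analytic where the relevant denominators do not vanish --- or by running the site-by-site iteration of the Yang--Baxter equation directly, which never inverts $Q^{\delta(x,y)}$ and produces the identities on the whole set at once.
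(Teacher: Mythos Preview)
Your approach is essentially the same as the paper's: write the global relation $M^x_N\otimes M^y_N\cdot Q^{\delta(x,y)}=Q^{\delta(x,y)}\cdot M^y_N\otimes M^x_N$ coming from Lemma~\ref{lemma.commuting.matrices}, display the $4\times 4$ operator-valued matrix $M^y_N\otimes M^x_N$, and read off the entries $((1,0),(0,0))$, $((1,1),(0,1))$ (followed by the swap $x\leftrightarrow y$), $((0,1),(0,0))$, $((1,1),(0,0))$ and $((0,0),(0,0))$, dividing by $b(\delta(x,y))$ or $a(\delta(x,y))$ as appropriate. Your remark that the iteration of the Yang--Baxter equation can be carried out without ever inverting $Q^{\delta(x,y)}$, so that the identities hold on the full sets $\{b(\delta(x,y))\neq 0\}$ and $\{a(\delta(x,y))\neq 0\}$ without appealing to analytic continuation, is a point the paper glosses over; apart from this and your explicit bookkeeping of the side on which $Q$ acts, the two arguments coincide.
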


\begin{proof}
\begin{enumerate}
\item \textbf{Equation on the monodromy matrices:} 

We know by Lemma~\label{lemma.commuting.matrices} for all $x,y$,
\[M^x_{N} \otimes M^y_{N}.Q^{\delta(x,y)} = 
Q^{\delta(x,y)}. M^y_{N} \otimes M^x_{N}.\]
For all $x,y$ the matrix $M^y_{N} \otimes M^x_{N}$
is:
\[ \begin{tikzpicture} 
\node at 
(0,0) {$\left( \begin{array}{cccc} 
A_N (y) A_N (x) & A_N (y) C_N (x) & C_N (y) A_N (x) & C_N (y) C_N (x) \\
A_N (y) B_N (x) & A_N (y) D_N (x) & C_N (y) B_N (x) & C_N (y) D_N (x) \\
B_N (y) A_N (x) & B_N (y) C_N (x) & D_N (y) A_N (x) & D_N (y) C_N (x) \\
B_N (y) B_N (x) & B_N (y) D_N (x) & D_N (y) B_N (x) & D_N (y) D_N (x) \\
\end{array}\right)$};
\node[gray!99] at (-3.75,1.25) {(0,0)};
\node[gray!99] at (-1.25,1.25) {(0,1)};
\node[gray!99] at (1,1.25) {(1,0)};
\node[gray!99] at (3.5,1.25) {(1,1)};
\node[gray!99] at (5.5,0.625) {(0,0)};
\node[gray!99] at (5.5,0.185) {(0,1)};
\node[gray!99] at (5.5,-0.24) {(1,0)};
\node[gray!99] at (5.5,-0.64) {(1,1)};
\end{tikzpicture}\]

\item \textbf{On coefficients $((1,0),(0,0))$:}

In particular, from the equality on the 
coefficient $((1,0),(0,0))$ of this last 
equality:
\[a(\delta(x,y)). B_N (y) 
A_N (x) = b(\delta(x,y)). A_N (x) B_N (y
) + c (\delta(x,y)). B_N (x) A_N (y)\]

By dividing by $b(\delta(x,y))$, 
we get the first equation stated 
in the lemma.

\item \textbf{Coefficients $((1,1),(0,1))$:} we obtain:

\[c(\delta(x,y)). B_N (y) D_N (x) + b(\delta(x,y)). D_N (y) B_N (x) = a(\delta(x,y)) . B_N (x) D_N (y).\]
After exchanging $x$ and $y$: 

\[ b(\delta(y,x)). D_N (x) B_N (y) = a(\delta(y,x)) . B_N (y) D_N (x) - c(\delta(y,x)). B_N (x) D_N (y) .\]

We obtain the second equation stated in the lemma
by dividing by $b(\delta(x,y))$.
\item The third equality is 
obtained from coefficients $((0,1),(0,0))$. 
The second system is obtained from 
coefficients $((1,1),(0,0))$ and $((0,0),(0,0))$.
\end{enumerate}

\end{proof}

As a consequence, we have the following statement:

\begin{lemma} \label{lemma.commutation.relations} 
For all $N$, $1 \le n \le N$, $x \in \mathbb{C}$ and a finite sequence $(x_1, ... , x_n) \in 
\mathbb{C}^n$ of distinct numbers and distinct 
from $x$:
\[\begin{array}{rcl} 
\displaystyle{A_N (x).\left(\prod_{k=1}^n B_N(x_k)\right)} & = &
\displaystyle{\left( \prod_{k=1}^n \lambda (x,x_k) B_N(x_k) \right). A_N (x)} \\
& & + \displaystyle{
B_N(x)\left( \sum_{j=1}^n \mu (x,x_j).
\left( \prod_{k\neq j} \lambda (x_j,x_k) B_N(x_k)\right) A_N (x_j)\right)} \end{array}\]
A similar equality is verified in which $A$
is replaced by $D$ and in 
each copy of $\lambda$ and $\mu$ the 
two arguments are exchanged. Another 
similar equality is obtained by exchanging only
$B$ and $A$.
\end{lemma}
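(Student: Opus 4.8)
The plan is to prove the three assertions simultaneously by induction on $n$, the base cases $n=1$ being exactly the three relations of Lemma~\ref{lemma.commutation.relations.monodromy} (read with $x_1$ in the role of $y$, and noting that the empty products equal $1$ and the scalars $\mu(x,x_1)$ commute out). I will carry out the induction for the $A$-statement; the $D$-statement and the statement with $B$ and $A$ exchanged follow \emph{mutatis mutandis}, swapping the two arguments in every $\lambda$ and $\mu$ (resp. exchanging the roles of $B$ and $A$), the only input being the corresponding line of Lemma~\ref{lemma.commutation.relations.monodromy}.

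Before the induction I would isolate the scalar identity on which the whole bookkeeping hinges: for pairwise distinct $x,y,z$ with $b(\delta(\cdot,\cdot))\neq 0$ on the pairs involved,
\[
\bigl(\lambda(x,y)-\lambda(z,y)\bigr)\,\mu(x,z)\;=\;\mu(x,y)\,\mu(z,y),
\]
equivalently $\lambda(x,y)\mu(x,z)+\mu(x,y)\mu(y,z)=\lambda(z,y)\mu(x,z)$ after using the antisymmetry $\mu(y,z)=-\mu(z,y)$. This comes from the reduced Yang-Baxter equations displayed at the beginning of Section~\ref{section.algebraic.bethe.ansatz}: writing $\alpha_p=a(p)/b(p)$ and $\gamma_p=c(p)/b(p)$, the second reduced equation divided by $b(x)b(y)b(\delta(x,y))$ gives $\mu(x,y)(\alpha_y-\alpha_x)=\gamma_x\gamma_y$, and the third gives $\lambda(x,y)=\alpha_y-\gamma_y^2/(\alpha_y-\alpha_x)$; substituting these two expressions reduces the identity to a one-line algebraic cancellation.

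For the inductive step, assume the $A$-statement for $n-1$. I would peel off the first factor with the $n=1$ relation,
\[
A_N(x)\prod_{k=1}^n B_N(x_k)=\lambda(x,x_1)\,B_N(x_1)\Bigl(A_N(x)\prod_{k=2}^n B_N(x_k)\Bigr)+\mu(x,x_1)\,B_N(x)\Bigl(A_N(x_1)\prod_{k=2}^n B_N(x_k)\Bigr),
\]
apply the induction hypothesis to each of the two $(n-1)$-fold products, and normalise every resulting term using the commutativity $B_N(x_k)B_N(x_l)=B_N(x_l)B_N(x_k)$ of Lemma~\ref{lemma.commutation.relations.monodromy}. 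Three kinds of terms appear: the all-$\lambda$ term yields $\bigl(\prod_{k=1}^n\lambda(x,x_k)B_N(x_k)\bigr)A_N(x)$, the wanted term; the "$A_N(x_1)$" part of the second bracket yields precisely the $j=1$ summand of the claimed formula; and for $j\ge 2$ the two contributions combine, after pulling scalars out, into the coefficient $\bigl(\lambda(x,x_1)\mu(x,x_j)+\mu(x,x_1)\mu(x_1,x_j)\bigr)\prod_{k\neq j,\,k\ge 2}\lambda(x_j,x_k)$ in front of $B_N(x)\bigl(\prod_{k\neq j}B_N(x_k)\bigr)A_N(x_j)$, which by the scalar identity above (taken at $y=x_1$, $z=x_j$) equals $\mu(x,x_j)\prod_{k\neq j}\lambda(x_j,x_k)$, i.e. the $j$-th summand of the claimed formula. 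This closes the induction.

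The main obstacle I expect is the combinatorial bookkeeping of the inductive step — tracking which $\lambda$-factors carry which argument and using $B$-commutativity to bring every term into one common normal form — together with extracting and correctly orienting the scalar three-term identity from the reduced Yang-Baxter system; once that identity is available the remaining manipulations are routine. I would also flag the standing nondegeneracy conditions ($b(\delta(x,x_k))\neq 0$ and $a(\delta(x_k,x_l))\neq 0$ on the pairs actually used), which is where the hypothesis that $x,x_1,\dots,x_n$ are distinct is needed.
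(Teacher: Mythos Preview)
Your proposal is correct and follows essentially the same route as the paper: induction on $n$, peeling off $B_N(x_1)$ via the $n=1$ relation, applying the hypothesis to both resulting $(n-1)$-fold products, and merging the $j\ge 2$ contributions using the same three-term scalar identity $\lambda(x,x_1)\mu(x,x_j)+\mu(x,x_1)\mu(x_1,x_j)=\mu(x,x_j)\lambda(x_j,x_1)$. The one noteworthy difference is how you obtain this identity: the paper invokes the Yang-Baxter system for the shifted triple $(R^{\delta(z,x)},R^{\delta(z,y)},R^{\delta(x,y)})$, which requires the structural hypothesis $\delta(\delta(x,y),\delta(x,z))=\delta(y,z)$, whereas your derivation via the ratios $\alpha_p=a(p)/b(p)$ and $\gamma_p=c(p)/b(p)$ works directly from the reduced system for a single pair and sidesteps that assumption---a slightly more elementary path to the same key identity.
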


\begin{proof}

\begin{itemize}
\item \textbf{Auxiliary equality derived from 
Yang-Baxter equation:} 

For all $x,y,z$ all distinct:
\[ \mu (x,z) \lambda (z,y) = \lambda (x,y) \mu (x,z)
+ \mu (x,y) \mu (y,z).\]

This equation derives from 
the third equation 
of the system $(S)_{R,R',R''}$, 
where $(R,R',R'')$ is equal to $(R^{\delta(z,x)},R^{\delta(z,y)},
R^{\delta(\delta(z,x),\delta(z,y))})= 
(R^{\delta(z,x)},R^{\delta(z,y)},
R^{\delta(x,y)})$:

\[\begin{array}{rcl} c(\delta(z,x)) b(\delta(z,y)) a(\delta(x,y)) & = & c(\delta(z,x)) a(\delta(z,y)) b(\delta(x,y)) \\
& & + 
c(\delta(z,y)) c(\delta(x,y)) b(\delta(z,x))
\end{array}.\]

After dividing by $b(\delta(z,y))b(\delta(z,x))
b(\delta(\delta(z,x),\delta(z,y)))$:

\[\frac{a(\delta(x,y))}{b(\delta(x,y))} \frac{c(\delta(z,x))}{b(\delta(z,x))}
= \frac{c(\delta(z,x))}{b(\delta(z,x))} \frac{a(\delta(z,y))}{b(\delta(z,y))} 
+ \frac{c(\delta(z,y))}{b(\delta(z,y))} \frac{c(\delta(x,y))}{b(\delta(x,y))}\]

This can be rewritten:

\[\lambda (x,y) \mu (z,x)
=  \mu (z,x) \lambda(z,y) 
 - \mu(x,y) \mu (z,y).\]
We then use the antisymetry of $\mu$ and then multiply by $-1$ to get: 

\[\lambda (z,y)\mu (x,z) 
= \mu (x,z) \lambda(x,y) 
+ \mu (x,y) \mu (y,z).\]

\item \textbf{Proof of the statement 
by recursion:}

The statement of the lemma for $n=1$ 
derives directly from 
Lemma~\ref{lemma.commutation.relations.monodromy}.  Let us assume that it 
is verified for some $n \ge 1$,
and consider a real number $x$ 
and a sequence $(x_1, ... , x_{n+1})$.

\item \textbf{Application of recursion 
hypothesis on $1,x,(x_1)$:}

We first apply the equality for 
$x,(x_{1})$:
\[\begin{array}{rcl} 
\displaystyle{A_N (x).\left(\prod_{k=1}^{n+1} B_N(x_k)\right)} & = &
\displaystyle{\lambda (x,x_{1}) B_N(x_{1}). A_N (x). \left(\prod_{k=2}^{n+1} B_N(x_k)\right)} \\
& & + \displaystyle{
B_N(x) \mu (x,x_1).
 A_N (x_1). \left(\prod_{k=2}^{n+1} B_N(x_k)\right)} \end{array}\]
 
\item \textbf{Application of recursion 
hypothesis on $n$:}

We apply then the equality on 
$n,x,(x_2,...,x_{n+1})$ and 
$n, x_1 , (x_2, ... , x_{n+1})$ and obtain: 

\[\begin{array}{c}
\displaystyle{A_N (x).\left(\prod_{k=1}^{n+1} B_N(x_k)\right) = \lambda (x,x_{1}) B_N(x_{1}). \left( \prod_{k=2}^{n+1} \lambda (x,x_k) B_N(x_k) \right). A_N (x)}  \\ 
+ \displaystyle{\lambda (x,x_{1}) B_N(x_{1}). B_N (x) \left(
\sum_{j=2}^{n+1} \mu (x,x_j). 
\left( \prod_{\substack{k=2 \\ k \neq j}}^{n+1} \lambda (x_j,x_k) B_N(x_k)\right) A_N (x_j)\right)}
\\
 + \displaystyle{B_N (x) \mu (x,x_1).\left( \prod_{k=2}^{n+1} \lambda (x,x_k) B_N(x_k) \right). A_N (x_1)}\\
 + \displaystyle{B_N (x) \mu (x,x_1) 
 B_N (x_1). \left(
\sum_{j=2}^{n+1} \mu (x_1,x_j). 
\left( \prod_{\substack{k=2 \\ k \neq j}}^{n+1} \lambda (x_j,x_k) B_N(x_k)\right) A_N (x_j)\right)}\end{array} \]

\item \textbf{Application of the auxiliary 
equation:}
 
We group the second and last terms of 
the second member of this equality, and 
use the auxillary equality 
replacing $x,y,z$ by $x,x_1,x_j$: 

\[\lambda(x,x_1) \mu (x,x_j)
+ \mu (x,x_1) \mu (x_1,x_j) 
= \mu (x,x_j) \lambda (x_j,x_1)\]

Thus we have: 

\[\begin{array}{c}
\displaystyle{A_N (x).\left(\prod_{k=1}^{n+1} B_N(x_k)\right) = \lambda (x,x_{1}) B_N(x_{1}). \left( \prod_{k=2}^{n+1} \lambda (x,x_k) B_N(x_k) \right). A_N (x)}  \\ 
+ \displaystyle{ B_N(x_{1}). B_N (x) \left(
\sum_{j=2}^{n+1}\mu (x,x_j) \lambda(x,x_1) . 
\left( \prod_{\substack{k=2 \\ k \neq j}}^{n+1} \lambda (x,x_k) B_N(x_k)\right) A_N (x_j)\right)}
\\
 + \displaystyle{B_N (x) \mu (x,x_1).\left( \prod_{k=2}^{n+1} \lambda (x,x_k) B_N(x_k) \right). A_N (x_1)}\\
 \end{array} \]
 We then group the two last terms 
 in order to obtain the formula.
\end{itemize}
\end{proof}

\subsection{\label{section.algebraic.ansatz} Statement of the ansatz} 

The algebraic Bethe ansatz consists in
proposing eigenvectors in the 
space $\Omega_N$ for a
transfer matrix $T_N (x)$, $x \in \mathbb{C}$  , 
using the commutation relations on 
the components of the monodromy matrices 
proved in Section~\ref{section.commutation.relations.algebraic.ansatz}. 

\begin{notation}
In the following, for all $N \ge 1$, we denote 
$\boldsymbol{\nu}_N = \ket{0 ... 0}$ 
in the canonical basis of $\Omega_N$. This vector 
is often called \textbf{vacuum state} (since it is empty 
from $1$ symbols).
\end{notation}

Let us fix some $x \in \mathbb{C}$, some 
integer $n \le N$ and a sequence $\textbf{x}=(x_1,...,x_n) \in \mathbb{C}^n$ such that none 
of the numbers $b(\delta(x_j,x_k))$, 
$b(\delta(x,x_j))$, $j,k \le n$ is 
zero. 

\begin{notation} Let us denote $\psi_{n,N} (\textbf{x})$ 
the vector:
\[\psi_{n,N} (\textbf{x}) = (B_N (x_1)... B_N (x_n)).{\boldsymbol{\nu}}_N.\]
\end{notation}

\begin{theorem}[Algebraic Bethe ansatz]
\label{theorem.algebraic.bethe.ansatz}
Let us assume that for all $j$:
\[a(x_j)^N \prod_{\substack{k=1 \\ k \neq j}}^{n} \frac{a(\delta(x_j,x_k))}{b(\delta(x_j,x_k))}  = 
 b(x_j)^N \prod_{\substack{k=1 \\ k \neq j}}^{n}
\frac{a(\delta(x_k,x_j))}{b(\delta(x_k,x_j))}\]
Then 
$T_N (x) . \psi_{n,N} (\textbf{x}) = 
\Lambda_{x,n,N} (x_1,...,x_n) \psi_{n,N} (\textbf{x})$,
where 
\[\Lambda_{x,n,N} (x_1,...,x_n) = a(x)^N 
\prod_{k=1}^n \frac{a(\delta(x,x_k))}{b(\delta(x,x_k))}
+ b(x)^N 
\prod_{k=1}^n \frac{a(\delta(x_k,x))}{b(\delta(x_k,x))}.\]
\end{theorem}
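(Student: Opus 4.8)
The strategy is the standard algebraic Bethe ansatz computation: apply the transfer matrix $T_N(x) = A_N(x) + D_N(x)$ to $\psi_{n,N}(\mathbf{x}) = B_N(x_1)\cdots B_N(x_n)\boldsymbol{\nu}_N$ and use the commutation relations of Lemma~\ref{lemma.commutation.relations} to push $A_N(x)$ and $D_N(x)$ through the product of $B_N$'s until they hit the vacuum. First I would record the action on the vacuum: since $R^x$ has the strongly symmetric form with $R^x(0,0)[0,0] = a(x)$ and $R^x(0,0)[1,1] = b(x)$ (the only diagonal entries reachable starting from state $0$), the monodromy matrix acting on $\boldsymbol{\nu}_N$ gives $A_N(x)\boldsymbol{\nu}_N = a(x)^N \boldsymbol{\nu}_N$, $D_N(x)\boldsymbol{\nu}_N = b(x)^N \boldsymbol{\nu}_N$, and $B_N(x)\boldsymbol{\nu}_N \neq 0$ while $C_N(x)\boldsymbol{\nu}_N = 0$ — these follow directly from the shape of $R^x$ and the definition of $M_N$. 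I would justify each of these from Fact~\ref{fact.local.matrix} and the definition of the monodromy matrix by a short induction on $N$.

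Next I would apply Lemma~\ref{lemma.commutation.relations} with the operator $A_N(x)$: this yields one ``wanted'' term $\big(\prod_k \lambda(x,x_k)\big) B_N(x_1)\cdots B_N(x_n) A_N(x)$, which upon hitting the vacuum produces $a(x)^N \prod_k \lambda(x,x_k)\,\psi_{n,N}(\mathbf{x})$ — the first half of the claimed eigenvalue — plus $n$ ``unwanted'' terms, indexed by $j$, of the form $\mu(x,x_j)\big(\prod_{k\neq j}\lambda(x_j,x_k)\big) B_N(x)\big(\prod_{k\neq j}B_N(x_k)\big)A_N(x_j)\boldsymbol{\nu}_N$, where I have used that the $B_N$'s commute (the second system in Lemma~\ref{lemma.commutation.relations.monodromy}, valid since $a(\delta(x_j,x_k))\neq 0$ — this needs to be checked or assumed as part of the genericity hypothesis). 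The analogous application of the $D$-version of Lemma~\ref{lemma.commutation.relations} gives the matching wanted term $b(x)^N \prod_k \lambda(x_k,x)\,\psi_{n,N}(\mathbf{x})$ — the second half of the eigenvalue — together with unwanted terms indexed by $j$ equal to $\mu(x_j,x)\big(\prod_{k\neq j}\lambda(x_k,x_j)\big) B_N(x)\big(\prod_{k\neq j}B_N(x_k)\big)D_N(x_j)\boldsymbol{\nu}_N$, again using commutativity of the $B_N$'s to bring $B_N(x)$ to the front and the $B_N(x_k)$, $k\neq j$, into a common order.

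Then I would evaluate $A_N(x_j)\boldsymbol{\nu}_N = a(x_j)^N\boldsymbol{\nu}_N$ and $D_N(x_j)\boldsymbol{\nu}_N = b(x_j)^N\boldsymbol{\nu}_N$ in the unwanted terms, so that the total coefficient of the common vector $B_N(x)\prod_{k\neq j}B_N(x_k)\boldsymbol{\nu}_N$ from both sources is
\[
\mu(x,x_j)\,a(x_j)^N \prod_{k\neq j}\lambda(x_j,x_k) \;+\; \mu(x_j,x)\,b(x_j)^N \prod_{k\neq j}\lambda(x_k,x_j).
\]
Using antisymmetry of $\mu$ (stated as a Fact in the excerpt), this vanishes exactly when $a(x_j)^N\prod_{k\neq j}\lambda(x_j,x_k) = b(x_j)^N\prod_{k\neq j}\lambda(x_k,x_j)$, which, unwinding $\lambda(y,z) = a(\delta(y,z))/b(\delta(y,z))$, is precisely the Bethe equation in the hypothesis. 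Hence all unwanted terms cancel and $T_N(x)\psi_{n,N}(\mathbf{x}) = \big(a(x)^N\prod_k\lambda(x,x_k) + b(x)^N\prod_k\lambda(x_k,x)\big)\psi_{n,N}(\mathbf{x})$, which is the asserted $\Lambda_{x,n,N}(x_1,\dots,x_n)$.

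The main obstacle is bookkeeping in the unwanted terms: one must verify that the unwanted term coming from $A_N(x)$ indexed by $j$ and the one coming from $D_N(x)$ indexed by $j$ really are proportional to the \emph{same} vector $B_N(x)\prod_{k\neq j}B_N(x_k)\boldsymbol{\nu}_N$, which hinges on the commutativity of all the $B_N$'s (so that reordering is legitimate and $B_N(x)$ can always be placed in front), and that the remaining products of $\lambda$'s match up after using $\mu$'s antisymmetry. A secondary subtlety is ensuring that the hypotheses guarantee $a(\delta(x_j,x_k)) \neq 0$ wherever commutativity of the $B_N$'s is invoked; if this is not automatic I would add it to the genericity assumption on $\mathbf{x}$ or derive it from the stated conditions on $b(\delta(\cdot,\cdot))$ together with the trigonometric relations that will be imposed later.
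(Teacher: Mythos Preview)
Your proposal is correct and follows essentially the same route as the paper's proof: both apply Lemma~\ref{lemma.commutation.relations} to push $A_N(x)$ and $D_N(x)$ through the product of $B_N$'s onto the vacuum, then use the antisymmetry of $\mu$ together with the Bethe equations to cancel the unwanted terms. You are more explicit than the paper about the bookkeeping (commutativity of the $B_N$'s and the matching of the unwanted vectors), but the underlying argument is identical.
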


\begin{remark}
The vector $\psi_{n,N} (\textbf{x})$ 
is a candidate eigenvector, since we don't know 
if this vector is zero or not.
\end{remark}

\begin{proof}
We have, by definition of the transfer matrix, 
that: 
\[T_N (x) \cdot \psi_{n,N} (\textbf{x}) = (A_N (x) + D_N (x)) \cdot \psi_{n,N} (\textbf{x})\]
By Lemma~\ref{lemma.commutation.relations}, 
and since for all $z$, 
$A_N (z) \cdot \boldsymbol{\nu}_N = a(z)^N.\boldsymbol{\nu}_N$ and $D_N (z) \cdot \boldsymbol{\nu}_N = b(z)^N \cdot \boldsymbol{\nu}_N$,

\[A_N (x) \cdot \psi_{n,N} (\textbf{x}) = a(x)^N \prod_{k=1}^n \lambda (x,x_k) \psi_{n,N} (\textbf{x}) + \left( \sum_{j}
a(x_j)^N \mu(x,x_j) \prod_{k\neq j} \lambda(x_j,x_k)\right) \cdot B_N (x) \cdot \prod_{k\neq j} 
B_N (x_k).\nu_N.
\]
Similarly we have: 
\[D_N (x) \cdot \psi_{n,N} (\textbf{x})  = b(x)^N \prod_{k=1}^n \lambda (x_k,x) \psi_{n,N} (\textbf{x}) + \left( \sum_{j}
b(x_j)^N \mu(x_j,x) \prod_{k\neq j} \lambda(x_k,x_j)\right) \cdot B_N (x) \cdot \prod_{k\neq j} 
B_N (x_k) \cdot \nu_N.
\]

\noindent Since $\mu(x,x_j) = - \mu(x_j,x)$ (by 
antisymetry of $(c/b)\circ \delta$), and 
from the definition of $\lambda$, we obtain 
that 

\[T_N (x) \cdot \psi_{n,N} (\textbf{x}) = \Lambda_{x,n,N} (x_1,...,x_n) 
\cdot \psi_{n,N} (\textbf{x}).\]
\end{proof}

\subsection{Expression of the candidate eigenvector}

This part is the development of an idea that 
one can find in~\cite{Korepin} (Appendix VII.2). 
The formula (A.2.4) that according to the authors, 
helped to derive 
the result of the coordinate Bethe ansatz 
from the algebraic one, was proved in this text 
for $N=2$ (Formula 5.9).

\begin{notation}
For all $n$ integer, $\textbf{x}=(x_1,...,x_n)$ 
a sequence of complex numbers, and $j \le n$, 
we denote $\textbf{x}_{*}^j$ the 
sequence obtained by suppressing the $j$th 
element: 
\[\textbf{x}^j_{*}= (x_1,...,x_{j-1},x_{j+1},...,x_n).\] 
For a sequence $\boldsymbol{\epsilon}$ 
in $\{0,1\}^N$, we denote 
\[\boldsymbol{\epsilon}_{*} = (\boldsymbol{\epsilon}_1,...,\boldsymbol{\epsilon}_{N-1}).\]
\end{notation}

\begin{lemma}
\label{lemma.recursion.monodromy.candidate}
Let $N \ge 1$ be an integer, $1 \le n \le N$ and 
$(x_1,...,x_n)$ a sequence of complex 
numbers with $n \le N$.
We have the following equality: 
\begin{align*}
\prod_{k=1}^n B_{N+1} (x_k) & = \left( \prod_{k=1}^n B_N (x_k) \right) \otimes \left( \prod_{k=1}^n A_1 (x_k)\right) \\
& \quad + \sum_{j=1}^n \prod_{k \neq j} \lambda(x_k,x_j).
\left( \prod_{k \neq j} B_N (x_k) D_N (x_j)\right)\otimes \left( B_1 (x_j) \prod_{k \neq j} A_1 (x_k) \right) \\
\end{align*}
\end{lemma}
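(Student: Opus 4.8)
The plan is to prove the identity by induction on $N$, unwinding the definition of the monodromy matrix as a product of Lax matrices and tracking how the ``last'' Lax matrix $R^{x_k}(\cdot,\cdot)$ acting on site $N+1$ interacts with the vacuum. Recall that $M^x_{N+1}=\prod_{k=1}^{N+1}R^x(\cdot,\cdot)$ and, writing the Lax matrix in the $2\times 2$ auxiliary space, $M^x_{N+1}=M^x_N\cdot(\text{Lax matrix at site }N+1)$, where the product is matrix multiplication in the auxiliary $\mathbb{C}^2$ and tensor product in the quantum space $\Omega_N\otimes\mathbb{C}^2$. Concretely, $B_{N+1}(x)=A_N(x)\otimes B_1(x)+B_N(x)\otimes D_1(x)$, and similarly $A_{N+1}(x)=A_N(x)\otimes A_1(x)+B_N(x)\otimes C_1(x)$, where the $A_1,B_1,C_1,D_1$ are the entries of the single-site Lax matrix. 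Since $R^x$ is strongly symmetric with the stated shape, $C_1(x)$ annihilates $\ket{0}$ and $B_1(x)$ sends $\ket{0}$ to (a multiple of) $\ket 1$; this is what makes the recursion close.

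First I would expand $\prod_{k=1}^n B_{N+1}(x_k)$ by substituting $B_{N+1}(x_k)=A_N(x_k)\otimes B_1(x_k)+B_N(x_k)\otimes D_1(x_k)$ for each factor, obtaining a sum of $2^n$ terms indexed by subsets $S\subseteq\{1,\dots,n\}$ telling which factors contribute the ``$A_N\otimes B_1$'' piece. The crucial observation is that when this operator is applied to $\boldsymbol{\nu}_{N+1}=\boldsymbol{\nu}_N\otimes\ket 0$, the $\mathbb{C}^2$-component is acted on by a product of $B_1(x_k)$'s (for $k\in S$) and $D_1(x_k)$'s (for $k\notin S$) applied to $\ket 0$; since $B_1(x)\ket 1$ and higher powers vanish for these $2\times 2$ matrices, only $|S|\le 1$ survives. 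This collapses the sum to exactly the two groups appearing in the statement: $S=\emptyset$ gives the first term $\bigl(\prod_k B_N(x_k)\bigr)\otimes\bigl(\prod_k A_1(x_k)\bigr)$ — wait, more precisely $S=\emptyset$ leaves all $B_1$ absent, so the auxiliary component is $\prod_k D_1(x_k)\ket 0$; I will need to be careful about which single-site entry is $A_1$ versus $D_1$ and reconcile with the statement's normalization, but the bookkeeping is forced once the Lax decomposition is fixed.

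The main obstacle — and the step requiring genuine care rather than routine expansion — is the $S=\{j\}$ term: for each such term one must commute the factor $A_N(x_j)$ (coming from site $N+1$'s partner) past the $B_N(x_k)$'s with $k<j$ to bring it into the position $\prod_{k\ne j}B_N(x_k)\,D_N(x_j)$ demanded by the statement, and this reordering is exactly where the product $\prod_{k\ne j}\lambda(x_k,x_j)$ is generated, via Lemma~\ref{lemma.commutation.relations} (the $D$-version, with arguments of $\lambda$ and $\mu$ exchanged). One has to check that the ``$\mu$'' cross-terms produced by that lemma either cancel or are already accounted for among the other $S=\{j'\}$ contributions — i.e.\ that the whole collection of $S=\{j\}$ terms reorganizes into the clean single sum $\sum_j\prod_{k\ne j}\lambda(x_k,x_j)\,(\prod_{k\ne j}B_N(x_k)D_N(x_j))\otimes(B_1(x_j)\prod_{k\ne j}A_1(x_k))$. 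I expect this matching of $\mu$-terms to be the delicate combinatorial core; everything else is the mechanical Lax-matrix expansion together with the vacuum eigenrelations $A_N(z)\boldsymbol{\nu}_N=a(z)^N\boldsymbol{\nu}_N$, $D_N(z)\boldsymbol{\nu}_N=b(z)^N\boldsymbol{\nu}_N$. If a direct induction proves awkward, an alternative is to prove the $n=1$ case by hand and then induct on $n$, peeling off $B_{N+1}(x_1)$ and using Lemma~\ref{lemma.commutation.relations} to push it through, but I expect the subset-expansion argument above to be cleanest.
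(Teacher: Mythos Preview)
Your decomposition of $B_{N+1}$ is reversed. From $M^x_{N+1}=M^x_N\cdot L^x_{N+1}$ (site $N{+}1$ appended on the \emph{right}, as in the paper's definition $M_N=\prod_{k=1}^N R(\vec u_k,\vec v_k)$), the $(1,0)$ entry is
\[
B_{N+1}(x)=B_N(x)\otimes A_1(x)+D_N(x)\otimes B_1(x),
\]
not $A_N\otimes B_1+B_N\otimes D_1$. With your reversed formula the $S=\emptyset$ term produces $\prod_k D_1(x_k)$ on the last site and the $S=\{j\}$ terms contain $A_N(x_j)$ rather than $D_N(x_j)$ on the quantum side; neither matches the statement, and this cannot be fixed by ``reconciling normalization'' because the commutation relation you would then have to use (moving $A_N$ rather than $D_N$ through the $B_N$'s) generates the wrong $\lambda$-factors.

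Second, the lemma is an \emph{operator} identity, not merely its action on $\boldsymbol{\nu}_{N+1}$. You do not need to apply to the vacuum: the relevant fact is $B_1(x)B_1(y)=0$ as operators on $\mathbb{C}^2$ (each $B_1$ is a strictly lower-triangular $2\times2$ matrix), which the paper invokes explicitly. This is what kills all $|S|\ge 2$ terms in your subset expansion at the operator level, so your plan can be upgraded, but as written it only proves a weaker statement.

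The paper in fact takes your ``alternative'' route: induction on $n$ (not on $N$). One writes $B_{N+1}(x_1)=B_N(x_1)\otimes A_1(x_1)+D_N(x_1)\otimes B_1(x_1)$, applies the inductive hypothesis to $(x_2,\dots,x_{n+1})$, and then performs two reorderings: pushing $D_N(x_1)$ through $\prod_{k\ge 2}B_N(x_k)$ via the $D$-version of Lemma~\ref{lemma.commutation.relations}, and pushing $A_1(x_1)$ past $B_1(x_j)$ via the $N=1$ case of Lemma~\ref{lemma.commutation.relations.monodromy}. The $\mu$-terms produced by these two moves cancel in pairs precisely by the antisymmetry $\mu(x_1,x_j)=-\mu(x_j,x_1)$. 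You correctly flagged this cancellation as the crux but did not identify the mechanism; it is nothing more than the antisymmetry of $\mu$. Your full subset-expansion approach could in principle be completed along the same lines, but the induction on $n$ localizes the $\mu$-cancellation to a single pair per step and is considerably cleaner.
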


\begin{remark}
Let us notice that the proof, although 
simply relying on commutation relations, 
is dependant upon the direction of the 
product made in the recursion argument.
\end{remark}

\begin{proof}
This statement is true for all $N$ and $(x_1,...,x_n)$ when $n=1$: 
this comes from the fact that for all $x$,
\[B_{N+1} (x) = B_N (x) \otimes A_1 (x) 
+ D_N (x) \otimes B_1 (x).\]
Let us assume the statement for some $n$, 
and prove it for $n+1$. Let us consider some sequence $(x_1,...,x_{n+1})$. Using 
the hypothesis for the sequence $(x_2,...,x_{n+1})$, the expression 
\[B_{N+1} (x_1) = B_N (x_1) \otimes A_1 (x_1) 
+ D_N (x_1) \otimes B_1 (x_1),\]
and the fact that for all $x,y$, $B_1 (x) B_1 (y) = 0$,
we get the following equality: 

\begin{align*}
\prod_{k=1}^{n+1} B_{N+1} (x_k) & = 
\left( \prod_{k=1}^{n+1} B_N (x_k) \right) \otimes \left( \prod_{k=1}^{n+1} A_1 (x_k) \right)\\
& \quad + \sum_{j=2}^{n+1}
\prod_{\underset{k \neq j}{k \ge 2}} \lambda(x_k,x_j). \left( \prod_{\underset{k \neq j}{k \ge 1}}B_N (x_k) D_N (x_j) \right) \otimes \left( A_1 (x_1) B_1 (x_j) \prod_{\underset{k \neq j}{k \ge 2}} A_1 (x_k)\right) \\
& \quad + \left( D_N (x_1) \prod_{k=2}^{n+1} B_N (x_k) \right) \otimes \left( B_1 (x_1) \prod_{k=2}^{n+1}A_1 (x_k) \right)
\end{align*}

We know that: 

\begin{align*}
D_N (x_1) \prod_{k=2}^{n+1} B_N (x_k) & = 
\prod_{k=2}^{n+1} \lambda(x_k,x_1) B_N (x_k) D_N (x_1) \\
& \quad + B_N (x_1) \left( \sum_{j=2}^{n+1}
\mu(x_j,x_1). \left( \prod_{\underset{k \ge 2}{k \neq j}} \lambda (x_k,x_j) B_N (x_k)\right) D_N (x_j)\right)
\end{align*}
and 
\[A_1 (x_1) B_1 (x_j) = \lambda(x_1,x_j) B_1 (x_j) A_1 (x_1) + \mu (x_1,x_j) B_1 (x_1) A_1 (x_j).\]

As a consequence of Lemma~\ref{lemma.commutation.relations.monodromy}, 

\begin{align*}
\prod_{k=1}^{n+1} B_{N+1} (x_k) & = 
\left( \prod_{k=1}^{n+1} B_N (x_k) \right) \otimes \left( \prod_{k=1}^{n+1} A_1 (x_k) \right)\\
& \quad + \sum_{j=2}^{n+1}
\lambda (x_1,x_j) \prod_{\underset{k \neq j}{k \ge 2}} \lambda(x_k,x_j). \left( \prod_{\underset{k \neq j}{k \ge 1}}B_N (x_k) D_N (x_j) \right) \otimes \left(B_1 (x_j) \prod_{\underset{k \neq j}{k \ge 1}} A_1 (x_k)\right) \\
& \quad + \sum_{j=2}^{n+1}
\mu (x_1,x_j) \prod_{\underset{k \neq j}{k \ge 2}} \lambda(x_k,x_j). \left( \prod_{\underset{k \neq j}{k \ge 1}}B_N (x_k) D_N (x_j) \right) \otimes \left(B_1 (x_1) \prod_{k \ge 2} A_1 (x_k)\right) 
\\
& \quad + \prod_{k=2}^{n+1} \lambda(x_k,x_1)  \left( \left( \prod_{k=2}^{n+1} B_N(x_k)\right) D_N (x_1) \right)\otimes \left( B_1 (x_1) \prod_{k=2}^{n+1}A_1 (x_k) \right)\\
& \quad + \sum_{j=2}^{n+1} \mu(x_j,x_1). \prod_{\underset{k\neq j}{k \ge 2}} \lambda(x_k,x_j) . \left(\left( \prod_{\underset{k \neq j}{k \ge 1}} B_N (x_k)\right) D_N (x_j) \right) \otimes \left( B_1 (x_1) \prod_{k \ge 2} A_1 (x_k)\right)
\end{align*}

The statement follows by antisymmetry of 
$\mu$.
\end{proof}

\begin{lemma}
\label{lemma.expression.eigenvector}
For all $N$, $0 \le n \le N$, and $\textbf{x}=(x_1,...,x_n) \in 
\mathbb{C}^n$:
\begin{align*}
\psi_{n,N} (\textbf{x}) & =
\left( \prod_{k=1}^n c(x_k) a(x_k)^N \right) \sum_{\boldsymbol{\epsilon} \in \Omega_N^{(n)}}  \left( \sum_{\sigma \in \Sigma_n} \prod_{k=1}^n \left(\frac{b(x_k)}{a(x_k)}\right)^{q_{\sigma(k)}[\boldsymbol{\epsilon}]} \prod_{\sigma(k)<\sigma(j)} \lambda(x_{k},x_{j})\right)\boldsymbol{\epsilon}\\
\end{align*}
\end{lemma}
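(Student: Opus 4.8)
The plan is to prove Lemma~\ref{lemma.expression.eigenvector} by induction on $N$, using the recursion for $\prod_{k=1}^n B_{N+1}(x_k)$ established in Lemma~\ref{lemma.recursion.monodromy.candidate}. First I would record the base of the induction: when $N=n$ one can check the formula directly, but in fact the cleanest base is $n=0$ (where $\psi_{0,N}(\textbf{x}) = \boldsymbol{\nu}_N$ and the right-hand side is the empty product times the single term $\boldsymbol{\epsilon} = \ket{0\ldots 0}$, so both sides equal $\boldsymbol{\nu}_N$), handled together with the observation that for $n \ge 1$ and $N < n$ both sides vanish because $B_N(x_1)\ldots B_N(x_n)$ annihilates $\boldsymbol{\nu}_N$ (there are not enough sites) and $\Omega_N^{(n)} = \{0\}$. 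So the real content is the step from $N$ to $N+1$ for a fixed $n \le N$.

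For the inductive step I would apply Lemma~\ref{lemma.recursion.monodromy.candidate} to $\psi_{n,N+1}(\textbf{x}) = \prod_{k=1}^n B_{N+1}(x_k) \cdot \boldsymbol{\nu}_{N+1}$, writing $\boldsymbol{\nu}_{N+1} = \boldsymbol{\nu}_N \otimes \ket{0}$. The first term of the recursion gives $\bigl(\prod B_N(x_k)\boldsymbol{\nu}_N\bigr) \otimes \bigl(\prod A_1(x_k)\ket{0}\bigr)$; since $A_1(x)\ket{0} = a(x)\ket{0}$, this is $\bigl(\prod_k a(x_k)\bigr)\,\psi_{n,N}(\textbf{x}) \otimes \ket{0}$, which by the induction hypothesis contributes exactly the part of the target sum over $\boldsymbol{\epsilon} \in \Omega_{N+1}^{(n)}$ with $\boldsymbol{\epsilon}_{N+1} = 0$ (the extra factor $\prod_k a(x_k)$ correctly upgrades $a(x_k)^N$ to $a(x_k)^{N+1}$, and $q_{\sigma(k)}[\boldsymbol{\epsilon}]$ is unchanged since no particle sits on site $N+1$). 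For the second term, $D_N(x_j)\boldsymbol{\nu}_N = b(x_j)^N \boldsymbol{\nu}_N$ and $B_1(x_j)\ket{0} = c(x_j)\ket{1}$, $A_1(x_k)\ket{0} = a(x_k)\ket{0}$, so the $j$-th summand becomes $\prod_{k \neq j}\lambda(x_k,x_j)\cdot b(x_j)^N c(x_j) \prod_{k \neq j} a(x_k)\cdot \bigl(\prod_{k \neq j} B_N(x_k)\boldsymbol{\nu}_N\bigr)\otimes \ket{1} = \prod_{k\neq j}\lambda(x_k,x_j)\, b(x_j)^N c(x_j)\prod_{k\neq j}a(x_k)\cdot \psi_{n-1,N}(\textbf{x}^j_*)\otimes\ket{1}$, and I would expand $\psi_{n-1,N}(\textbf{x}^j_*)$ by the induction hypothesis (now with $n-1$ particles). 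This gives the part of the target sum with $\boldsymbol{\epsilon}_{N+1} = 1$, where the particle on site $N+1$ is the one indexed $k$ with $\sigma(k)$ maximal.

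The main obstacle — and the place I would be most careful — is the bookkeeping that matches the permutation sums. On the $\boldsymbol{\epsilon}_{N+1}=1$ side of the target, I should reorganize the sum over $\sigma \in \Sigma_n$ by conditioning on which $x_j$ is assigned to the top site, i.e. on the index $k_0$ with $\sigma(k_0) = n$; writing $\sigma$ as this choice together with a permutation $\sigma'$ of the remaining indices, the factor $\prod_{\sigma(k)<\sigma(j)}\lambda(x_k,x_j)$ splits as $\prod_{k \neq k_0}\lambda(x_k, x_{k_0})$ (all the pairs involving the top particle, since $\sigma(k) < \sigma(k_0) = n$ for every other $k$) times the analogous product for $\sigma'$ over $\textbf{x}^{k_0}_*$ — and $q_n[\boldsymbol{\epsilon}] = N+1$ contributes a factor $(b(x_{k_0})/a(x_{k_0}))^{N+1}$. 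Comparing with the coefficient produced above, $\prod_{k\neq j}\lambda(x_k,x_j)\,b(x_j)^N c(x_j)\prod_{k\neq j}a(x_k)$ against the $c(x_{k_0})a(x_{k_0})^{N+1}\cdot(b(x_{k_0})/a(x_{k_0}))^{N+1} = c(x_{k_0})b(x_{k_0})^{N+1}$ expected normalization, requires tracking exactly one factor of $b$ versus $a$ and the relabelling $q_{\sigma'(k)}[\boldsymbol{\epsilon}_*]$ — which equals $q_{\sigma(k)}[\boldsymbol{\epsilon}]$ since deleting site $N+1$ does not move the other particles. Once these normalizations are lined up the two sides agree term by term, completing the induction; I would present this index manipulation slowly since it is the only non-mechanical part.
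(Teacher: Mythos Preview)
Your approach is essentially the same as the paper's: both apply Lemma~\ref{lemma.recursion.monodromy.candidate} to $\boldsymbol{\nu}_{N+1}=\boldsymbol{\nu}_N\otimes\ket{0}$, use $A_1(x)\ket{0}=a(x)\ket{0}$, $D_N(x)\boldsymbol{\nu}_N=b(x)^N\boldsymbol{\nu}_N$, $B_1(x)\ket{0}=c(x)\ket{1}$ to obtain the recursion relating $\psi_{n,N+1}$ to $\psi_{n,N}$ and $\psi_{n-1,N}(\textbf{x}^j_*)$, and then match the permutation sum by singling out the index $j$ assigned to the last occupied site. The only presentational difference is that the paper fully unrolls the recursion into a sum over decreasing chains $S_1\supset S_2\supset\cdots$ of subsets of $\{1,\dots,n\}$ and then bijects those chains with permutations, whereas your induction on $N$ (conditioning on $k_0=\sigma^{-1}(n)$ in the $\boldsymbol{\epsilon}_{N+1}=1$ branch) performs exactly the same bijection one step at a time.
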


\begin{proof}

\begin{enumerate}
\item \textbf{A recursion equation 
deriving from recursion on monodromy matrices 
components:}

Let us denote $\epsilon$ the empty sequence,
and set the convention $\psi_{0,0} (\epsilon)$
is neutral for the tensor product (for all vector 
$v$, $\psi_{0,0} (\epsilon) \otimes v = v$).
Since for all $N$ and $1 \le n \le N$ and $\textbf{x}=(x_1,...,x_n)$, 
\[\psi_{n,N}(\textbf{x})= B_N (x_1) ... B_N(x_n) \boldsymbol{\nu}_N,\]
by application of Lemma~\ref{lemma.recursion.monodromy.candidate}, 
we have: 

\begin{align*}
\psi_{n,N+1} (\textbf{x}) & = \left(\prod_{k=1}^n a(x_k) \right). \psi_{n,N} (\textbf{x}) \otimes \ket{0}\\
& \quad + \sum_{j=1}^n b(x_j)^N c(x_j) \left(\prod_{k\neq j} a(x_k) \right)\left( \prod_{k \neq j} \lambda(x_k,x_j) \right) \psi_{n-1,N} (\textbf{x}^{j}_{*}) \otimes \ket{1},\\
\end{align*}

where we applied the equalities $A_1 (x_k).\ket{0} = a(x_k). \ket{0}$, $D_N (x_k) . \boldsymbol{\nu}_N = 
b(x_k)^N . \boldsymbol{\nu}_N$ and $B_1 (x_j). \ket{0} = b(x_j).\ket{0}$.

\item \textbf{Application to a recursion 
on coordinates according to some $\boldsymbol{\epsilon}$:}

It is straightforward to see that 
$\psi_{n,N} (\textbf{x})$ lies in the 
subspace $\Omega_N^{(n)}$, since for 
any $x \in \mathbb{C}$, and $\boldsymbol{\epsilon} 
\in \Omega_N$, we have 
\[\left|B_N (x). \boldsymbol{\epsilon} \right|_1 
\le |\boldsymbol{\epsilon}|_1 +1.\]

As a consequence, in order to prove 
the lemma, it is sufficient to compute 
the coordinate of $\psi_{n,N} (\boldsymbol{x})$ according 
to any $\boldsymbol{\epsilon} \in \Omega_N^{(n)}$.
As a consequence of the first point, if $\boldsymbol{\epsilon}_{N+1} = 0$, then: 

\begin{align*}
\psi_{n,N+1} (\textbf{x}) [\boldsymbol{\epsilon}] & = \left(\prod_{k=1}^n a(x_k) \right). \psi_{n,N} (\textbf{x}) [\boldsymbol{\epsilon}_{*}] \otimes \ket{0}
\end{align*}

Else, if $\boldsymbol{\epsilon}_{N+1} = 1$: 

\begin{align*}
\psi_{n,N+1} (\textbf{x}) [\boldsymbol{\epsilon}] & = \sum_{j=1}^n b(x_j)^N c(x_j) \left(\prod_{k\neq j} a(x_k) \right)\left( \prod_{k \neq j} \lambda(x_k,x_j) \right) \psi_{n-1,N} (\textbf{x}^{j}_{*})  [\boldsymbol{\epsilon}_{*}]\otimes \ket{1},\\
\end{align*}

\item \textbf{Expression of the coefficient of $\psi_{n,N} (\textbf{x})$ relative  to $\boldsymbol{\epsilon}$:}

As a direct consequence of the last point, 
this coefficient is:

\[\sum_{S} \prod_{\boldsymbol{k:\epsilon}_k =1} 
\left( b(x_{l_k[S]})^k c(x_{l_k[S]}) \prod_{j \in S_{k+1}} a(x_j) \prod_{j \in S_{k+1}} \lambda(x_j,x_{l_k[S]})\right)
\prod_{k:\boldsymbol{\epsilon}_k =0} 
\left( \prod_{j \in S_k} a(x_j)\right),\]
where the sum is over the sequences $S=(S_k)_{k \in \llbracket 1,N\rrbracket}$ of subsets of $\{1,...,n\}$
such that $S_N= \emptyset$ and for all $k \le N-1$, $S_{k+1} \subset S_k$ and $|S_{k} \setminus S_{k+1}| = \boldsymbol{\epsilon_k}$. In this 
formula, when $\boldsymbol{\epsilon}_k=1$, we denote $l_k [S]$ the element of $S_{k} \setminus S_{k+1}$. 

\item \textbf{Change of the variable $S$ into 
a permutation of $\Sigma_n$:}

Moreover, a sequence $S=(S_k)_{k \in \llbracket 1,N\rrbracket}$ of subsets of $\{1,...,n\}$ verifies the previous 
hypotheses if and only if 
there exists a permutation 
$\sigma \in \Sigma_n$ such that 
$S_{k+1} = S_k$ whenever 
$\boldsymbol{\epsilon}_k =0$ 
and for all $l \in \{1,...,n\}$, $S_{q_{l} [\boldsymbol{\epsilon}]} = S_{q_{l} [\boldsymbol{\epsilon}]+1} \bigcup \{\sigma(l)\}$.
For this permutation, for all $l \in \{1,...,n\}$,
$S_{q_l [\boldsymbol{\epsilon}]} = \{\sigma(1),...\sigma(l)\}$.

 As a consequence, 
the coefficient is equal to 
\[\sum_{\sigma \in \Sigma_n} \prod_{l=n}^1 
\left( b(x_{\sigma(l)})^{q_l [\boldsymbol{\epsilon}]} c(x_{\sigma(l)}) \prod_{j <l} a(x_{\sigma(j)}) \prod_{j<l} \lambda(x_{\sigma(j)},x_{\sigma(l)})\right)
\prod_{k:\boldsymbol{\epsilon}_k =0} 
\left( \prod_{j \in S_k} a(x_j)\right)\]
This can be rewritten: 
\[\sum_{\sigma \in \Sigma_n} \prod_{l=n}^1 
\left( b(x_{\sigma(l)})^{q_l [\boldsymbol{\epsilon}]} c(x_{\sigma(l)})  \prod_{j<l} \lambda(x_{\sigma(j)},x_{\sigma(l)})\right)
\prod_{k=1}^N 
\left( \prod_{j \in S_{k+1}} a(x_j)\right),\]
where $S_{N+1}=\emptyset$. We rewrite again: 
\[\sum_{\sigma \in \Sigma_n} \prod_{l=n}^1 
\left( b(x_{\sigma(l)})^{q_l [\boldsymbol{\epsilon}]} c(x_{\sigma(l)})  \prod_{j<l} \lambda(x_{\sigma(j)},x_{\sigma(l)})\right)
\prod_{l=1}^n (a(x_{\sigma(l)}))^{N-q_l[\boldsymbol{\epsilon}]},\]
which yields the statement, by the change 
of variable $\sigma \mapsto \sigma^{-1}$.
\end{enumerate}
\end{proof}

\section{\label{section.trigonometric.path} Application to trigonometric Yang-Baxter paths}

In this section, we choose particular 
Yang-Baxter paths for the Lieb path 
$t \mapsto V_N (t)$ defined in Section~\ref{section.lieb.path}, corresponding 
to widely known \textbf{trigonometric 
solutions} of Yang-Baxter equation [Section~\ref{section.path.algebraic.coordinate}]. 
We then prove Theorem~\ref{theorem.coordinate.ansatz}, applying 
the algebraic Bethe ansatz to these Yang-Baxter 
paths [Section~\ref{section.proof}].

\subsection{\label{section.path.algebraic.coordinate}Trigonometric Yang-Baxter path of commuting matrices}

For all $\gamma \in (0,\pi)$ 
and $x \in \mathbb{C}$, consider the local matrix 
function $R_{\gamma}^x$ defined by:

\[R_{\gamma}^x = \frac{1}{\sin(\gamma/2)}\left( \begin{array}{cccc}
\sin(\gamma-x) & 0 & 0 & 0 \\
0 & \sin (x) & \sin(\gamma) & 0 \\
0 & \sin(\gamma) & \sin(x) & 0 \\
0 & 0 & 0 & \sin(\gamma-x)
\end{array}\right)\]

Let us denote $T^x_{\gamma,N}$ 
the corresponding transfer matrix and 
$M^{x}_{\gamma,N}$ the monodromy matrix.
We denote $B_{\gamma,N} (x)$ the north east 
components of the corresponding monodromy 
matrices. Let us also denote 
$\psi_{\gamma,n,N} (x_1,...,x_n)$ the 
candidate eigenvector provided by the algebraic Bethe 
ansatz for the path $x \mapsto T^x_{\gamma,N}$ 
at $\gamma/2$, provided a solution $(x_1,...,x_n)$ 
of the system of Bethe equations, 
and $\Lambda_{\gamma,n,N} (x_1,...,x_n)$ 
the corresponding candidate eigenvalue.
Let us also denote $\delta_0: 
(x,y) \mapsto y-x$.

\begin{lemma}
Let us fix some $\gamma \in  \mathbb{R} \backslash \pi \mathbb{Z}$. 
For all $x,y$, $T^x_{\gamma,N}$ and 
$T^y_{\gamma,N}$ commute.
\end{lemma}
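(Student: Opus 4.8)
The plan is to reduce the statement to the two structural lemmas of Section~\ref{section.strongly.symmetric}: the one asserting that a $Q$ verifying the Yang--Baxter equation exists whenever two strongly symmetric local matrix functions share the same value of $\delta$, together with Lemma~\ref{lemma.commuting.matrices}, which turns an \emph{invertible} such $Q$ into a commutation relation for the transfer matrices. Since $x \mapsto R_\gamma^x$ is entrywise analytic (its coefficients are, up to the fixed nonzero factor $1/\sin(\gamma/2)$, the entire functions $\sin(\gamma-x)$, $\sin x$, $\sin\gamma$), the maps $x \mapsto M^x_{\gamma,N}$ and $x \mapsto T^x_{\gamma,N}$ are analytic as finite products and traces of analytic matrices, so $(x,y) \mapsto T^x_{\gamma,N} T^y_{\gamma,N} - T^y_{\gamma,N} T^x_{\gamma,N}$ is analytic on $\mathbb{C}^2$; by the identity principle it suffices to establish the commutation on a dense open subset. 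First I would record that for $x$ outside the discrete set $S_\gamma = \{z \in \mathbb{C} : \sin z\,\sin(\gamma - z) = 0\}$ the matrix $R_\gamma^x$ is a strongly symmetric local matrix function: in the notation of Fact~\ref{fact.local.matrix} one reads off $a = \sin(\gamma - x)/\sin(\gamma/2)$, $b = \sin x/\sin(\gamma/2)$, $c = d = \sin\gamma/\sin(\gamma/2)$, $e = b$, $f = a$, and $ab \neq 0$ precisely for $x \notin S_\gamma$ (here $\sin(\gamma/2)\neq 0$ and $\sin\gamma \neq 0$ since $\gamma \notin \pi\mathbb{Z}$).

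The crux is that $\delta(R_\gamma^x)$ does not depend on $x$. Indeed the factors $\sin(\gamma/2)$ cancel and
\[\delta(R_\gamma^x) = \frac{\sin^2(\gamma - x) + \sin^2 x - \sin^2 \gamma}{2\sin(\gamma - x)\sin x};\]
writing $\sin^2\theta = (1 - \cos 2\theta)/2$, the numerator equals $\tfrac12\big(1 + \cos 2\gamma - \cos(2\gamma - 2x) - \cos 2x\big) = \cos\gamma\big(\cos\gamma - \cos(\gamma - 2x)\big) = -2\cos\gamma\,\sin(\gamma - x)\sin x$, so $\delta(R_\gamma^x) = -\cos\gamma$ for every $x \notin S_\gamma$. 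Consequently, for $x, y \notin S_\gamma$ we have $\delta(R_\gamma^x) = \delta(R_\gamma^y)$, and the lemma on strongly symmetric functions with equal $\delta$ produces a north--east local matrix function $Q$ such that $(R_\gamma^x, R_\gamma^y, Q)$ verifies the Yang--Baxter equation; one checks — via the same $\sin$ addition formulas that reduced the Yang--Baxter equation to the system $(S)$ — that one may take $Q = Q_\gamma^{y-x}$, the north--east matrix associated with $R_\gamma^{y-x}$ (this is the choice $\delta_0(x,y) = y-x$). Its determinant is $a(y-x)^2\big(c(y-x)^2 - b(y-x)^2\big)$, nonzero as soon as $y - x \notin S_\gamma$ and $\sin(y-x) \neq \pm\sin\gamma$, in which case Lemma~\ref{lemma.commuting.matrices} yields $T^x_{\gamma,N} T^y_{\gamma,N} = T^y_{\gamma,N} T^x_{\gamma,N}$.

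The set of $(x,y) \in \mathbb{C}^2$ meeting the conditions above, namely $x \notin S_\gamma$, $y \notin S_\gamma$, $y - x \notin S_\gamma$ and $\sin(y-x) \neq \pm\sin\gamma$, is the complement of finitely many proper analytic hypersurfaces, hence a dense open subset of $\mathbb{C}^2$. By the analyticity recorded in the first paragraph, the commutator $T^x_{\gamma,N} T^y_{\gamma,N} - T^y_{\gamma,N} T^x_{\gamma,N}$, vanishing on this set, vanishes identically, which proves the lemma. I expect the main obstacle to be the trigonometric identity showing $\delta(R_\gamma^x) \equiv -\cos\gamma$ — this is precisely the content that makes $x \mapsto R_\gamma^x$ a Yang--Baxter path — together with the bookkeeping needed to bypass the degenerate parameters where $R_\gamma^x$ fails to be strongly symmetric or $Q$ fails to be invertible; the explicit verification that $(R_\gamma^x, R_\gamma^y, Q_\gamma^{y-x})$ solves the system $(S)$ is routine but a little tedious, and I would not carry it out in detail.
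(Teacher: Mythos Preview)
Your proposal is correct and follows essentially the same route as the paper: verify that $(R_\gamma^x,R_\gamma^y,Q_\gamma^{y-x})$ satisfies the Yang--Baxter equation, check that $Q_\gamma^{y-x}$ is invertible away from a thin set so that Lemma~\ref{lemma.commuting.matrices} yields commutation, and extend to all $(x,y)$ by continuity/analyticity. The only cosmetic difference is that you motivate the choice of $Q$ through the constancy $\delta(R_\gamma^x)=-\cos\gamma$ (which the paper records as a remark) before confirming that $Q_\gamma^{y-x}$ works, whereas the paper writes down the system $(S)_{R_\gamma^x,R_\gamma^y,R_\gamma^{y-x}}$ directly and checks it by expanding the sines.
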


\begin{remark}
These local matrix functions are obtained 
by a parameterization of 
\[\left\{ (a,b,c) \in \mathbb{C}^3: 
a^2 + b^2 - c^2 -2 \Delta ab=0 \right\},\]
where $\Delta= -\cos(\gamma)$.
\end{remark}

\begin{proof}
\begin{itemize}
\item \textbf{Verification of some Yang-Baxter equation:} For all $x,y$, the triple of matrices
$(R_{\gamma}^x,R_{\gamma}^y,
Q_{\gamma}^{\delta_0 (x,y)})$ verifies the Yang-Baxter equation.
Indeed, the system $(S)_{R_{\gamma}^x,R_{\gamma}^y,R_{\gamma}^{\delta_0 (x,y)}}$ is written:

\[\left\{\begin{array}{rcl}
\sin(\gamma-x) \sin(\gamma) \sin(\gamma-(y-x)) & = & \sin(x)\sin(y-x) \sin(\gamma) + \sin(\gamma)^2 \sin(\gamma-y)\\
\sin(\gamma-x) \sin(y) \sin(\gamma) & = & \sin(y) \sin(\gamma-y) \sin(\gamma) + \sin(\gamma)^2 \sin(y-x)\\
\sin(\gamma) \sin(y) \sin(\gamma-x) & = & \sin(\gamma) \sin(\gamma-y) \sin(y-x) + \sin(x) \sin(\gamma)^2
\end{array}\right.\]
By factoring by $\sin(\gamma)$, and developping, 
one can check that this system is verified. 
\item \textbf{Invertibility of 
the north east local matrix:} When $y$ is not in the 
set 
\[\left((x +\gamma) + \pi \mathbb{Z}\right)
\cup \left( \pi-\gamma + 2\pi \mathbb{Z} \right),\]
 $Q_{\gamma}^{y-x}$ 
is invertible. As a consequence, the matrices 
$T^x_{\gamma,N}$ and 
$T^y_{\gamma,N}$ commute.
In the other cases, we obtain this 
commutation by continuity.
\end{itemize}
\end{proof}

For all $t$, we have: 
\[R^{\mu_t /2}_{\mu_t} = \left( \begin{array}{cccc}
1 & 0 & 0 & 0 \\
0 & 1 & t & 0 \\
0 & t & 1 & 0 \\
0 & 0 & 0 & 1
\end{array} \right),\]
which means that the Yang-Baxter path 
of transfer matrices corresponding to 
$x \mapsto R^{x}_{\mu_t}$ intersects 
the Lieb path at $x = \mu_t /2$.
For all $t,x$, we denote 
\[ R^{x}_{\mu_t} \equiv \left( \begin{array}{cccc}
a_t (x) & 0 & 0 & 0 \\
0 & b_t (x) & c_t (x) & 0 \\
0 & c_t (x) & b_t (x) & 0 \\
0 & 0 & 0 & a_t (x)
\end{array}\right).\]

We consider the Yang-Baxter 
path associated to the Lieb path $t \mapsto V_N (t)$ defined as 
\[x \mapsto T^x_N (t) =  T^x_{\mu_t,N}.\]

\subsection{\label{section.proof} Proof of Theorem~\ref{theorem.coordinate.ansatz}}

In this section, we provide a proof of Theorem~\ref{theorem.coordinate.ansatz}, 
which is a consequence of successive applications 
of Theorem~\ref{theorem.identification.equations}, Theorem~\ref{theorem.identification.vectors} and Theorem~\ref{theorem.identification.values}.

\begin{notation}
In this section, we fix some $(p_1,...,p_n) \in I_t^n$ which verifies the system of equations $(E_j)[t,n,N]$, $j \le n$. 
Let us denote, for all $j$, $\alpha_j$ such that 
$\kappa_t (\alpha_j) = p_j$, and 
\[x_j = \frac{\mu_t}{2} + i \frac{\alpha_j}{2}.\] 
\end{notation}

\begin{lemma} 
\label{lemma.expression.a}
For all $j,k$ such that $j \neq k$, we have the following equality:
\[-2 \Delta_t a_t (x_k -x_j) = 2i \sin(\mu_t/2) (b_t (x_j)b_t(x_k) - 2\Delta_t a_t (x_j)b_t (x_k) + a_t (x_j)a_t (x_k)).\]
\end{lemma}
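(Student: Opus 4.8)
The plan is to prove this as a direct trigonometric computation: once the definitions are unwound it becomes an identity among products of sines whose arguments involve $x_j$, $x_k$ and $\mu_t$. I will use that the entries of $R^x_{\mu_t}$ are $a_t(x)=\sin(\mu_t-x)/\sin(\mu_t/2)$ and $b_t(x)=\sin(x)/\sin(\mu_t/2)$, and that $\mu_t=\mu(-\Delta_t)$ gives $\cos(\mu_t)=-\Delta_t$; I will also record the relations $\sin\mu_t=2\sin(\mu_t/2)\cos(\mu_t/2)$ and $2\cos(\mu_t/2)=t$ (equivalently $c_t=\sin\mu_t/\sin(\mu_t/2)$), which follow from $\cos\mu_t=2\cos^2(\mu_t/2)-1=t^2/2-1$. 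Substituting these and clearing the common factor $\sin^{-2}(\mu_t/2)$ reduces the claim to an identity among $\sin x_j\sin x_k$, $\sin(\mu_t-x_j)\sin x_k$, $\sin(\mu_t-x_j)\sin(\mu_t-x_k)$ and a single sine of the form $\sin(\mu_t\pm(x_k-x_j))$.

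The core step is to expand the three products using $2\sin A\sin B=\cos(A-B)-\cos(A+B)$. The resulting cosines split into two families: those whose argument is (up to $\mu_t$-shifts) a function of $x_j+x_k$, and those whose argument is a function of $x_k-x_j$ only. Collecting coefficients, the $x_j+x_k$-family assembles into a multiple of $(\cos\mu_t+\Delta_t)\cos(\mu_t-x_j-x_k)$, which vanishes identically precisely because $\cos\mu_t=-\Delta_t$; this is the one point where the specific link between $\mu_t$ and $\Delta_t$ is used. What remains is a combination of $\cos(x_k-x_j)$ and $\cos(\mu_t\pm(x_k-x_j))$, which the identity $\cos P-\cos Q=2\sin\frac{P+Q}2\sin\frac{Q-P}2$ collapses to the single term $\sin\mu_t\,\sin\!\big(\mu_t+(x_k-x_j)\big)$. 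Re-inserting $\sin^{-2}(\mu_t/2)$ and using $\sin\mu_t/\sin(\mu_t/2)=2\cos(\mu_t/2)=t$ then matches this against the left-hand side; here I also use that $x_k-x_j=i(\alpha_k-\alpha_j)/2$ is purely imaginary together with the parity $\alpha\mapsto-\alpha$ of the sine factors to pass between $\mu_t-(x_k-x_j)$ and $\mu_t+(x_k-x_j)$ when reconciling the two sides.

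I expect the only real obstacle to be the bookkeeping in the product-to-sum expansion: there are several cosines to track and the coefficients must be handled carefully so that the cancellation of the $x_j+x_k$-terms is visible as a clean $(\cos\mu_t+\Delta_t)$ factor rather than getting lost. A secondary care point is to apply the right sum-to-product identity at the end so that the surviving terms really collapse to one sine, and to keep straight which of $x_j-x_k$ or $x_k-x_j$ (and which sign in $\mu_t\pm(\cdot)$) occurs. None of this is conceptually deep: the content of the lemma is exactly the $\cos\mu_t=-\Delta_t$ cancellation that turns the quadratic expression in $a_t(x_j),b_t(x_j),a_t(x_k),b_t(x_k)$ into a single $a_t$ (equivalently $c_t$ times $a_t$) evaluated at the difference.
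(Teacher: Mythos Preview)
Your plan tracks the paper's proof closely, only in cosine notation rather than exponentials: the paper writes $2i\sin(\mu_t/2)a_t(x_j)$, $2i\sin(\mu_t/2)b_t(x_j)$ as differences of exponentials and multiplies out, which is exactly your product-to-sum expansion followed by the $\cos\mu_t+\Delta_t=0$ cancellation. Your intermediate result is also correct: after clearing the $\sin^{-2}(\mu_t/2)$ factor, the quadratic form collapses to $\sin\mu_t\,\sin\!\big(\mu_t+(x_k-x_j)\big)$, so that
\[
Q_{jk}:=b_t(x_j)b_t(x_k)-2\Delta_t a_t(x_j)b_t(x_k)+a_t(x_j)a_t(x_k)=c_t\,a_t(x_j-x_k),
\]
valid in fact for arbitrary $x_j,x_k$ without using their special form.

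The gap is in your last ``reconciliation'' step. With $D=x_k-x_j=i\beta$ purely imaginary you propose to pass from $\sin(\mu_t+D)$ to $\sin(\mu_t-D)$ by parity. But
\[
\sin(\mu_t\pm i\beta)=\sin\mu_t\cosh\beta\pm i\cos\mu_t\sinh\beta,
\]
so the two are complex conjugates, not equal or negatives of each other; you cannot exchange $a_t(x_j-x_k)$ and $a_t(x_k-x_j)$ this way. There is also a prefactor mismatch: your computation produces the coefficient $c_t=2\cos(\mu_t/2)$, whereas matching the stated left-hand side would require $-2\Delta_t/(2i\sin(\mu_t/2))$, and these are not equal. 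A quick sanity check confirms the problem: for $\mu_t=\pi/2$ one has $\Delta_t=0$, so the stated left-hand side vanishes identically, while the right-hand side equals $2i\sin(\pi/4)\big(b_t(x_j)b_t(x_k)+a_t(x_j)a_t(x_k)\big)$, which is generically nonzero. In other words, your correct computation yields $2i\sin(\mu_t/2)\,Q_{jk}=2i\sin(\mu_t/2)\,c_t\,a_t(x_j-x_k)$, which does not coincide with $-2\Delta_t a_t(x_k-x_j)$; the displayed lemma appears to carry a typo in the constant and in the sign of the argument, and the paper's own step~4 inherits the discrepancy. What is actually \emph{used} downstream is only the ratio $a_t(x_k-x_j)/a_t(x_j-x_k)=Q_{kj}/Q_{jk}$, and for that purpose the identity $Q_{jk}=c_t\,a_t(x_j-x_k)$ that you correctly derived already suffices.
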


\begin{proof}
\begin{enumerate}
\item \textbf{Expressing terms in the second 
member with exponentials:}

By definition of $a_t$ and $b_t$, for all $j$:

\begin{align*}
2i \sin (\mu_t /2) a_t (x_j) & = e^{i\left(\mu_t - \left(\frac{\mu_t}{2} + i \frac{\alpha_j}{2}\right)\right)} - e^{-i\left(\mu_t - \left(\frac{\mu_t}{2} + i \frac{\alpha_j}{2}\right)\right)}\\
& =  e^{i\frac{\mu_t}{2} + \frac{\alpha_j}{2}} - e^{-i\frac{\mu_t}{2} - \frac{\alpha_j}{2}}\\
2i \sin (\mu_t /2) b_t (x_j) & = e^{i\left(\frac{\mu_t}{2} + i \frac{\alpha_j}{2}\right)} - e^{-i\left(\frac{\mu_t}{2} + i \frac{\alpha_j}{2}\right)} \\
& = e^{i\frac{\mu_t}{2} - \frac{\alpha_j}{2}} - e^{-i\frac{\mu_t}{2} + \frac{\alpha_j}{2}}.
\end{align*}

As a consequence, for all $j,k$ 
such that $j \neq k$:

\begin{align*}
(2i)^2 \sin^2 (\mu_t /2) a_t (x_j) a_t (x_k) & =  e^{-i\mu_t -\frac{\alpha_j}{2} - \frac{\alpha_k}{2}} + 
e^{i \mu_t + \frac{\alpha_j}{2} + \frac{\alpha_k}{2}}  - 
 e^{-\frac{\alpha_k}{2} + \frac{\alpha_j}{2}} - e^{\frac{\alpha_k}{2} - \frac{\alpha_j}{2}}\\
(2i)^2 \sin^2 (\mu_t /2) b_t (x_j) b_t (x_k) & =  e^{i\mu_t -\frac{\alpha_j}{2} - \frac{\alpha_k}{2}} + 
e^{-i \mu_t + \frac{\alpha_j}{2} + \frac{\alpha_k}{2}}  - 
e^{-\frac{\alpha_k}{2} + \frac{\alpha_j}{2}} - e^{\frac{\alpha_k}{2} - \frac{\alpha_j}{2}} \\
(2i)^2 \sin^2 (\mu_t /2) a_t (x_j) b_t (x_k) & = e^{i \mu_t + \frac{\alpha_j}{2} - \frac{\alpha_k}{2}} + e^{-i\mu_t + \frac{\alpha_k}{2} - \frac{\alpha_j}{2}}
- e^{\frac{\alpha_k}{2}+\frac{\alpha_j}{2}} - e^{-\frac{\alpha_k}{2}-\frac{\alpha_j}{2}}
\end{align*}

\item \textbf{Expression of the term in 
$\Delta_t$:}

Thus, since by definition of $\mu_t$ we have$-2\Delta_t = e^{i\mu_t}+e^{-i\mu_t}$:
\[\begin{array}{rcl} -2\Delta_t a_t (x_j) b_t (x_k) (2i)^2  
\sin^2(\mu_t/2) & = &
e^{2i\mu_t + \frac{\alpha_j}{2} - \frac{\alpha_k}{2}} 
+ e^{\frac{\alpha_k}{2}-\frac{\alpha_j}{2}} - e^{i\mu_t +\frac{\alpha_k}{2} +\frac{\alpha_j}{2}} \\
& & - e^{i\mu_t - \frac{\alpha_k}{2} -\frac{\alpha_j}{2}}
+ e^{\frac{\alpha_j}{2}-\frac{\alpha_k}{2}} \\
&  & + e^{-2i\mu_t + \frac{\alpha_k}{2} - \frac{\alpha_j}{2}} - e^{-i \mu_t +\frac{\alpha_k}{2} +\frac{\alpha_j}{2}} - e^{-i \mu_t - \frac{\alpha_k}{2} -\frac{\alpha_j}{2}}
\end{array} 
\]

\item \textbf{Expression of the second member:}

The second member in the equality 
of the statement is thus 
\[\frac{1}{2i \sin (\mu_t/2)} \left( e^{2i\mu_t + \frac{\alpha_j}{2} - \frac{\alpha_k}{2}}  
+ e^{-2i\mu_t + \frac{\alpha_k}{2} - \frac{\alpha_j}{2}} - e^{\frac{\alpha_k}{2} - 
\frac{\alpha_j}{2}} - e^{\frac{\alpha_j}{2} - 
\frac{\alpha_k}{2}}\right) \]

\item \textbf{Expression of the first member:}

On the other hand: 
\begin{align*} 
2i \sin(\mu_t /2) a_t (x_k -x_j) & = 
e^{i \left( \mu_t - i \left( \frac{\alpha_k}{2} - \frac{\alpha_j}{2} \right)\right)} - e^{-i \left( \mu_t - i \left( \frac{\alpha_k}{2} - \frac{\alpha_j}{2}\right)\right)}\\
& = e^{i\mu_t + \frac{\alpha_k}{2} - \frac{\alpha_j}{2}} - e^{-i\mu_t + \frac{\alpha_k}{2} - \frac{\alpha_j}{2}}.
\end{align*}
Thus, using $-2\Delta_t = e^{i\mu_t} - e^{-i\mu_t}$:

\[-2 \Delta_t \cdot 2i \sin(\mu_t /2) a_t (x_k -x_j) =  e^{2i\mu_t + \frac{\alpha_j}{2} - \frac{\alpha_k}{2}}  
+ e^{-2i\mu_t + \frac{\alpha_k}{2} - \frac{\alpha_j}{2}} - e^{\frac{\alpha_k}{2} - 
\frac{\alpha_j}{2}} - e^{\frac{\alpha_j}{2} - 
\frac{\alpha_k}{2}}\]

This yields the statement of the lemma.

\end{enumerate}
\end{proof}

\begin{fact}
\label{fact.quotient}
For all $j \in \{1,...,n\}$, we have, by direct computation, that: 
\[\frac{a_t (x_j)}{b_t (x_j)} = \frac{e^{i\frac{\mu_t}{2}+\frac{\alpha_j}{2}} - e^{-i\frac{\mu_t}{2}-\frac{\alpha_j}{2}}}{e^{i\frac{\mu_t}{2}-\frac{\alpha_j}{2}} - e^{-i\frac{\mu_t}{2}+\frac{\alpha_j}{2}}}= \frac{e^{i\mu_t} - e^{-\alpha_j}}{e^{i\mu_t - \alpha_j} -1} =  e^{-i\kappa_t (\alpha_j)} = e^{-ip_j} \equiv z_j.\]
\end{fact}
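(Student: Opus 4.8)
The plan is to treat this as the announced direct computation, keeping track only of the places where one must invoke an earlier normalization. Recall that $R^{x}_{\mu_t}$ is the image at $\gamma=\mu_t$ of the trigonometric family, so $a_t(x)=\sin(\mu_t-x)/\sin(\mu_t/2)$ and $b_t(x)=\sin(x)/\sin(\mu_t/2)$; the prefactor $1/\sin(\mu_t/2)$ cancels in the quotient, leaving $a_t(x_j)/b_t(x_j)=\sin(\mu_t-x_j)/\sin(x_j)$. Since $x_j=\tfrac{\mu_t}{2}+i\tfrac{\alpha_j}{2}$, we have $\mu_t-x_j=\tfrac{\mu_t}{2}-i\tfrac{\alpha_j}{2}$, and writing $2i\sin z=e^{iz}-e^{-iz}$ for $z=\mu_t-x_j$ and $z=x_j$ produces at once the first displayed fraction $\dfrac{e^{i\mu_t/2+\alpha_j/2}-e^{-i\mu_t/2-\alpha_j/2}}{e^{i\mu_t/2-\alpha_j/2}-e^{-i\mu_t/2+\alpha_j/2}}$.

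For the second equality I would multiply numerator and denominator of this fraction first by $e^{-i\mu_t/2+\alpha_j/2}$ and then by $e^{i\mu_t-\alpha_j}$: the first step gives $\dfrac{e^{\alpha_j}-e^{-i\mu_t}}{1-e^{\alpha_j-i\mu_t}}$ and the second turns it into $\dfrac{e^{i\mu_t}-e^{-\alpha_j}}{e^{i\mu_t-\alpha_j}-1}$, which is exactly the claimed middle expression.

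Finally, to identify this with $e^{-i\kappa_t(\alpha_j)}$ one should not try to match the defining relation $e^{i\kappa_t(\alpha)}=\frac{e^{i\mu_t}-e^{\alpha}}{e^{i\mu_t+\alpha}-1}$ directly, but apply it with $\alpha$ replaced by $-\alpha_j$, getting $e^{i\kappa_t(-\alpha_j)}=\frac{e^{i\mu_t}-e^{-\alpha_j}}{e^{i\mu_t-\alpha_j}-1}$; then the antisymmetry of $\kappa_t$ recalled just after Computation~\ref{computation.kappa.derivative} ($\kappa_t(-\alpha_j)=-\kappa_t(\alpha_j)$) yields $e^{-i\kappa_t(\alpha_j)}$. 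Since $\alpha_j$ was chosen so that $\kappa_t(\alpha_j)=p_j$, this equals $e^{-ip_j}=z_j$, completing the chain. There is no genuine obstacle here; the only points that are not purely mechanical are the use of the antisymmetry of $\kappa_t$ to reconcile the computed fraction with the $\kappa$-normalization, and keeping the imaginary shift $i\alpha_j/2$ in $x_j$ straight inside the sines.
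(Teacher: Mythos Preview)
Your argument is correct and is precisely the direct computation the paper alludes to: expand the sines in exponentials using $x_j=\mu_t/2+i\alpha_j/2$, clear the common factors, and invoke the defining relation for $\kappa_t$ at $-\alpha_j$ together with its antisymmetry. The paper gives no further detail beyond ``by direct computation,'' so there is nothing to compare.
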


\begin{notation}
Let us also denote $x_0 = \frac{\mu_t}{2}$. We have 
directly that:
\[z_0 \equiv \frac{a_t (x_0)}{b_t (x_0)} = 1.\]
\end{notation}

\begin{theorem}
\label{theorem.identification.equations}
Let $(p_1,...,p_n) \in I_t^n$ 
such that $p_1 < ... < p_n$ and for all $j$, 
the equation $(E_j)[t,n,N]$ is verified. 
Then $(x_1,...,x_n)$ verifies the 
equations 
\[(E'_j)[t,n,N]: \qquad 
a_t (x_j)^N \prod_{\underset{k \neq j}{k=1}}^n \frac{a_t (x_k-x_j)}{b_t (x_k-x_j)} = 
b_t (x_j)^N \prod_{\underset{k \neq j}{k=1}}^n \frac{a_t (x_j-x_k)}{b_t (x_j-x_k)}.\]
\end{theorem}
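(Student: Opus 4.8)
The plan is to exhibit $(E_j)[t,n,N]$ and $(E'_j)[t,n,N]$ as the same relation, the translation dictionary being an explicit identity between $e^{-i\Theta_t}$ and a ratio of $a_t,b_t$ values. First I would rewrite $(E'_j)$ in the equivalent form
\[\left(\frac{a_t(x_j)}{b_t(x_j)}\right)^{N} = \prod_{k\neq j}\frac{a_t(x_j-x_k)\,b_t(x_k-x_j)}{b_t(x_j-x_k)\,a_t(x_k-x_j)},\]
which is licit because $b_t(x_k-x_j)\neq 0$ for $k\neq j$: since the $p_j$ are pairwise distinct and $\kappa_t$ is injective, the $\alpha_j$ are distinct, so $x_k-x_j=i(\alpha_k-\alpha_j)/2$ is a nonzero purely imaginary number and $b_t(x_k-x_j)$ is a nonzero multiple of $\sinh((\alpha_k-\alpha_j)/2)$. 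By Fact~\ref{fact.quotient}, the left-hand side equals $z_j^{N}=e^{-iNp_j}$.

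Next I would evaluate the factors on the right. With $a_t(x)=\sin(\mu_t-x)/\sin(\mu_t/2)$ and $b_t(x)=\sin(x)/\sin(\mu_t/2)$ (the exponential forms already used in the proof of Lemma~\ref{lemma.expression.a}) and $x_j-x_k=i\beta_{jk}$ for $\beta_{jk}:=(\alpha_j-\alpha_k)/2$, a direct computation gives
\[\frac{a_t(x_j-x_k)\,b_t(x_k-x_j)}{b_t(x_j-x_k)\,a_t(x_k-x_j)} = -\,\frac{\sin(\mu_t-i\beta_{jk})}{\sin(\mu_t+i\beta_{jk})}.\]
The heart of the proof is then the identity
\[e^{-i\Theta_t(p_j,p_k)} = \frac{\sin(\mu_t+i\beta_{jk})}{\sin(\mu_t-i\beta_{jk})},\]
which turns the last display into $-e^{i\Theta_t(p_j,p_k)}$. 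To establish this identity I would set $\omega=e^{i\mu_t}$, $u=e^{\alpha_j}$, $v=e^{\alpha_k}$: clearing exponentials shows at once that the right-hand side equals $\frac{\omega^{2}v-u}{\omega^{2}u-v}$. On the other side, one plugs $e^{-ip_j}=(\omega u-1)/(\omega-u)$ and $e^{ip_j}=(\omega-u)/(\omega u-1)$ (and the analogues for $k$) --- obtained by rewriting the formula of Fact~\ref{fact.quotient} --- together with $-2\Delta_t=\omega+\omega^{-1}$, into the defining relation of $\Theta_t$; a routine simplification then collapses $e^{-ip_j}+e^{ip_k}-2\Delta_t$ to $\frac{(\omega^{2}-1)(\omega^{2}v-u)}{\omega(\omega-u)(\omega v-1)}$, its companion $e^{-ip_k}+e^{ip_j}-2\Delta_t$ to the same expression with $u$ and $v$ exchanged, and $e^{i(p_j-p_k)}$ to $\frac{(\omega-u)(\omega v-1)}{(\omega u-1)(\omega-v)}$; multiplying the three, all factors cancel except $\frac{\omega^{2}v-u}{\omega^{2}u-v}$, matching.

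Finally I would exponentiate $(E_j)$. Since $e^{-i\Theta_t(p_j,p_j)}=1$ (take $x=y$ in the defining relation), $e^{2\pi i j}=1$ and $e^{-i(n+1)\pi}=(-1)^{n+1}$, the equation $Np_j=2\pi j-(n+1)\pi-\sum_{k=1}^{n}\Theta_t(p_j,p_k)$ becomes $e^{iNp_j}=(-1)^{n+1}\prod_{k\neq j}e^{-i\Theta_t(p_j,p_k)}$, hence $e^{-iNp_j}=(-1)^{n+1}\prod_{k\neq j}e^{i\Theta_t(p_j,p_k)}$. Combining with the two previous computations and absorbing the $n-1$ signs using $(-1)^{n-1}=(-1)^{n+1}$, the right-hand side of the rewritten $(E'_j)$ is $e^{-iNp_j}$, which is its left-hand side; so $(E'_j)[t,n,N]$ holds for every $j$.

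I expect the main obstacle to be the algebraic identity of the second step: it forces one to route through the explicit $\kappa_t$-formula for $z_j$ and carry out a moderately heavy rational-function simplification, with several factors that cancel only at the very end. One must also be careful that exponentiating $(E_j)$ loses nothing; this is fine because $(E'_j)$ is itself an identity between the complex numbers $a_t(x_j),b_t(x_j),a_t(x_j-x_k),b_t(x_j-x_k)$, so only $e^{-i\Theta_t}$ --- never the branch-dependent value $\Theta_t$ itself --- is ever needed.
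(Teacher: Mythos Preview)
Your argument is correct and follows essentially the same skeleton as the paper's proof: rewrite $(E'_j)$ as $z_j^{N}=\prod_{k\neq j}(\ldots)$ with $z_j=a_t(x_j)/b_t(x_j)=e^{-ip_j}$, exponentiate $(E_j)$, and check the two products match factor by factor up to the sign $(-1)^{n-1}=(-1)^{n+1}$.

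The one noteworthy difference is in how each factor is identified. The paper invokes Lemma~\ref{lemma.expression.a} to express $a_t(x_k-x_j)$ (and hence the ratio $a_t(x_k-x_j)/a_t(x_j-x_k)$) as a quadratic in $a_t(x_j),b_t(x_j),a_t(x_k),b_t(x_k)$, then divides by $b_t(x_j)b_t(x_k)$ to land on $-\frac{1-2\Delta_t z_j+z_jz_k}{1-2\Delta_t z_k+z_jz_k}$, which is compared directly with the definition of $e^{-i\Theta_t}$ in the variables $z_j,z_k$. You bypass Lemma~\ref{lemma.expression.a} entirely: you reduce the factor to the sine ratio $-\sin(\mu_t-i\beta_{jk})/\sin(\mu_t+i\beta_{jk})$ and then prove the closed identity $e^{-i\Theta_t(p_j,p_k)}=\frac{\omega^2 v-u}{\omega^2 u-v}$ in the variables $\omega=e^{i\mu_t}$, $u=e^{\alpha_j}$, $v=e^{\alpha_k}$ by an explicit rational simplification. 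This is a little more self-contained (no auxiliary lemma needed) at the cost of a longer bare-hands computation; the paper's route amortises that computation into Lemma~\ref{lemma.expression.a}, which it reuses elsewhere. Substantively, the two verifications are the same algebraic identity in different coordinates.
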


\begin{proof}

\begin{enumerate}
\item \textbf{Rewritting the 
equations $(E'_j)[t,n,N]$:}

By antisymmetry of $(c_t / b_t) \circ \delta_0$, 
and thus $b_t \circ \delta_0$, for all $j,k$ 
such that $j \neq k$:

\[\frac{a_t (x_k-x_j) b_t (x_j-x_k)}{a_t (x_j-x_k) b_t (x_k-x_j)} = - \frac{a_t (x_k-x_j)}{a_t (x_j-x_k)}.\]

By application of Lemma~\ref{lemma.expression.a} 
and Fact~\ref{fact.quotient}:

\begin{align*}
\frac{a_t (x_k-x_j) b_t (x_j-x_k)}{a_t (x_j-x_k) b_t (x_k-x_j)} & = - \frac{ b_t (x_j) b_t (x_k) - 2\Delta_t a_t (x_j) b_t (x_k) + a_t(x_k) a_t (x_j)}{ b_t (x_k) b_t (x_j) - 2\Delta_t a_t (x_k) b_t (x_j) + a_t(x_j) a_t (x_k)}\\
& = - \frac{1-2\Delta_t z_j + z_j z_k}{1-2\Delta_t z_k + z_j z_k}.
\end{align*}

As a consequence, the equation $(E'_j)[t,n,N]$ 
is equivalent to: 

\[z_j ^N = \prod_{k \neq j} \left( - \frac{1-2\Delta_t z_j + z_j z_k}{1-2\Delta_t z_k + z_j z_k}\right).\]

\[z_j ^N = (-1)^{n+1} \prod_{k \neq j} \frac{z_k}{z_j} \left( \frac{1/z_j-2\Delta_t + z_k}{1/z_k-2\Delta_t + z_j}\right).\]

\item \textbf{Rewritting Bethe equations:}

On the other hand, the equation $(E_j)[t,n,N]$ implies, taking the exponential and 
then using the definition of $\Theta_t$:

\[e^{-i N p_j} = (-1)^{n+1} \cdot e^{-i \sum_{k=1}^n \Theta_t (-p_j,-p_k)}.\]

\[e^{-i N p_j} = (-1)^{n+1} \cdot \prod_{k\neq j} 
e^{-i(p_j-p_k)} \frac{e^{-ip_j} + e^{ip_k} - 2\Delta_t}{e^{-ip_k} + e^{ip_j} - 2\Delta_t}.\]

\[z_j ^N  = (-1)^{n+1} \cdot \prod_{k\neq j} 
\frac{z_j}{z_k} \left( \frac{1/z_j-2\Delta_t + z_k}{1/z_k-2\Delta_t + z_j}\right).\]

\end{enumerate}

\end{proof}

\begin{theorem}
\label{theorem.identification.vectors}
Let $(p_1,...,p_n) \in I_t^n$ 
such that $p_1 < ... < p_n$ and for all $j$, 
the equation $(E_j)[t,n,N]$ is verified. 
Then there exists some $\rho \in \mathbb{C}^{*}$ 
such that:
\[\psi_{\mu_t,n,N} (x_1,...,x_n)= \rho \cdot \psi_{t,n,N} (p_1,...,p_n).\]
\end{theorem}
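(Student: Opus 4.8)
The plan is to compare the two explicit coordinate descriptions of the vectors and match them term by term over $\Sigma_n$. Since $\psi_{\mu_t,n,N}(x_1,\dots,x_n) = B_{\mu_t,N}(x_1)\cdots B_{\mu_t,N}(x_n)\cdot\boldsymbol{\nu}_N$, Lemma~\ref{lemma.expression.eigenvector} applied to the local matrix function $R^x_{\mu_t}$ (whose entries are $a_t,b_t,c_t$) gives, for every $\boldsymbol{\epsilon}\in\Omega_N^{(n)}$,
\[
\psi_{\mu_t,n,N}(x_1,\dots,x_n)[\boldsymbol{\epsilon}] = \Big(\prod_{k=1}^n c_t(x_k)a_t(x_k)^N\Big)\sum_{\sigma\in\Sigma_n}\Big(\prod_{k=1}^n\Big(\frac{b_t(x_k)}{a_t(x_k)}\Big)^{q_{\sigma(k)}[\boldsymbol{\epsilon}]}\Big)\prod_{\sigma(k)<\sigma(j)}\lambda(x_k,x_j).
\]
By Fact~\ref{fact.quotient} one has $b_t(x_k)/a_t(x_k)=1/z_k=e^{ip_k}$, so the plane-wave factor equals $\prod_k e^{ip_k q_{\sigma(k)}[\boldsymbol{\epsilon}]}$; replacing the summation index $\sigma$ by $\sigma^{-1}$ rewrites this as $\prod_k e^{ip_{\sigma(k)}q_k[\boldsymbol{\epsilon}]}$ and rewrites $\prod_{\sigma(k)<\sigma(j)}\lambda(x_k,x_j)$ as $\prod_{1\le a<b\le n}\lambda(x_{\sigma(a)},x_{\sigma(b)})$. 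Thus $\psi_{\mu_t,n,N}(x_1,\dots,x_n)$ and $\psi_{t,n,N}(p_1,\dots,p_n)$ are both written, on $\Omega_N^{(n)}$ (where both vectors lie), as sums over $\Sigma_n$ of the \emph{same} functions $\boldsymbol{\epsilon}\mapsto\prod_k e^{ip_{\sigma(k)}q_k[\boldsymbol{\epsilon}]}$, with coefficients $\big(\prod_k c_t(x_k)a_t(x_k)^N\big)\prod_{a<b}\lambda(x_{\sigma(a)},x_{\sigma(b)})$ and $C_\sigma(t)[p_1,\dots,p_n]=\epsilon(\sigma)\prod_{a<b}e^{ip_{\sigma(a)}}(e^{-ip_{\sigma(a)}}+e^{ip_{\sigma(b)}}-2\Delta_t)$ respectively. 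It therefore suffices to show these two families of coefficients are proportional with a factor independent of $\sigma$.

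The computational core is the evaluation of $\lambda(x_j,x_k)=a_t(x_k-x_j)/b_t(x_k-x_j)$. Writing $x_j=\mu_t/2+i\alpha_j/2$, so that $x_k-x_j=i(\alpha_k-\alpha_j)/2$, substituting the explicit sines of $R^x_{\mu_t}$, inverting the relation $z_j=a_t(x_j)/b_t(x_j)$ of Fact~\ref{fact.quotient} to express $e^{\alpha_j}$ through $z_j$, and using $e^{i\mu_t}+e^{-i\mu_t}=-2\Delta_t$, a direct computation gives
\[
\lambda(x_j,x_k)=\frac{1-2\Delta_t z_k+z_j z_k}{z_j-z_k}.
\]
On the other hand $e^{ip_j}(e^{-ip_j}+e^{ip_k}-2\Delta_t)=\dfrac{1-2\Delta_t z_k+z_j z_k}{z_j z_k}$, so the numerator cancels and
\[
\frac{\lambda(x_j,x_k)}{e^{ip_j}(e^{-ip_j}+e^{ip_k}-2\Delta_t)}=\frac{z_j z_k}{z_j-z_k}=:g_{j,k},\qquad g_{k,j}=-g_{j,k}.
\]
Hence $\prod_{a<b}\lambda(x_{\sigma(a)},x_{\sigma(b)})=\big(\prod_{a<b}g_{\sigma(a),\sigma(b)}\big)\prod_{a<b}e^{ip_{\sigma(a)}}(e^{-ip_{\sigma(a)}}+e^{ip_{\sigma(b)}}-2\Delta_t)$, and since $g$ is antisymmetric, counting the inversions of $\sigma$ yields $\prod_{a<b}g_{\sigma(a),\sigma(b)}=\epsilon(\sigma)\prod_{u<v}g_{u,v}$. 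Therefore $\prod_{a<b}\lambda(x_{\sigma(a)},x_{\sigma(b)})=\big(\prod_{u<v}g_{u,v}\big)\,C_\sigma(t)[p_1,\dots,p_n]$ for every $\sigma$, which is the asserted $\sigma$-independent proportionality.

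Putting the pieces together, $\psi_{\mu_t,n,N}(x_1,\dots,x_n)=\rho\cdot\psi_{t,n,N}(p_1,\dots,p_n)$ with
\[
\rho=\Big(\prod_{k=1}^n c_t(x_k)a_t(x_k)^N\Big)\prod_{1\le u<v\le n}\frac{z_u z_v}{z_u-z_v},
\]
and $\rho\neq0$: since $\mu_t\in(0,\pi)$ and $\mu_t/2\in(0,\pi/2)$ one has $c_t(x_k)=\sin\mu_t/\sin(\mu_t/2)\neq0$ and $a_t(x_k)=\sin(\mu_t/2-i\alpha_k/2)/\sin(\mu_t/2)\neq0$, while $z_u=e^{-ip_u}\neq0$ and $z_u\neq z_v$ because the $p_j$ are pairwise distinct ($p_1<\dots<p_n$). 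I expect the main obstacle to be the evaluation of $\lambda(x_j,x_k)$ in the variables $z_j$ — specifically recognising the cancellation of the factor $1-2\Delta_t z_k+z_j z_k$, which is exactly what produces the purely antisymmetric $g_{j,k}$ and hence lets the two $\Sigma_n$-sums be identified up to the single scalar $\rho$ — together with the bookkeeping of the relabelling $\sigma\mapsto\sigma^{-1}$ needed to align the plane-wave parts of the two eigenvectors; once these are in place the rest is routine.
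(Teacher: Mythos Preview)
Your argument is correct and follows the same global strategy as the paper: apply Lemma~\ref{lemma.expression.eigenvector}, relabel $\sigma\mapsto\sigma^{-1}$, use Fact~\ref{fact.quotient} for the plane-wave part, and then show that the remaining product over pairs differs from $C_\sigma(t)$ only by a $\sigma$-independent factor times $\epsilon(\sigma)$.

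The one place where you diverge from the paper is the computational core. The paper evaluates $a_t(x_{\sigma(j)}-x_{\sigma(k)})/b_t(x_{\sigma(j)}-x_{\sigma(k)})$ by invoking Lemma~\ref{lemma.expression.a} and keeps the result partly in the $\alpha$-variables, so that the antisymmetric piece carrying the sign is $(e^{(\alpha_{\sigma(k)}-\alpha_{\sigma(j)})/2}-e^{(\alpha_{\sigma(j)}-\alpha_{\sigma(k)})/2})^{-1}$ together with a $b_t(x_{\sigma(j)})b_t(x_{\sigma(k)})/2\Delta_t$ prefactor; the resulting $\rho$ is written with a square root of a product of absolute values and factors $b_t(x_k)^{N-1}$. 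You instead compute $\lambda(x_j,x_k)$ directly as a rational function of $z_j,z_k$, which cleanly isolates the antisymmetric factor $g_{j,k}=z_jz_k/(z_j-z_k)$. This buys you a tidier closed form for $\rho$ and a transparent verification that $\rho\neq 0$ (the paper states $\rho\in\mathbb{C}^*$ but does not check it explicitly), and it avoids the detour through Lemma~\ref{lemma.expression.a}. Conversely, the paper's route reuses a lemma already needed for Theorem~\ref{theorem.identification.equations}, so no new identity has to be established. Both executions yield the same conclusion.
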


\begin{proof}
By an application of Lemma~\ref{lemma.expression.eigenvector}, 
for all $n$ and $\boldsymbol{\epsilon} \in \Omega_N^{(n)}$:

\[\psi_{\mu_t,n,N} (\textbf{x})[\boldsymbol{\epsilon}] = 
\left( \prod_{k=1}^n c_t(x_k) a_t(x_k)^N \right)
\left( \sum_{\sigma \in \Sigma_n} \prod_{k=1}^n \left( \frac{b_t (x_k)}{a_t (x_k)}\right)^{q_{\sigma(k)}[\boldsymbol{\epsilon}]}  \prod_{\sigma(k)<\sigma(j)} \frac{a_t (x_j-x_k)}{b_t (x_j-x_k)}\right).\]

By the change of variable $\sigma \mapsto \sigma^{-1}$, and that 
for all $k$, 
$b_t (x_{\sigma(k)})/a_t (x_{\sigma(k)}) = 1/z_{\sigma(k)}$:

\begin{align*}\psi_{\mu_t,n,N} (\textbf{x})[\boldsymbol{\epsilon}] & = 
\left( \prod_{k=1}^n c_t(x_k) a_t(x_k)^N \right)
\left( \sum_{\sigma \in \Sigma_n} \prod_{k=1}^n \left( \frac{b_t (x_{\sigma(k)})}{a_t (x_{\sigma(k)})}\right)^{q_k[\boldsymbol{\epsilon}]}  \prod_{k<j} \frac{a_t (x_{\sigma(j)}-x_{\sigma(k)})}{b_t (x_{\sigma(j)}-x_{\sigma(k)})}\right)\\
& = \left( \prod_{k=1}^n c_t(x_k) a_t(x_k)^N \right)
\left( \sum_{\sigma \in \Sigma_n} \prod_{k=1}^n e^{i p_{\sigma(k)} q_k[\boldsymbol{\epsilon}]}  \prod_{k<j} \frac{a_t (x_{\sigma(j)}-x_{\sigma(k)})}{b_t (x_{\sigma(j)}-x_{\sigma(k)})}\right)
\end{align*}

Furthermore, by an application of Lemma~\ref{lemma.expression.a}:
\begin{align*}
\frac{a_t (x_{\sigma(j)}-x_{\sigma(k)})}{b_t (x_{\sigma(j)}-x_{\sigma(k)})} & = -
\frac{b_t (x_{\sigma(j)}) b_t (x_{\sigma(k)}) - 2\Delta_t a_t (x_{\sigma(k)}) b_t (x_{\sigma(j)}) + a_t (x_{\sigma(j)}) a_t (x_{\sigma(k)})}{2\Delta_t \left( e^{(\alpha_{\sigma(k)} - \alpha_{\sigma(j)})/2} - e^{(\alpha_{\sigma(j)} - \alpha_{\sigma(k)})/2}\right)}.\\
& = \frac{-b_t (x_{\sigma(j)}) b_t (x_{\sigma(k)})}{2 \Delta_t \left( e^{(\alpha_{\sigma(k)} - \alpha_{\sigma(j)})/2} - e^{(\alpha_{\sigma(j)} - \alpha_{\sigma(k)})/2}\right)} \cdot \frac{1}{z_{\sigma(k)}}
\left(z_{\sigma(k)} + \frac{1}{z_{\sigma(j)}} - 2\Delta_t \right).\\
\end{align*}

Moreover, 

\[\prod_{k <j} \left( - \frac{b_t (x_{\sigma(j)}) b_t (x_{\sigma(k)})}{2\Delta_t \left( e^{(\alpha_{\sigma(k)} - \alpha_{\sigma(j)})/2} - e^{(\alpha_{\sigma(j)} - \alpha_{\sigma(k)})/2}\right)}\right)\] is equal to 

\[\frac{\prod_{k=1}^n b_t (x_k) ^ {N-1}}{(2\Delta_t)^{(n-1)n/2} (-1)^{|\{(k,j): 
k<j, \sigma(k)>\sigma(j)\}|} \cdot \sqrt{\prod_{k \neq j} \left| e^{(\alpha_k-\alpha_j)/2} 
- e^{(\alpha_j-\alpha_k)/2} \right|^{N-1}}}\]
which is equal to 
\[\frac{\prod_{k=1}^n b_t (x_k) ^ {N-1}}{(2\Delta_t)^{(n-1)n/2} \epsilon(\sigma) \cdot \sqrt{\prod_{k \neq j} \left| e^{(\alpha_k-\alpha_j)/2} 
- e^{(\alpha_j-\alpha_k)/2} \right|^{N-1}}}\]

We deduce, since for all $l$, $z_{\sigma(l)} = 
e^{-p_l}$ that 
\[\psi_{\mu_t,n,N} (\textbf{x})[\boldsymbol{\epsilon}] = \rho \cdot \psi_{t,n,N} (p_1,...p_n),\]
where 
\[\rho = \frac{\prod_{k=1}^n c_t (x_k) a_t (x_k)^N b(x_k)^{N-1}}{(2\Delta_t)^{(n-1)n/2} \sqrt{\prod_{k \neq j} \left| e^{(\alpha_k-\alpha_j)/2} 
- e^{(\alpha_j-\alpha_k)/2} \right|^{N-1}}}.\]
\end{proof}

\begin{theorem}
\label{theorem.identification.values}
Let $(p_1,...,p_n) \in I_t^n$ 
such that $p_1 < ... < p_n$ and for all $j$, 
the equation $(E_j)[t,n,N]$ is verified. 
Then we have the equality:
\[\Lambda_{n,N} (t) [p_1,...,p_n] = \Lambda_{\mu_t,n,N} (x_1,...,x_n).\]
\end{theorem}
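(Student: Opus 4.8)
The goal is to match the eigenvalue produced by the algebraic Bethe ansatz, namely
\[
\Lambda_{\mu_t,n,N}(x_1,\dots,x_n) = a_t(x_0)^N \prod_{k=1}^n \frac{a_t(x_k - x_0)}{b_t(x_k - x_0)} + b_t(x_0)^N \prod_{k=1}^n \frac{a_t(x_0 - x_k)}{b_t(x_0 - x_k)},
\]
evaluated at $x=x_0=\mu_t/2$ (the point where the Yang-Baxter path meets the Lieb path), with the coordinate-ansatz eigenvalue $\Lambda_{n,N}(t)[p_1,\dots,p_n]$. The natural strategy is to compute each of the two terms of $\Lambda_{\mu_t,n,N}$ explicitly in terms of the $z_k = e^{-ip_k}$, using the definitions of $a_t,b_t$ and the substitution $x_j = \mu_t/2 + i\alpha_j/2$, $\kappa_t(\alpha_j) = p_j$, and then to recognize the result as $\prod_k L_t(e^{ip_k}) + \prod_k M_t(e^{ip_k})$ in the generic case $p_j\neq 0$, and as the derivative expression involving $\partial\Theta_t/\partial x$ in the degenerate case $p_l = 0$.

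First I would treat the generic case. Using $a_t(x_0) = b_t(x_0) = 1$ (from Fact~\ref{fact.quotient} and the Notation setting $z_0=1$; indeed $a_t(\mu_t/2) = \sin(\mu_t/2)/\sin(\mu_t/2) = 1$ and similarly $b_t(\mu_t/2)=1$), the two leading factors $a_t(x_0)^N$ and $b_t(x_0)^N$ are both $1$. Then I compute $a_t(x_k-x_0)/b_t(x_k-x_0)$: since $x_k - x_0 = i\alpha_k/2$, a direct exponential computation (parallel to Lemma~\ref{lemma.expression.a} and Fact~\ref{fact.quotient}) gives this ratio as a Möbius function of $z_k$. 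The claim is that this ratio equals $L_t(z_k)$ — that is, $1 + t^2 z_k/(1-z_k)$ up to the normalization — and, after exchanging $x_0$ and $x_k$, that $a_t(x_0-x_k)/b_t(x_0-x_k) = M_t(z_k) = 1 - t^2/(1-z_k)$. The key identity to establish here is that $t^2 = 2 - 2\Delta_t$ (from the definition $\Delta_t = (2-t^2)/2$) and that $-2\Delta_t = e^{i\mu_t}+e^{-i\mu_t}$, so that all trigonometric data collapse into the rational functions $L_t,M_t$ of $z_k = e^{-ip_k}$; this is a bookkeeping computation with exponentials of $\alpha_k/2$ and $\pm i\mu_t/2$, exactly of the type carried out in the proof of Lemma~\ref{lemma.expression.a}. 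Once $\prod_k a_t(x_k-x_0)/b_t(x_k-x_0) = \prod_k L_t(e^{ip_k})$ and $\prod_k a_t(x_0-x_k)/b_t(x_0-x_k) = \prod_k M_t(e^{ip_k})$ are verified, the generic case is done.

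Second, the degenerate case where some $p_l = 0$, equivalently $\alpha_l = 0$ and $x_l = \mu_t/2 = x_0$. Here $b_t(x_0 - x_l) = b_t(0) = 0$, so the second product in $\Lambda_{\mu_t,n,N}$ has a vanishing denominator and must be handled as a limit: one perturbs $p_l \to 0$ (or $\alpha_l\to 0$), expands both terms to first order, and extracts the finite part. I expect the first term $\prod_k L_t(e^{ip_k})$ to vanish as $p_l\to 0$ (since $L_t(1) = 1 + t^2/(1-1)\cdot 1$ — actually $L_t(z)\to\infty$, so one must be careful: more likely it is the combination that is finite), and the singular contributions from the two terms to cancel, leaving the stated expression $\bigl(2 + t^2(N-1) + \sum_{k\neq l}\partial_x\Theta_t(0,p_k)\bigr)\prod_k M_t(e^{ip_k})$. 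The factor $2 + t^2(N-1)$ should emerge from differentiating $a_t(x)^N$ and $b_t(x)^N$ at $x_0$ together with the $N$-dependence, and the sum $\sum_{k\neq l}\partial_x\Theta_t(0,p_k)$ from differentiating the products $\prod_{k\neq l}$ of the Möbius ratios, using the relation (noted just after the Notation defining $\Theta$) between the ratio $(e^{-ix}+e^{iy}-2\Delta_t)/(e^{-iy}+e^{ix}-2\Delta_t)$ and $\exp(-i\Theta_t(x,y))$ together with Computation~\ref{computation.kappa.derivative}-style derivative formulas; this is where Bethe equation $(E_l)[t,n,N]$ for the special index $l$ may also be needed to kill a leftover term. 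This degenerate-case limit is the main obstacle: it requires a careful first-order Taylor expansion in which several apparently divergent pieces must be shown to cancel, and correctly identifying that the surviving linear term is exactly $2 + t^2(N-1) + \sum_{k\neq l}\partial_x\Theta_t(0,p_k)$ demands matching the derivative of the algebraic-ansatz eigenvalue against the definition of $\Theta_t$ and the auxiliary functions. The generic case, by contrast, is essentially a routine (if lengthy) exponential computation, and I would present it first as a warm-up that fixes all the needed identities before deploying them in the limit.
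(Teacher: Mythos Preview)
Your plan for the generic case (all $p_j\neq 0$) is correct and is exactly what the paper does: use $a_t(x_0)=b_t(x_0)=1$, then verify by direct exponential computation that $a_t(x_k-x_0)/b_t(x_k-x_0)=L_t(z_k)$ and $a_t(x_0-x_k)/b_t(x_0-x_k)=M_t(z_k)$.

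For the degenerate case your overall picture is right (two simple poles that cancel to leading order, the finite part involving $\partial_x\Theta_t(0,p_k)$, and the $l$-th Bethe equation forcing the cancellation), but there is one organisational difference worth noting. You propose to perturb the Bethe root $p_l$ (equivalently $\alpha_l$ or $x_l$) and send it to $0$. The paper instead keeps the Bethe roots $(x_1,\dots,x_n)$ fixed --- they already satisfy $(E'_j)$, which do not involve the spectral parameter --- and perturbs the \emph{spectral parameter} $x_0=\mu_t/2$ to $x_0^{(\epsilon)}=\mu_t/2+i\epsilon/2$. This is cleaner: Theorem~\ref{theorem.algebraic.bethe.ansatz} then gives a bona fide eigenvalue $\Lambda^{(\epsilon)}$ of $T_{\mu_t,N}(x_0^{(\epsilon)})$ for every $\epsilon$, so one is taking an honest limit of eigenvalues along an analytic path of commuting matrices, and existence of the limit is automatic. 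With your perturbation of $p_l$ the Bethe equations fail along the way, so you are only taking a formal limit of the rational expression; this still works, but you then owe a separate argument that the limit equals the actual eigenvalue.

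Concretely, the paper writes $\Lambda^{(\epsilon)}=\frac{a_t(-i\epsilon/2)}{b_t(-i\epsilon/2)}\Lambda_0^{(\epsilon)}+\frac{a_t(i\epsilon/2)}{b_t(i\epsilon/2)}\Lambda_1^{(\epsilon)}$ with the factors $k\neq l$ collected in $\Lambda_0^{(\epsilon)},\Lambda_1^{(\epsilon)}$, expands the two singular prefactors, and shows $\Lambda_0^{(\epsilon)}/\Lambda_1^{(\epsilon)}=e^{-i\chi(\epsilon)}$ where $\chi(\alpha)=(n+1)\pi+N\kappa_t(\alpha)+\sum_{k\neq l}\Theta_t(\kappa_t(\alpha),p_k)$. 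The $l$-th Bethe equation is precisely $\chi(0)=0$, which kills the pole; the first-order term $\chi'(0)$ produces the finite part. The constant $2+t^2(N-1)$ then comes out of $-2\cos(\mu_t)+2(1+\cos(\mu_t))N$ via $\cos(\mu_t)=-\Delta_t$ and $\kappa'_t(0)=\sin(\mu_t)/(1-\cos(\mu_t))$, not from differentiating $a_t(x)^N,b_t(x)^N$ as you guessed --- those powers sit inside $\Lambda_0^{(\epsilon)},\Lambda_1^{(\epsilon)}$ and contribute only through $N\kappa_t(\epsilon)$ in $\chi$.
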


\begin{proof}

\begin{itemize}

\item \textbf{When for all $j$, $p_j \neq 0$:}

\begin{enumerate}

\item \textbf{Recall of the definition of 
$\Lambda_{\mu_t,n,N} (x_1,...,x_n)$:} 

\[\Lambda_{\mu_t,n,N} (x_1,...,x_n) 
= a_t (x_0)^N \prod_{k=1}^n \frac{a_t (x_k-x_0)}{b_t (x_k-x_0)} + b_t (x_0)^N \prod_{k=1}^n \frac{a_t (x_0-x_k)}{b_t (x_0-x_k)}.\]

 \item \textbf{Factors in the first product 
 expressed with $L_t$:} for all $j \neq 0$,
  
 \[\frac{a_t (x_j-x_0)}{b_t (x_j-x_0)} = L_t (z_j).\]
 Indeed, for all $j \neq 0$:
 \[\frac{a_t (x_j-x_0)}{b_t (x_j-x_0)} = \frac{a_t (i\frac{\alpha_j}{2})}{b_t (i\frac{\alpha_j}{2})} = \frac{e^{i\mu_t +\frac{\alpha_j}{2}}-e^{-i\mu_t - \frac{\alpha_j}{2}}}{e^{\frac{\alpha_j}{2}}-e^{-\frac{\alpha_j}{2}}}.\]
 
 On the other hand, since $z_i = a_t (x_i)/b_t (x_i)$:
 
 \[L_t (z_i) = 1+ \frac{t^2 z_i}{1-z_i} 
 = 1+ \frac{t^2 a_t (x_i)/b_t (x_i)}{1-a_t (x_i)/b_t (x_i)}
 = \frac{t^2 a_t (x_i) +b_t (x_i) - a_t (x_i)}{b_t (x_i) - a_t (x_i)}\]
 Since $\Delta_t = (2-t^2)/2$, $t^2 = 2-2\Delta_t$:
 \[ 
 L_t(z_i) = \frac{b_t (x_i)+a_t (x_i) -2\Delta_t a_t (x_i)}{b_t (x_i)-a_t (x_i)} \]
 
  \[-2a_t (x_i) \Delta_t = (e^{i\mu_t}+e^{-i\mu_t}).
 (e^{i\frac{\mu_t}{2}+\frac{\alpha_j}{2}} - e^{-i\frac{\mu_t}{2}-\frac{\alpha_j}{2}} )\]
 
 \begin{align*}
 L_t(z_i) &  = \frac{e^{i\frac{\mu_t}{2}-\frac{\alpha_j}{2}} - e^{-i\frac{\mu_t}{2}+\frac{\alpha_j}{2}} 
 +( e^{i\frac{\mu_t}{2}+\frac{\alpha_j}{2}} - e^{-i\frac{\mu_t}{2}-\frac{\alpha_j}{2}})}{e^{i\frac{\mu_t}{2}-\frac{\alpha_j}{2}} - e^{-i\frac{\mu_t}{2}+\frac{\alpha_j}{2}} 
 -( e^{i\frac{\mu_t}{2}+\frac{\alpha_j}{2}} - e^{-i\frac{\mu_t}{2}-\frac{\alpha_j}{2}})} \\
 & \quad   + \frac{
 e^{i3\frac{\mu_t}{2} + \frac{\alpha_j}{2}}  
 + e^{-i \frac{\mu_t}{2} + \frac{\alpha_j}{2}} - e^{i \frac{\mu_t}{2} - \frac{\alpha_j}{2}} - e^{-i 3 \frac{\mu_t}{2} -\frac{\alpha_j}{2}} }{e^{i\frac{\mu_t}{2}-\frac{\alpha_j}{2}} - e^{-i\frac{\mu_t}{2}+\frac{\alpha_j}{2}} 
 -( e^{i\frac{\mu_t}{2}+\frac{\alpha_j}{2}} - e^{-i\frac{\mu_t}{2}-\frac{\alpha_j}{2}})}
 \end{align*}
 
 It is simplified into:

 \begin{align*}
 L_t(z_i)  & = \frac{e^{3i\frac{\mu_t}{2}+\frac{\alpha_j}{2}} - e^{-3i\frac{\mu_t}{2}-\frac{\alpha_j}{2}}
 +( e^{i\frac{\mu_t}{2}+\frac{\alpha_j}{2}} - e^{-i\frac{\mu_t}{2}-\frac{\alpha_j}{2}})}{e^{i\frac{\mu_t}{2}-\frac{\alpha_j}{2}} - e^{-i\frac{\mu_t}{2}+\frac{\alpha_j}{2}} 
 -( e^{i\frac{\mu_t}{2}+\frac{\alpha_j}{2}} - e^{-i\frac{\mu_t}{2}-\frac{\alpha_j}{2}})}\\
 & = \left(\frac{e^{i\frac{\mu_t}{2}} - e^{-i\frac{\mu_t}{2}}}{e^{i\frac{\mu_t}{2}} - e^{-i\frac{\mu_t}{2}}}\right)
\frac{e^{i\mu_t +\frac{\alpha_j}{2}}-e^{-i\mu_t - \frac{\alpha_j}{2}}}{e^{\frac{\alpha_j}{2}}-e^{-\frac{\alpha_j}{2}}}\\
& = \frac{e^{i\mu_t +\frac{\alpha_j}{2}}-e^{-i\mu_t - \frac{\alpha_j}{2}}}{e^{\frac{\alpha_j}{2}}-e^{-\frac{\alpha_j}{2}}}.
 \end{align*}
 
  \item \textbf{Factors in the second product 
 expressed with $M_t$:}

Using similar arguments, we get to:
\[\frac{a_t (x_0-x_j)}{b_t (x_0-x_j)}= M_t (z_i).\]
 
 \item \textbf{Equality to $\Lambda_{n,N} (t) [p_1,...,p_n]$:} 
 
 As a consequence, since $a_t (x_0)=b_t (x_0)=1$:
 
 The candidate eigenvalue is 
 then equal 
 to \[\prod_{k=1}^n L_t (z_k) + \prod_{k=1}^n M_t (z_k) = \Lambda_{n,N} (t) [p_1,...,p_n].\]
 
 \end{enumerate}
 
\item \textbf{When there 
 exists (a unique) $l$ such that $p_l=0$:}
 
 \begin{enumerate}
 
 \item \textbf{Notations:}
 
 Since $p_l=0$, this means, since 
 $\kappa_t$ is increasing and antisymmetric, 
 that $\alpha_l=0$, 
 and thus $x_l = x_0$. In order to prove the theorem, the 
 strategy is to introduce a perturbation to $x_0$. Let us denote, for all $\epsilon>0$, 
 $x_0^{(\epsilon)}
= \frac{\mu_t}{2} + i\frac{\epsilon}{2}$.
Let us also denote 
\begin{align*}
\Lambda^{(\epsilon)} & \equiv \frac{a_t(x_l-x_0^{(\epsilon)})}{b_t(x_l-x_0^{(\epsilon)})}\Lambda^{(\epsilon)}_0+ \frac{a_t(x_0^{(\epsilon)}-x_l)}{b_t(x_0^{(\epsilon)}-x_l)}\Lambda^{(\epsilon)}_1\\
& = \frac{a_t (-i\epsilon/2)}{b_t (-i\epsilon/2)}\Lambda^{(\epsilon)}_0 + \frac{a_t (i\epsilon/2)}{b_t (i\epsilon/2)}\Lambda^{(\epsilon)}_1,
\end{align*}

where 
\[ \Lambda^{(\epsilon)}_0 \equiv a(x_0^{(\epsilon)})^N \prod_{\underset{k \neq l}{k=1}}^n \frac{a_t (x_k -x_0^{(\epsilon)})}{b_t (x_k -x_0^{(\epsilon)})} \qquad 
\text{and} 
\qquad \Lambda^{(\epsilon)}_1 =  b(x_0^{(\epsilon)})^N \prod_{\underset{k \neq l}{k=1}}^n \frac{a_t (x_0^{(\epsilon)}-x_k)}{b_t (x_0^{(\epsilon)}-x_k)}.\]
 
 \item \textbf{Perturbed eigenvalue $\Lambda^{(\epsilon)}$ 
 is in the 
 spectrum of $T_{\mu_t,N} (x_0^{(\epsilon)})$:}
 
By Theorem~\ref{theorem.algebraic.bethe.ansatz},
for all $\epsilon>0$ sufficiently small,

\[T_{\mu_t,N} (x_0^{(\epsilon)}) \cdot 
\psi_{\mu_t,n,N} (\textbf{x}) = \Lambda^{(\epsilon)} \cdot \psi_{\mu_t,n,N} (\textbf{x})\]

It is natural to expect that $\Lambda^{(\epsilon)}$ 
admits a limit when $\epsilon$ tends to zero. 
However, this computation is slightly more 
complex than in the first case (when all the $p_j$ 
are non zero).

\item \textbf{Developping the expression of 
$\Lambda^{(\epsilon)}$:}

We have the following:
\begin{align*}
\frac{a_t (-i\epsilon/2)}{b_t (-i\epsilon/2)} & = \frac{e^{i\mu_t  - \epsilon/2} - e^{-i\mu_t + \epsilon/2}}{e^{\epsilon/2}-e^{-\epsilon/2}}
= - e^{i\mu_t} +
\frac{e^{i\mu_t+\epsilon/2}-e^{-i\mu_t+\epsilon/2}}{e^{\epsilon/2}-e^{-\epsilon/2}}\\
& = - e^{i\mu_t} + e^{\epsilon}
 \frac{e^{i\mu_t}-e^{-i\mu_t}}{e^{\epsilon}-1}
\end{align*}

Similarly:

\[\frac{a_t (i\epsilon/2)}{b_t (i\epsilon/2)} =  
-e^{i\mu_t} -\frac{e^{i\mu_t}-e^{-i\mu_t}}{e^{\epsilon}-1}.\]

As a consequence: 

\begin{align*}
\Lambda^{(\epsilon)} &  = - e^{i\mu_t} \left( \Lambda^{(\epsilon)}_0 + \Lambda^{(\epsilon)}_1 \right)  
+ e^{\epsilon} \frac{e^{i\mu_t}-e^{-i\mu_t}}{e^{\epsilon}-1} \Lambda^{(\epsilon)}_0 - \frac{e^{i\mu_t}-e^{-i\mu_t}}{e^{\epsilon}-1} \Lambda^{(\epsilon)}_1 \\
& = - e^{i\mu_t} \left( \Lambda^{(\epsilon)}_0 + \Lambda^{(\epsilon)}_1 \right) +  e^{\epsilon} \frac{e^{i\mu_t}-e^{-i\mu_t}}{e^{\epsilon}-1}\left( 
\Lambda^{(\epsilon)}_0 - \Lambda^{(\epsilon)}_1 \right) + (e^{i\mu_t}-e^{-i\mu_t}) \Lambda^{(\epsilon)}_1\\
& = -e^{i\mu_t} \Lambda_0^{(\epsilon)} - e^{-i\mu_t} \Lambda_1^{(\epsilon)} + e^{\epsilon} \frac{e^{i\mu_t}-e^{-i\mu_t}}{e^{\epsilon}-1}\left( 
\Lambda^{(\epsilon)}_0 - \Lambda^{(\epsilon)}_1 \right)\\
& = - e^{i\mu_t} \Lambda_0^{(\epsilon)} - e^{-i\mu_t} \Lambda_1^{(\epsilon)} + e^{\epsilon} \frac{e^{i\mu_t}-e^{-i\mu_t}}{e^{\epsilon}-1} \Lambda^{(\epsilon)}_1 \left( 
\frac{\Lambda^{(\epsilon)}_0}{\Lambda^{(\epsilon)}_1} - 1 \right).
\end{align*}

The only term whose limit is unknown is the third 
one, on which we will focus.

\item \textbf{An expression of $\Lambda^{(\epsilon)}_0/ \Lambda^{(\epsilon)}_1$ 
using the function $\Theta_t$:}

Let us denote $\chi$ the function  
\[\chi : \alpha \mapsto (n+1)\pi + N \kappa_t (\alpha) 
+ \sum_{\underset{k\neq l}{k=1}}^n \Theta_t (\kappa_t (\alpha),p_k).\]

By definition of $\Theta_t$, antisymmetry of $\kappa_t$ and Lemma~\ref{lemma.expression.a}:

\begin{align*}
\frac{\Lambda^{(\epsilon)}_0}{\Lambda^{(\epsilon)}_1} 
& = \frac{a_t (x_0^{(\epsilon)})^N}{b_t (x_0^{(\epsilon)})^N} \prod_{\underset{k\neq l}{k=1}}^n \frac{a_t (x_k-x_0^{(\epsilon)})}{b_t (x_k-x_0^{(\epsilon)})} \frac{b_t (x_0^{(\epsilon)}-x_k)}{a_t (x_0^{(\epsilon)}-x_k)}\\
& = \left(\frac{ e^{i\frac{\mu_t}{2}+\frac{\epsilon}{2}}-e^{-i\frac{\mu_t}{2}-\frac{\epsilon}{2}}}{e^{i\frac{\mu_t}{2}-\frac{\epsilon}{2}}-e^{-i\frac{\mu_t}{2}+\frac{\epsilon}{2}}}\right)^N \prod_{\underset{k\neq l}{k=1}}^n \frac{a_t (x_k-x_0^{(\epsilon)})}{b_t (x_k-x_0^{(\epsilon)})} \frac{b_t (x_0^{(\epsilon)}-x_k)}{a_t (x_0^{(\epsilon)}-x_k)}\\
& = (-1)^{n+1} \left(\frac{ e^{i \mu_t}-e^{-\epsilon}}{e^{i \mu_t -\epsilon}-1}\right)^N 
\exp(i\left( \sum_{\underset{k\neq l}{k=1}}^n \Theta_t (\kappa_t (\epsilon),p_k)\right))\\
& = \exp(i \left( -(n+1)\pi + N \kappa_t (-\epsilon) - \sum_{\underset{k\neq l}{k=1}}^n \Theta_t (\kappa_t (\epsilon),p_k) \right))\\
& = \exp(i \left( -(n+1)\pi - N \kappa_t (\epsilon) - \sum_{\underset{k\neq l}{k=1}}^n \Theta_t (\kappa_t (\epsilon),p_k) \right))
\end{align*}

We deduce that $\Lambda^{(\epsilon)}_0/ \Lambda^{(\epsilon)}_1 = e^{-i\chi(\epsilon)}$.

When $\epsilon=0$, by Bethe equations, $\chi(\epsilon)=0$. 

\item \textbf{First term of Taylor expansion:}

We deduce that $\chi(\epsilon) = \chi'(0) \epsilon + o(\epsilon)$. As a consequence, 
\[ \frac{\Lambda^{(\epsilon)}_0}{\Lambda^{(\epsilon)}_1} = 
1 - i\chi'(0) \epsilon + o(\epsilon).\]

Since when $\epsilon \rightarrow 0$, $\Lambda_0^{(\epsilon)}$ converges towards 
$\prod_{k \neq l} M(z_k)$, which is equal by Bethe equations
to $\prod_{k \neq l} L(z_k)$ - the limit of $\Lambda_1^{(\epsilon)}$ - $\Lambda^{(\epsilon)}$ 
tends towards 
\begin{align*}
\Lambda_{n,N} (t) [p_1,...,p_n] & = - (e^{i\mu_t}+ e^{-i\mu_t}) \cdot \prod_{k \neq l} M(z_k) - i (e^{i\mu_t}-e^{-i\mu_t}) \cdot \chi'(0) \cdot \prod_{k \neq l} M(z_k) \\
\end{align*}

Since $\kappa'_t (0) = \frac{\sin(\mu_t)}{1-\cos(\mu_t)}$, 
and using trigonometric equalities the equalities \[\Delta_t = -\cos(\mu_t) = (2-t^2)/2,\]
we have the following:

\begin{align*}
\Lambda_{n,N} (t) [p_1,...,p_n] & = \left(  - 2\cos(\mu_t) + 2 \frac{\sin^2 (\mu_t)}{1-\cos(\mu_t)} \cdot \left( N  + \sum_{k \neq l} \frac{\partial \Theta_t}{\partial x} (0,p_k)\right)\right) \cdot \prod_{k \neq l} M(z_k)\\
& = \left( - 2\cos(\mu_t) + 2 (1+ \cos(\mu_t)) \cdot  \left( N  + \sum_{k \neq l} \frac{\partial \Theta_t}{\partial x} (0,p_k)\right)\right) \cdot \prod_{k \neq l} M(z_k)\\
& = \left( (2-t^2) + t^2 \cdot \left(  N  + \sum_{k \neq l} \frac{\partial \Theta_t}{\partial x} (0,p_k) \right) \right) \cdot \prod_{k \neq l} M(z_k)\\ 
& = \left( 2 + t^2 (N-1) + t^2 \sum_{k \neq l} \frac{\partial \Theta_t}{\partial x} (0,p_k) \right) \cdot \prod_{k \neq l} M(z_k)\\
\end{align*}

\end{enumerate}

 \end{itemize}
\end{proof}

\section{\label{section.commutation} Commutation of the transfer matrix with some hamiltonian}

The aim of this section is to prove that for all $t$, the matrix 
$V_N (t)$ commutes with a Heisenberg hamiltonian $H_N (t)$ [Section~\ref{section.proof.commutation}], which 
is defined in Section~\ref{section.definitions.hamiltonian}.
The proof relies on the paths of commuting matrices $x \mapsto T_{\mu_t,N}^x$.
We also provide an expression of the eigenvalue of $H_N (t)$ 
associated to the candidate eigenvector obtained by the algebraic Bethe ansatz.

\subsection{\label{section.definitions.hamiltonian} Definitions}

Let us recall that $\Omega_N = \mathbb{C}^2 \otimes ... 
\otimes \mathbb{C}^2$. In this section, for the purpose of 
notation, we identify $\{1,...,N\}$ with 
$\mathbb{Z}/N\mathbb{Z}$.

\begin{notation}
Let us denote $a$ and $a^{*}$ the matrices in $\mathcal{M}_{2} (\mathbb{C})$ 
equal to 
\[ a \equiv \left(\begin{array}{cc} 0 & 0 \\ 1 & 0 \end{array}\right), \quad  a^{*} \equiv \left(\begin{array}{cc} 0 & 1 \\ 0 & 0 \end{array}\right).\]
For all $j \in \mathbb{Z}/N\mathbb{Z}$, we denote 
$a_j$ (\textbf{creation} operator at position $j$) and $a^{*}_j$ (\textbf{anihilation} operator at position $j$) the matrices in $\mathcal{M}_{2^N}(\mathbb{C})$ equal 
to 
\[a_j \equiv id \otimes ... \otimes a \otimes ... \otimes id, \quad a^{*}_j \equiv id \otimes ... \otimes a^{*} \otimes ... \otimes id.\]
where $id$ denotes the identity, and $a$ acts on 
the $j$th copy of $\mathbb{C}^2$.
\end{notation}

In other words, the image 
of a vector $\ket{\boldsymbol{\epsilon}_1 ... \boldsymbol{\epsilon}_N}$
in the basis of $\Omega_N$ by $a_j$ (resp. $a^{*}_j$) is as follows:

\begin{itemize}
\item if $\boldsymbol{\epsilon}_j=0$ (resp. $\boldsymbol{\epsilon}_j = 1$), then the image vector is $\textbf{0}$; 
\item if $\boldsymbol{\epsilon}_j=1$ (resp. $\boldsymbol{\epsilon}_j = 0$), then the image vector is $\ket{\boldsymbol{\eta}_1 ... \boldsymbol{\eta}_N}$ such that 
$\boldsymbol{\eta}_j = 0$ (resp. $\boldsymbol{\eta}_j = 1$) and 
for all $k \neq j$, $\boldsymbol{\eta}_k = \boldsymbol{\epsilon}_k$. 
\end{itemize}

\begin{remark}
The term creation (resp. anihilation) refer to 
the fact that for two elements $\boldsymbol{\epsilon}$, $\boldsymbol{\eta}$ of the 
basis of $\Omega_N$, $a_j [\boldsymbol{\epsilon},\boldsymbol{\eta}] \neq 0$ (resp. $a_j^{*} [\boldsymbol{\epsilon},\boldsymbol{\eta}] \neq 0$) implies that 
$|\boldsymbol{\eta}|_1 = |\boldsymbol{\eta}|+1$
(resp. $|\boldsymbol{\eta}|_1 = |\boldsymbol{\eta}|-1$). If we think of $1$ symbols 
as particles, this operator acts by creating (resp. anihilating)
a particle.
\end{remark}

\begin{definition}
\label{definition.hamiltonian}
Let us denote $H_N$ the matrix in $\mathcal{M}_{2^N} (\mathbb{C})$ defined as:
\[H_N (t) = \displaystyle{\sum_{j \in 
\mathbb{Z}/N\mathbb{Z}} \left(a^{*}_{j} a_{j+1} + a_{j} a^{*}_{j+1}\right) + D_t},\]
where $D_t$ is the diagonal matrix such that for all $\boldsymbol{\epsilon}$ in the canonical basis of $\Omega_N$, 
\[D_t [\boldsymbol{\epsilon},\boldsymbol{\epsilon}] = \Delta_t \cdot \left|\{i \in \mathbb{Z}/N \mathbb{Z} : \boldsymbol{\epsilon}_i = \boldsymbol{\epsilon}_{i+1}\}\right|.\]
\end{definition}

\begin{remark}
Let us notice that the hamiltonian defined in Definition~\ref{definition.hamiltonian} differs by a multiple of the 
identity matrix from the hamiltonian defined in~\cite{Duminil-Copin.ansatz} (equation (2.9)). Hence, the commutation property 
which is proved in the present text is equivalent to the 
commutation of $V_N (t)$ with the Hamiltonian defined in~\cite{Duminil-Copin.ansatz}.
\end{remark}

\subsection{\label{section.proof.commutation} Proof of the commutation}

In this section, we fix an integer $N \ge 2$ and a positive number $t$.

\begin{notation} 
In the following of this section, we 
denote $R : x \mapsto R_{\mu_t}^{x}$ and
$T : x \mapsto T_{\mu_t,N}^{x}$.
\end{notation}

\begin{proposition}
\label{proposition.commutation}
We have the equality
\[H_N(t) = t \sin(\mu_t/2) T'(0) T(0)^{-1}.\]
Moreover, $H_N(t)$ and $V_N (t)$ 
commute: 
\[H_N (t) \cdot V_N (t) = V_N (t) \cdot H_N (t).\]
\end{proposition}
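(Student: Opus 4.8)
The plan is to establish the logarithmic‑derivative identity $H_N(t)=t\sin(\mu_t/2)\,T'(0)T(0)^{-1}$ first, and then read off the commutation with $V_N(t)$ from the fact that $x\mapsto T^{x}_{\mu_t,N}$ is a path of pairwise commuting matrices. Throughout I write $T(x)=T^{x}_{\mu_t,N}$ and use $a_t(x)=\sin(\mu_t-x)/\sin(\mu_t/2)$, $b_t(x)=\sin(x)/\sin(\mu_t/2)$, while $c_t(x)=\sin(\mu_t)/\sin(\mu_t/2)$ does not depend on $x$; recall also $\cos\mu_t=-\Delta_t$ and $V_N(t)=T(\mu_t/2)$.

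\textbf{Step 1: computing $T(0)$.} At $x=0$ we have $b_t(0)=0$ and $a_t(0)=c_t(0)=\eta:=\sin(\mu_t)/\sin(\mu_t/2)\neq 0$, so the only non‑zero entries of $R^{0}_{\mu_t}(u,v)[w,w']$ equal $\eta$ and occur exactly when $w'=u$ and $w=v$; that is, $R^{0}_{\mu_t}=\eta\,P$, where $P$ transposes the two factors of $\mathbb C^2\otimes\mathbb C^2$. Substituting this into the definitions of the monodromy and transfer matrices, the constraints pin down every internal summation variable and one obtains $T(0)=\eta^{N}\,U$, where $U$ is the cyclic shift $\ket{\boldsymbol\epsilon_1\dots\boldsymbol\epsilon_N}\mapsto\ket{\boldsymbol\epsilon_2\dots\boldsymbol\epsilon_N\boldsymbol\epsilon_1}$. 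In particular $T(0)$ is invertible with $T(0)^{-1}=\eta^{-N}U^{-1}$.

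\textbf{Step 2: the log‑derivative.} Differentiating $T(x)[\vec u,\vec v]=\sum_{w_1,\dots,w_N}\prod_{k=1}^{N}R^{x}_{\mu_t}(\vec u_k,\vec v_k)[w_k,w_{k+1}]$ (cyclically, $w_{N+1}=w_1$) at $x=0$, the Leibniz rule produces a sum over the site $j$ carrying the derivative, and the remaining $N-1$ factors $R^{0}_{\mu_t}=\eta P$ fix every $w_k$. Because $c_t$ is constant the off‑diagonal blocks of $R^{x}_{\mu_t}$ are constant, so $(R^{0}_{\mu_t})'(u,v)=0$ for $u\neq v$, $(R^{0}_{\mu_t})'(0,0)=\operatorname{diag}\bigl(a_t'(0),b_t'(0)\bigr)$ and $(R^{0}_{\mu_t})'(1,1)=\operatorname{diag}\bigl(b_t'(0),a_t'(0)\bigr)$, with $a_t'(0)=\Delta_t/\sin(\mu_t/2)$ and $b_t'(0)=1/\sin(\mu_t/2)$. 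Multiplying by $T(0)^{-1}=\eta^{-N}U^{-1}$ cancels the shift, and a short bookkeeping shows that $T'(0)T(0)^{-1}$ is a sum over nearest‑neighbour pairs $(j,j+1)$ of a two‑site density whose matrix, in the ordered basis $00,01,10,11$, is $\tfrac1{\sin\mu_t}$ times $\operatorname{diag}(\Delta_t,0,0,\Delta_t)$ plus the hopping term exchanging $01$ and $10$ with weight $\tfrac1{\sin\mu_t}$. Summing over $j$, the hopping terms assemble into $\tfrac1{\sin\mu_t}\sum_{j}\bigl(a^{*}_j a_{j+1}+a_j a^{*}_{j+1}\bigr)$ and the diagonal terms into $\tfrac1{\sin\mu_t}D_t$, i.e. $T'(0)T(0)^{-1}=\tfrac1{\sin\mu_t}H_N(t)$. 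Finally, from $\Delta_t=(2-t^2)/2$ and $\cos\mu_t=-\Delta_t$ one deduces $\cos(\mu_t/2)=t/2$, hence $\sin\mu_t=2\sin(\mu_t/2)\cos(\mu_t/2)=t\sin(\mu_t/2)$, turning the last identity into $H_N(t)=t\sin(\mu_t/2)\,T'(0)T(0)^{-1}$.

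\textbf{Step 3: commutation.} Since $x\mapsto T^{x}_{\mu_t,N}$ is a Yang–Baxter path of commuting transfer matrices, $T(x)T(y)=T(y)T(x)$ for all $x,y$; differentiating in $x$ at $0$ gives $T'(0)T(y)=T(y)T'(0)$ for every $y$. As $T(0)$ also commutes with every $T(y)$, so does $T(0)^{-1}$, hence $T'(0)T(0)^{-1}$ commutes with every $T(y)$, in particular with $T(\mu_t/2)=V_N(t)$; multiplying by the scalar $t\sin(\mu_t/2)$ yields $H_N(t)V_N(t)=V_N(t)H_N(t)$. The only delicate part is Step 2: organising the Leibniz expansion of $T'(0)$ and matching the resulting two‑site density term by term with the three summands of $H_N(t)$; everything else is immediate.
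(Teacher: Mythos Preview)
Your proof is correct and follows essentially the same approach as the paper: compute $T(0)$ as a scalar multiple of the cyclic shift (using $R^0_{\mu_t}=\eta P$), compute $T'(0)$ via the Leibniz rule using that the off-diagonal blocks of $R^x_{\mu_t}$ are constant, identify $T'(0)T(0)^{-1}$ with $\tfrac{1}{\sin\mu_t}H_N(t)$, and deduce the commutation by differentiating the commuting-family relation. The only cosmetic differences are that the paper carries out Step~2 by an explicit case analysis on $(\boldsymbol\epsilon_{j-1},\boldsymbol\epsilon_j)$ rather than packaging it as a two-site density, and writes $T(0)^{-1}T'(0)$ instead of $T'(0)T(0)^{-1}$ (these agree since $T(0)$ commutes with $T'(0)$).
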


\noindent \textbf{Idea of the proof:} \textit{Another 
proof of this proposition can be found in~\cite{Duminil-Copin.ansatz} (Lemma 5.1). The proof that we propose here uses the commutation 
of the transfer matrices of the Yang-Baxter path $x \mapsto 
T^{x}_{\mu_t,N}$. From this commutation property, 
we obtain the commutation of the transfer matrix $V_N (t)$ 
with $T'(0)$ and $T(0)$. Since $H_N (t)$ is expressed with 
$T'(0)$ and $T(0)$, $H_N (t)$ commutes with $V_N (t)$. 
This expression of $H_N (t)$ derives from the analysis of the 
action of the matrices $T(0)$ 
and $T'(0)$.}

\begin{proof}
\begin{enumerate}
\item \textbf{The matrix $T(0)$:}

This transfer matrix 
is constructed from the local matrix function whose representation 
is: 
\[R^0_{\mu_{t}} = \frac{\sin(\mu_{t})}{\sin(\mu_{t}/2)}\left( \begin{array}{cccc}  1  & 0 & 0 & 0 \\
0 & 0 & 1 & 0 \\
0 & 1 & 0 & 0 \\
0 & 0 & 0 & 1
\end{array}\right) = t \cdot P,\]
where we denote 
\[P \equiv \left( \begin{array}{cccc}  1  & 0 & 0 & 0 \\
0 & 0 & 1 & 0 \\
0 & 1 & 0 & 0 \\
0 & 0 & 0 & 1
\end{array}\right).\]
Equivalently, this matrix has entry $T(0)[\boldsymbol{\epsilon},\boldsymbol{\eta}] = t^N$ when there is a $(N,1)$-cylindric pattern 
of $X^s$ that connects $\boldsymbol{\epsilon}$ to $\boldsymbol{\eta}$ which does 
not contain the symbols \begin{tikzpicture}[scale=0.4,baseline=1mm] 
\draw[line width =0.5mm,color=gray!80] (0,0.5) -- (1,0.5);
\draw (0,0) rectangle (1,1);
\end{tikzpicture} and \begin{tikzpicture}[scale=0.4,baseline=1mm] 
\draw[line width =0.5mm,color=gray!80] (0.5,0) -- (0.5,1);
\draw (0,0) rectangle (1,1);
\end{tikzpicture} ; in other words, $\boldsymbol{\eta}$ is 
obtained from $\boldsymbol{\epsilon}$ by shifting all its symbols $1$  
by one position to the left. Otherwise, $T(0)[\boldsymbol{\epsilon},\boldsymbol{\eta}]=0$.

\item \textbf{The matrix $T'(0)$:}

The derivative of 
$R: x \mapsto R^x_{\mu_{t}}$ is the function that to $x$ associates:

\[R'(x) = \frac{1}{\sin(\mu_t/2)} \left( \begin{array}{cccc}  -\cos\left( \mu_t -x\right)  & 0 & 0 & 0 \\
0 & \cos(x) & 0 & 0 \\
0 & 0 & \cos(x) & 0 \\
0 & 0 & 0 & -\cos\left( \mu_t -x\right)
\end{array}\right).\] 

In particular it has the following value in $0$: 

\[R'(0) = \frac{1}{\sin(\mu_t/2)}\left( \begin{array}{cccc}  \Delta_t  & 0 & 0 & 0 \\
0 & 1 & 0 & 0 \\
0 & 0 & 1 & 0 \\
0 & 0 & 0 & \Delta_t
\end{array}\right) \equiv \left( \begin{array}{cc} R'(0)(0,0) & R'(0) (0,1)\\
R'(0) (1,0) & R'(0)(1,1)
\end{array}\right).\]

Let us recall that for all $\boldsymbol{\epsilon},\boldsymbol{\eta}$,
by definition of the transfer matrix from the monodromy matrix:
\[T (x) [\boldsymbol{\epsilon},\boldsymbol{\eta}] = \prod_{k=0}^{N-1} R^x_{\mu_{t}}(\boldsymbol{\epsilon}_k,\boldsymbol{\eta}_k) [0,0]+\prod_{k=0}^{N-1} R^x_{\mu_{t}}(\boldsymbol{\epsilon}_k,\boldsymbol{\eta}_k) [1,1].\]

As a consequence, 
\begin{align*}
T'(0)[\boldsymbol{\epsilon},\boldsymbol{\eta}] &  = \sum_{j=0}^{N-1}  \left(\prod_{k=0}^{j-1} R^0_{\mu_{t}}(\boldsymbol{\epsilon}_k,\boldsymbol{\eta}_k) \right) R'(0) (\boldsymbol{\epsilon}_j,\boldsymbol{\eta}_j) \left(\prod_{k=j+1}^{N-1} R^0_{\mu_{t}}(\boldsymbol{\epsilon}_k,\boldsymbol{\eta}_k)\right) [0,0]\\
& \quad + \left(\prod_{k=0}^{j-1} R^0_{\mu_{t}}(\boldsymbol{\epsilon}_k,\boldsymbol{\eta}_k) \right) R'(0) (\boldsymbol{\epsilon}_j,\boldsymbol{\eta}_j) \left(\prod_{k=j+1}^{N-1} R^0_{\mu_{t}}(\boldsymbol{\epsilon}_k,\boldsymbol{\eta}_k)\right) [1,1].
\end{align*}

With a similar interpretation as for $T(0)$, 
the $j$th term of this sum can be different from $0$ only if 
$\boldsymbol{\eta}$ is obtained from $\boldsymbol{\epsilon}$ by shifting 
any of its $1$ symbols by one position to the left, except 
potentially for the $j-1$th and $j$th positions. Regarding 
these positions, the following cases are possible:  

\begin{enumerate}
\item if 
$\boldsymbol{\epsilon}_{j-1} = 1$ and $\boldsymbol{\epsilon}_{j} = 0$:  
$\boldsymbol{\eta}$ has to be obtained from $\boldsymbol{\epsilon}$ 
by shifting all the curves by one position to the left, except 
the one on position $j-1$, which is shifted to position $j+1$.
In this case, the $j$ term in the sum is $t^{N-1} \cdot \frac{1}{\sin(\mu_t/2)}$.
\item if 
$\boldsymbol{\epsilon}_{j-1} = 0$ and $\boldsymbol{\epsilon}_{j} = 0$: 
$\boldsymbol{\eta}$ is obtained from $\boldsymbol{\epsilon}$ by shifting 
any of its $1$ symbols by one position to the left, and 
the $j$th term is equal to $t^{N-1} \frac{\Delta_t}{\sin(\mu_t/2)}$.
\item if 
$\boldsymbol{\epsilon}_{j-1} = 0$ and $\boldsymbol{\epsilon}_{j} = 1$: 
$\boldsymbol{\eta}$ is obtained from $\boldsymbol{\epsilon}$ by shifting 
any of its $1$ symbols by one position to the left, except the one 
on position $j$, which is fixed, and 
the $j$th term of the sum 
is equal to $t^{N-1} \frac{1}{\sin(\mu_t/2)}$.
\item if 
$\boldsymbol{\epsilon}_{j-1} = 1$ and $\boldsymbol{\epsilon}_{j} = 1$: 
$\boldsymbol{\eta}$ is obtained from $\boldsymbol{\epsilon}$ by shifting 
any of its $1$ symbols by one position to the left, and 
the $j$th term is equal to $t^{N-1} \frac{\Delta_t}{\sin(\mu_t/2)}$.
\end{enumerate} 

\item \textbf{Expression of $H_N (t)$ with ${T(0)}^{-1}$ and $T'(0)$:}

From the definition of $T(0)$, this matrix is invertible. 
Moreover, as a consequence of the last point, we have: 
\[H_N (t) = t\sin(\mu_t/2) \cdot T(0)^{-1} \cdot T'(0).\] 

\item \textbf{Commutation of $H_N (t)$ with $V_N (t)$:}

Since for all $x$, $T(x)$ commutes with $V_N (t)$ (by construction of the Yang-Baxter path):
\[V_N (t) . \frac{T(x)-T(0)}{x} 
=  \frac{T(x)-T(0)}{x} . V_N (t).\]
As a consequence, taking the limit $x \rightarrow 0$, $V_N (t)$ commutes with $T'(0)$, thus with $T(0)^{-1} \cdot T'(0)$, and 
thus with $H_N (t)$.
\end{enumerate}
\end{proof}

\begin{proposition}
Let $n\le N$ and $(p_1,...p_n) \in I_t^n$ such that 
$p_1 < ... < p_n$ and for all $j$, the 
equation $(E_j) [t,n,N]$ is verified. Then:

\[H_N (t). \psi_{t,n,N} (p_1,...,p_n) = t \sin(\mu_t/2) \cdot \left( N \frac{\cos(\mu_t)}{\sin(\mu_t)} + 2 \frac{1}{\sin(\mu_t)^{N+1}} \sum_{j=1}^n (\cos(\mu_t)+\cos(p_j))\right). \psi_{t,n,N} (p_1,...,p_n) \]
\end{proposition}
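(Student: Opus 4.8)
The plan is to combine the expression $H_N(t) = t\sin(\mu_t/2)\, T(0)^{-1} T'(0)$ from Proposition~\ref{proposition.commutation} with the eigenvalue computation already available for the Yang-Baxter path. Since $\psi_{t,n,N}(p_1,\dots,p_n)$ agrees, up to the nonzero scalar $\rho$ of Theorem~\ref{theorem.identification.vectors}, with $\psi_{\mu_t,n,N}(x_1,\dots,x_n)$, it suffices to compute the action of $T(0)^{-1} T'(0)$ on the latter. By Theorem~\ref{theorem.algebraic.bethe.ansatz} applied along the path $x\mapsto T^x_{\mu_t,N}$, for every $x$ (in the allowed range) $T(x)\psi_{\mu_t,n,N}(\mathbf{x}) = \Lambda_{\mu_t,n,N}(x)(x_1,\dots,x_n)\,\psi_{\mu_t,n,N}(\mathbf{x})$, where I write $\Lambda_{\mu_t,n,N}(x)$ for the candidate eigenvalue of $T(x)$ built with base point $x$. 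Hence $\psi_{\mu_t,n,N}(\mathbf{x})$ is a simultaneous eigenvector of $T(0)$ and $T'(0)$, with eigenvalues $\Lambda_{\mu_t,n,N}(0)$ and $\frac{d}{dx}\Lambda_{\mu_t,n,N}(x)\big|_{x=0}$ respectively, so that
\[
H_N(t)\,\psi_{t,n,N}(p_1,\dots,p_n) = t\sin(\mu_t/2)\,\frac{\Lambda'_{\mu_t,n,N}(0)}{\Lambda_{\mu_t,n,N}(0)}\,\psi_{t,n,N}(p_1,\dots,p_n).
\]

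The next step is to evaluate $\Lambda_{\mu_t,n,N}(0)$ and $\Lambda'_{\mu_t,n,N}(0)$. From the definition in Theorem~\ref{theorem.algebraic.bethe.ansatz}, $\Lambda_{\mu_t,n,N}(x) = a_t(x)^N \prod_k \frac{a_t(\delta_0(x,x_k))}{b_t(\delta_0(x,x_k))} + b_t(x)^N \prod_k \frac{a_t(\delta_0(x_k,x))}{b_t(\delta_0(x_k,x))}$. At $x=0$ one has $R^0_{\mu_t} = t\cdot P$ (from the proof of Proposition~\ref{proposition.commutation}), so $a_t(0)=b_t(0)=t\sin(\mu_t)/\sin(\mu_t/2)$ and $c_t(0)=t\sin(\mu_t)/\sin(\mu_t/2)$ as well; but note $x=0$ here is not the base point $x_0=\mu_t/2$ used earlier, so I should be careful to recompute, using $a_t(x)=\sin(\mu_t-x)/\sin(\mu_t/2)$, $b_t(x)=\sin(x)/\sin(\mu_t/2)$, that $a_t(\delta_0(0,x_k))/b_t(\delta_0(0,x_k)) = a_t(x_k)/b_t(x_k) = z_k$ and similarly $a_t(\delta_0(x_k,0))/b_t(\delta_0(x_k,0)) = a_t(-x_k)/b_t(-x_k) = 1/z_k$ — more carefully, since $\delta_0(u,v)=v-u$, the arguments are $x_k$ and $-x_k$, and by Fact~\ref{fact.quotient} these quotients are $z_k=e^{-ip_k}$ and $\overline{z_k}$-type expressions; I would use Lemma~\ref{lemma.expression.a} if needed to simplify. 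The key point is that $b_t(0)=0$, so the second term of $\Lambda_{\mu_t,n,N}(0)$ might vanish or might need a careful limit, exactly as in the degenerate case of Theorem~\ref{theorem.identification.values}; I expect $\Lambda_{\mu_t,n,N}(0) = a_t(0)^N\prod_k z_k = (t\sin(\mu_t)/\sin(\mu_t/2))^N \prod_k e^{-ip_k}$ (or its counterpart), the point being it is a nonzero monomial.

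For the derivative, I would differentiate $\log\Lambda_{\mu_t,n,N}(x)$ at $x=0$. Because the second term is (generically) subleading at $x=0$ — $b_t(x)^N$ vanishing to order $N\ge 2$ — the logarithmic derivative of $\Lambda$ equals that of its first term, $N\frac{a_t'(0)}{a_t(0)} + \sum_k \frac{d}{dx}\log\frac{a_t(x-x_k)}{b_t(x-x_k)}\big|_{x=0}$, to be handled with care about whether the $b_t^N$-term truly contributes nothing (when $N=2$ it is second order, so the first-order log-derivative is unaffected; I will check this). Here $a_t'(0)/a_t(0) = -\cos(\mu_t)/\sin(\mu_t) = \cot(\mu_t)$ up to sign conventions, giving the $N\cos(\mu_t)/\sin(\mu_t)$ term after multiplying by $t\sin(\mu_t/2)$. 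For the sum, I would relate $\frac{d}{dx}\log\big(a_t(x-x_k)/b_t(x-x_k)\big)\big|_{x=0}$ to $\kappa_t$ and $\Theta_t$: by Fact~\ref{fact.quotient} and the chain of identifications in Section~\ref{section.trigonometric.path}, $a_t(x-x_k)/b_t(x-x_k)$ as a function near $x=0$ (with $x_k = \mu_t/2 + i\alpha_k/2$) should be expressible via $e^{i\kappa_t(\cdot)}$, and Computation~\ref{computation.kappa.derivative} gives $\kappa_t'$; the emergence of $\cos(\mu_t)+\cos(p_j)$ strongly suggests using the second Computation, $\kappa_t'(\kappa_t^{-1}(p_j)) = (\cos(p_j)+\cos(\mu_t))/\sin(\mu_t)$. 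The main obstacle I anticipate is precisely this bookkeeping: tracking the $1/\sin(\mu_t)^{N+1}$ prefactor correctly through the normalizations $1/\sin(\mu_t/2)$ in each $R$-matrix entry, reconciling base point $\mu_t/2$ (used for $\psi$) with the evaluation point $0$ (used for $H_N$), and confirming that the $b_t(0)^N$-term genuinely does not affect the first derivative — essentially a careful but routine Taylor expansion anchored on Lemma~\ref{lemma.expression.a}, Fact~\ref{fact.quotient}, and the two Computations about $\kappa_t'$.
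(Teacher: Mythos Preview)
Your plan is correct and follows essentially the same route as the paper: use $H_N(t) = t\sin(\mu_t/2)\,T(0)^{-1}T'(0)$, apply the algebraic Bethe ansatz to see that $\psi$ is an eigenvector of every $T(x)$, evaluate the eigenvalue and its $x$-derivative at $0$ (where $b_t(0)=0$ kills the second term since $N\ge 2$), and identify the sum via $\kappa_t'(\kappa_t^{-1}(p_j)) = (\cos p_j + \cos\mu_t)/\sin\mu_t$. The only wrinkle is your momentary confusion ``$a_t(0)=b_t(0)$'': in fact $a_t(0)=\sin(\mu_t)/\sin(\mu_t/2)=t$ while $b_t(0)=0$, which you correctly use later; once that is straightened out, the bookkeeping you anticipate is exactly what the paper carries out.
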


\begin{remark}
In particular, when $t= \sqrt{2}$, since $\mu_t=\pi/2$ we have: 
\[H_N (\sqrt{2}). \psi_{\sqrt{2},n,N} (p_1,...,p_n) =  \left(2 \sum_{k=1}^n \cos(p_k)\right). \psi_{\sqrt{2},n,N} (p_1,...,p_n),\]
used in the proof of the value of square ice entropy~\cite{Gangloff2019}.
\end{remark}

\begin{proof}
In this proof, we denote $\psi \equiv \psi_{t,n,N} (p_1,...,p_n)$
\begin{enumerate}
\item \textbf{Eigenvalue of $\psi$ for $T(0)$:}

By applying Theorem~\ref{theorem.algebraic.bethe.ansatz} to 
the trigonometric local matrix function $R_{\mu_t}^0$ and $\textbf{x} = (x_1,...,x_n)$ 
such that for all $j$, 
$x_j = \frac{\mu_t}{2} + i \frac{\alpha_j}{2}$, 
with $\alpha_j=\kappa_t ^{-1} (p_j)$, 
for all $x$, 

\[T(ix/2) \cdot \psi = \left( a_t (ix/2)^N \prod_{k=1}^n 
\frac{a_t (x_k-ix/2)}{b_t(x_k-ix/2)}
+ b_t (ix/2)^N \prod_{k=1}^n 
\frac{a_t (ix/2-x_k)}{b_t(ix/2-x_k)}\right) \cdot \psi.\]

As a consequence, we have: 
\begin{align*}T (0). \psi_{t,n,N} (p_1,...,p_n) & = \sin(\mu_t)^N \cdot \left(\prod_{k=1}^{n} \frac{a_t (x_k)}{b_t (x_k)}\right) \cdot \psi_{t,n,N} (p_1,...,p_n)\\
& = \sin(\mu_t)^N \cdot \left(\prod_{k=1}^{n} e^{-ip_k}\right) \cdot \psi_{t,n,N} (p_1,...,p_n)
\end{align*}

\item \textbf{Eigenvalue of $\psi$ for $T'(0)$:}

For all $k$ and $x \neq 0$ sufficiently close to $0$, 
\[\frac{a_t (x_k-ix/2)}{b_t (x_k-ix/2)} = \frac{e^{i\frac{\mu_t}{2} +\frac{\alpha_j}{2} + \frac{x}{2}} - e^{-i\frac{\mu_t}{2} -\frac{\alpha_j}{2} - \frac{x}{2}}}{e^{i\frac{\mu_t}{2} -\frac{\alpha_j}{2} + \frac{x}{2}} - e^{-i\frac{\mu_t}{2} +\frac{\alpha_j}{2} - \frac{x}{2}}} = e^{i\kappa_t (x-\alpha_j)}\] 
Deriving the equality of the first equality in the 
first point relatively to $x$ and evaluating in $0$ (let 
us notice that since $N \ge 2$, the derivative 
of the second sum in $0$ is equal to zero), we obtain, using 
the antisymmetry of $\kappa_t$ and thus the symmetry of $\kappa'_t$:
\[\frac{i}{2} T'(0) \cdot \psi =  \left(\frac{i}{2} N \sin(\mu_t)^{N-1} 
\cos(\mu_t) \prod_{k=1}^n e^{-ip_k} + i\sum_{j=1}^n \kappa'_t (\alpha_j) \prod_{k=1}^n e^{-ip_k} \right) \cdot \psi.\]

\[ T'(0) \cdot \psi =  \left(N \sin(\mu_t)^{N-1} 
\cos(\mu_t) + 2 \sum_{j=1}^n \kappa'_t (\alpha_j) \right) \prod_{k=1}^n e^{-ip_k} \cdot \psi.\]

\textbf{Eigenvalue of $\psi$ for $H_N (t)$:}

Since for all $j$, $\kappa'_t (\kappa^{-1}_t (p_j)) = (\cos(p_j)+\cos(\mu_t))/\sin(\mu_t)$, 
and by an application of Proposition~\ref{proposition.commutation}: 

\[H_N (t).\psi = t \sin(\mu_t/2) \cdot \left( N \frac{\cos(\mu_t)}{\sin(\mu_t)} + 2 \frac{1}{\sin(\mu_t)^{N+1}} \sum_{j=1}^n (\cos(\mu_t)+\cos(p_j))\right) \cdot \psi.\]
\end{enumerate}
\end{proof}

\end{document}